\newtheorem{definition}{Definition}
\newtheorem{theorem}{Theorem}
\newtheorem{lemma}{Lemma}
\begin{document}

\title{Quantum Speed Limits Based on Schatten Norms: Universality and Tightness}

\author{Alberto J. B. Rosal}
\affiliation{Instituto de F\'{i}sica de S\~{a}o Carlos, Universidade de S\~{a}o Paulo, CP 369, 13560-970 S\~{a}o Carlos, Brazil}
\author{Diogo O. Soares-Pinto}
\affiliation{Instituto de F\'{i}sica de S\~{a}o Carlos, Universidade de S\~{a}o Paulo, CP 369, 13560-970 S\~{a}o Carlos, Brazil}
\author{Diego Paiva Pires}
\affiliation{Departamento de F\'{i}sica, Universidade Federal do Maranh\~{a}o, Campus Universit\'{a}rio do Bacanga, 65080-805, S\~{a}o Lu\'{i}s, Maranh\~{a}o, Brazil}

\begin{abstract}
We discuss quantum speed limits (QSLs) for finite-dimensional quantum systems undergoing general physical processes. These QSLs were obtained using Schatten $\alpha$-norms, firstly exploiting geometric features of the quantum state space, and secondly by applying the Holder's inequality for matrix norms. For single-qubit states, we find that the geometric QSL is independent of the chosen Schatten norm, thus revealing universality behavior. We compare these QSLs with existing speed limits in literature, showing that the latter results represent particular cases of a general class of QSLs related to Schatten $\alpha$-norms. We address necessary and sufficient conditions for the tightness of the QSLs that depends on populations and coherences of the qubits, also addressing their geometric meaning. We compare the QSLs obtained for qubit dynamics, also exploring their geometrical meaning. Finally, we show that the geometric QSL is tighter for general qubit dynamics with initial pure states, which indicates a universal QSL.
\end{abstract}

\maketitle

\section{Introduction}
\label{sec:000000001}

Quantum speed limits (QSLs) are of fundamental interest in several branches of physics, ranging from non-equilibrium thermodynamics~\cite{DL2010,arXiv:2204.10368,arXiv:2006.14523,arXiv:2203.12421}, to quantum many-body systems~\cite{PhysRevLett.124.110601,PhysRevLett.126.180603,PhysRevResearch.2.032020,PhysRevA.104.052223,PhysRevX.12.011038}, also including quantum metrology~\cite{Giovannetti}, communication and quantum computing~\cite{JBekenstein,Loyd}, quantum entanglement~\cite{EPL_62_615_2003,PhysRevA.72.032337,PhysRevA.73.049904,arXiv:2401.04599}, quantum-to-classical transition~\cite{PhysRevLett.120.070401,PhysRevLett.120.070402,PRXQuantum.3.020319}, and optimal control theory~\cite{Caneva,PhysRevA.102.042606}. Furthermore, QSLs also play a role in the study of quantum batteries~\cite{PhysRevA.106.042436,PhysRevResearch.2.023113,arXiv:2308.16086}, and also non-Hermitian systems~\cite{PhysRevA.104.052620,ChinPhysB_29_030304,PhysRevA.106.012403,JPhysA_57_025302_2023,arXiv:2404.16392}. Interestingly, recent works addressed exact speed limits for the unitary dynamics of quantum systems that outperform previous QSLs~\cite{arXiv:2305.03839}, also discussing exact and inexact speed limits for completely positive and trace preserving dynamics for finite-dimensional quantum systems in Liouville space~\cite{arXiv:2406.08584}. Recent studies include experimental investigations of QSLs in both platforms of trapped atoms~\cite{sciadvabj91192021}, and Nuclear Magnetic Resonance systems~\cite{PhysRevA.97.052125,arXiv:2307.06558}. In short, the QSL sets the shortest time in which a quantum system can evolve from some initial to final state under a given dynamics. We refer to Refs.~\cite{Deffner_2017,IntJPhysB362230007} for recent reviews covering the subject.

For closed quantum systems, Mandelstam and Tamm (MT)~\cite{Mandelstam1991} proved that $\tau \geq {\tau_{MT}}$, with ${\tau_{MT}} := ({\hbar}/{\Delta E})\arccos{(|\langle{\psi_0}|{\psi_{\tau}}\rangle|)}$, while $\Delta E = \sqrt{\langle{H^2}\rangle - {\langle{H}\rangle^2}}$ is the variance of the time-independent Hamiltonian $H$ that generates the dynamics, whereas $|{\psi_{0}}\rangle$ and $|{\psi_{\tau}}\rangle$ set the initial and final pure states of the system, respectively. Particularly, for orthogonal states, one obtains ${\tau_{MT}} = \pi{\hbar}/(2{\Delta E})$. In turn, Margolus and Le\-vi\-tin~\cite{1992_PhysicaD_120_188} realized the bound $\tau \geq {\tau_{ML}}$ for unitary dynamics and maximally distinguishable states, with ${\tau_{ML}} := \hbar\pi/[2(\langle{\psi_0}|{H}|{\psi_0}\rangle - {E_0})]$, where $E_0$ is the ground state of $H$. We shall mention another QSL bound for closed systems with bounded energy spectrum that is dual to the ML bound~\cite{PhysRevLett.129.140403}. Noteworthy, Levitin and Toffolli (LT) showed that the tighter QSL for orthogonal states is achieved by combining MT and ML bounds as $\tau_{QSL} = \text{max}\{ {\tau_{MT}}, {\tau_{ML}}\}$~\cite{PhysRevLett.103.160502}. In addition, QSLs have also been addressed for driven closed quantum systems and mixed states~\cite{PhysRevA.67.052109,PhysRevA.82.022107,PhysRevA.86.016101,JPhysAMathTheor_46_335302,PhysLettA380_689,PhysRevE.103.032105,Kavan}. Recent investigations include the derivation of extended MT and ML bounds for states with arbitrary fidelity~\cite{NJPhys240550042022,arXiv:2301.10063}. We also mention speed limits for non-Hermitian systems that are related to geometric phases of the dynamics, which in turn can be tighter than Mandelstam-Tamm and Margolus-Levitin bounds in some cases~\cite{PhysRevLett.123.180403}.

QSLs have been also investigated for open quantum systems. Taddei {\it et al}.~\cite{Taddei2013} and del Campo {\it et al}.~\cite{delCampoQSL} originally addressed such issue for nonunitary evolutions, thus deriving MT-like bounds for general physical processes. The former bound is based on the quantum Fisher information, while the latter is obtained from the relative purity between initial and final states. In turn, Deffner and Lutz~\cite{DLQSL} proposed a QSL time that is based on operator norms and the Bures angle, also indicating that non-Markovianity can speed up the evolution. However, it has been shown that the QSL time is sensitive to the initial state and details of the dynamical evolution~\cite{Sun2015,Meng2015,Nicolas2016,Cianciaruso2017,Teittinen2019,Entropy_23_331_2021}. QSLs have been addressed within the perspective of information geometry~\cite{PhysRevLett.65.1697,Ingemar_Bengtsson_Zyczkowski}. This approach provides a family of QSLs based on contractive Riemannian metrics equipping the space of quantum states, valid for unitary and nonunitary dynamics, mixed and pure states~\cite{DDQSL,PhysRevA.103.022210,NJPhys_24_055003_2022}. Recently, the geome\-tric approach has been used to obtain tight and attainable QSLS for nonunitary dynamics~\cite{PhysRevA.108.052207}. We also mention the interplay between QSLs and Finsler metrics, which finds applications in quantum information processing~\cite{10.1142_S0129054114400073}.

QSLs have also been addressed by using families of matrix norms. For example, Ref.~\cite{NewJPhys_19_103018_2017} discusses a QSL based on Schatten $\alpha$-norms obtained by using an approach that is restricted to the Wigner phase space. In turn, Ref.~\cite{PhysRevA.97.022109} comment on the interplay between QSLs and statistical speeds for unitary dynamics with local Hamiltonians. In addition, Refs.~\cite{NewJPhys_21_013006_2019,PhysRevLett.126.010601,NewJPhys_24_095004_2022} study speed limits for open quantum systems based on trace distance. Recently, Ref.~\cite{arXiv:2401.01746} introduced a family of coherent QSL bounds based on Schatten $\alpha$-norms that only apply to closed quantum systems. Based on the different QSLs present in the literature, each related to some measure of distinguishability of quantum states, would it be possible to generalize them to obtain an ultimate QSL that would be valid for any Schatten norm and general physical processes? If so, how could this be? How does a given QSL change when choosing another information metric to equip the space of quantum states? For example, to the best of our knowledge, there is no clear relationship in literature between the QSL obtained using the Bures angle~\cite{DLQSL} with those QSLs obtained using the Hilbert-Schmidt norm~\cite{Kavan}. 

We will seek to answer these questions in this work, bringing an idea of universality between different types of QSL. In this work, we aim to develop families of QSLs based on general Schatten $\alpha$-norms defined in the space of quantum states that holds for arbitrary physical processes, thereby generalizing other types of QSLs that already exist in the literature. We also compare the QSL based on Schatten norms and the QSL developed from the geometrical approach with those existing in the literature~\cite{DLQSL,delCampoQSL,Kavan}. Furthermore, we analyze the necessary and sufficient conditions under the dynamics for the QSL induced by Schatten norms to be tight. To clarify, given an evolution from an initial state ${\rho_0}$ to a final one ${\rho_{\tau}}$, described by a dynamical map $\Lambda_t$ and a QSL quantifier $\tau_{QSL}$, it can be said tight if we have that $\tau = \tau_{QSL}$. On the contrary, if we have two QSL quantifiers $\tau_{QSL}^{(1)}$ and $\tau_{QSL}^{(2)}$, we set that $\tau_{QSL}^{(1)}$ is \textit{tigher} than $\tau_{QSL}^{(2)}$ if we have that $\tau_{QSL}^{(1)} \geq \tau_{QSL}^{(2)}$.

This paper is organized as follows. In Sec.~\ref{sec:000000002}, we develop a general QSL induced from a Schatten norm through the geometrical approach. In Sec.~\ref{sec:000000003}, we address another family of QSLs by exploiting the Holder's inequality. We discuss the relation between these QSLs with remarkable results previously derived in literature, also presenting the necessary and sufficient condition for the tightness of the QSLs. In Sec.~\ref{sec:000000004}, we compare these two families of QSLs and derive an upper bound between them for the case of single-qubit states. In Sec.~\ref{sec:000000005}, we present our conclusions. Finally, technical details are organized in the Appendices.


\section{Geometric QSL from normed space of quantum states}
\label{sec:000000002}


\subsection{$\alpha$-QSL}
\label{sec:000000002A}

We consider a quantum system related to a finite-dimensional Hilbert space $\mathcal{H}$, with $d = \text{dim}(\mathcal{H})$, while $\mathcal{D}(\mathcal{H}) = \{ \rho \in \mathcal{H} \, | \, {\rho^{\dagger}} = \rho, \rho \geq 0, \text{Tr}({\rho}) = 1\}$ defines the convex set of quantum states. In turn, the space of quantum states is equipped with the family of matrix norms $\|\bullet\|$, which is unitarily invariant, i.e., $\|V\bullet{V^{\dagger}}\| = \|\bullet\|$ for any unitary operator ${V^{\dagger}} = {V^{-1}}$ such that $V{V^{\dagger}} = {V^{\dagger}}V = \mathbb{I}$. In this setting, the function $D(\rho,\varrho) = \|\rho - \varrho\|$ defines a \textit{bona fide} distance over the space of quantum states, for all $\rho,\varrho \in \mathcal{D}(\mathcal{H})$. It can be proved that the pair $(\mathcal{D}(\mathcal{H}),D(\rho,\varrho))$ is a metric space [see Appendix \ref{app:00000000A}]. In this case, the \textit{geodesic path} between the two states $\rho,\varrho \in D(\rho,\varrho)$ is the path that minimize the distance between them.

Let $\gamma \equiv \gamma(\rho,\varrho) \subset \mathcal{D}(\mathcal{H})$ be a generic curve that connects the two quantum states $\rho,\varrho \in \mathcal{D}(\mathcal{H})$ in the space of quantum states. Then, the length of this curve is ${L_{\gamma}} := {\int_{\gamma}} \sqrt{ds^2}$, where $ds$ is the infinitesimal distance between neighboring states states $\rho$ and $\rho + d\rho$ as follows
\begin{equation}
\label{eq:0000000000001}
ds := D(\rho + d\rho,\rho) = \|d\rho\| ~.
\end{equation}
Next, by using the triangular inequality satisfied by the norms, one gets
\begin{equation}
\label{eq:0000000000002}
{L_{\gamma}}  = {\int_{\gamma}} \|d\rho\| \geq \left\| {\int_{\gamma}} d\rho \right\| = \|\rho - \varrho\| = D(\rho,\varrho) ~.
\end{equation}
Therefore, for every curve $\gamma \equiv \gamma(\rho,\varrho)$ connecting two points $\rho$ and $\varrho$ in the space of quantum states, Eq.~\eqref{eq:0000000000002} shows that the distance $D(\rho,\varrho) = \|\rho - \varrho \|$ sets a lower bound to the length ${L_{\gamma}}$ of such curve. Symbolically, we can write
\begin{equation}
\label{eq:0000000000003}
\forall \gamma \equiv \gamma(\rho,\varrho): ~{L_{\gamma}} \geq D(\rho,\varrho) ~.
\end{equation}
The path that saturate the inequality in Eqs.~\eqref{eq:0000000000002} and~\eqref{eq:0000000000003} is given by the straight line connecting the fixed states $\rho,\varrho \in \mathcal{D}(\mathcal{H})$, given by the convex parameterization ${\rho_{\lambda}} := (1 - \lambda)\rho + \lambda\varrho$, with $\lambda \in [0,1]$ being a continuous, real-valued, and positive parameter. We note that ${\rho_{\lambda}^{\dagger}} = {\rho_{\lambda}}$, ${\rho_{\lambda}} \geq 0$, $\text{Tr}({\rho_{\lambda}}) = 1$, with ${\rho_0} = \rho$ and ${\rho_1} = \varrho$. In this case, one finds the infinitesimal distance $ds = D({\rho_{\lambda + d\lambda}},{\rho_{\lambda}}) =  \|\rho - \varrho\| d\lambda$, which implies that 
\begin{equation}
\label{eq:0000000000004}
{L_{\gamma_{\text{line}}}} = {\int_0^1} \,d\lambda\, \|\rho - \varrho\| = \|\rho - \varrho\| = D(\rho,\varrho) ~. 
\end{equation}
Equation~\eqref{eq:0000000000004} shows that the minimal length is obtained when we have a linear path connecting two points $\rho,\varrho\in\mathcal{D}(\mathcal{H})$ in the space of quantum states. Hence, one finds that the linear path is the geodesic path.

Let $\gamma_{\tau} = \{ {\rho_t} = {\Lambda_t}({\rho_0}): t \in [0,\tau] \}$ be the path related to the dynamical map ${\Lambda_t}(\bullet)$, where the system evolves from the initial state ${\rho_0}$ to the final state ${\rho_{\tau}} = {\Lambda_{\tau}}({\rho_0})$. In this case, we have a point $\rho_t$ in $\mathcal{D}(\mathcal{H})$ for all $0 \leq t \leq \tau$, and one obtains a continuous path given by the curve $\gamma_{\tau}$. In this setting, the length $L_{\gamma_{\tau}}$ of the path $\gamma_{\tau}$ that connects the quantum states ${\rho_0}, {\rho_{\tau}} \in \mathcal{D}(\mathcal{H})$ reads
\begin{equation}
\label{eq:0000000000005}
{L_{\gamma_{\tau}}} = {\int_{\gamma_{\tau}}} \|d{\rho_t}\| = {\int_0^{\tau}} dt \left\| \frac{d{\rho_t}}{dt} \right\| ~,
\end{equation}
In Eq.~\eqref{eq:0000000000005}, we note that $\left\| {d{\rho_t}}/{dt} \right\|$ plays the role of a genera\-li\-zed statistical speed in the evolution of the quantum system, where ${d{\rho_t}}/{dt}$ describes the dynamical evolution of the system~\cite{PhysRevA.97.022109}. Considering that we have a dynamical evolution described by a smooth path in the state space, Eq.~\eqref{eq:0000000000005} implies that [see also Eq.~\eqref{eq:0000000000002}]
\begin{equation}
\label{eq:0000000000006}
{\int_{0}^{\tau}} dt \left\| \frac{d{\rho_t}}{dt} \right\| \geq \| {\rho_{\tau}} - {\rho_0} \| ~.
\end{equation}
From Eq.~\eqref{eq:0000000000006}, we obtain a lower bound on the evolution time given by
\begin{equation}
\label{eq:0000000000007}
\tau \geq {\tau_{QSL}^{\|\bullet\|}} ~,
\end{equation}
where ${\tau_{QSL}^{\|\bullet\|}} \equiv \tau_{QSL}^{\|\bullet\|}(\tau)$ sets the geometric QSL time induced from the norm $\|\bullet\|$ that is defined as follows
\begin{equation}
\label{eq:0000000000008}
\tau_{QSL}^{\|\bullet\|} := \frac{ \| {\rho_{\tau}} - {\rho_0} \|}{ {\int_0^{\tau}} dt \left\| \frac{d{\rho_t}}{dt} \right\| }\tau ~.
\end{equation}
We note that Eqs.~\eqref{eq:0000000000006},~\eqref{eq:0000000000007} and~\eqref{eq:0000000000008} hold for general physical processes and applies for any family of matrix norms, thus generalizing the results discussed in Refs.~\cite{NewJPhys_19_103018_2017,PhysRevA.97.022109,NewJPhys_21_013006_2019,PhysRevLett.126.010601,NewJPhys_24_095004_2022,arXiv:2401.01746}. We also mention the geometric bounds in Refs.~\cite{PhysRevA.82.022107,DDQSL,PhysRevLett.123.180403}. In detail, Ref.~\cite{PhysRevA.82.022107} addresses a Mandelstam-Tamm inequality for unitary dynamics by means of the Fisher information, also deriving a Margolous-Levitin inequality that is found to be incorrect~\cite{PhysRevA.86.016101,JPhysAMathTheor_46_335302}. Importantly, Ref.~\cite{JPhysA_51_318001_2023} showed that the derivation of the ML bound for unitary dynamics in Ref.~\cite{JPhysAMathTheor_46_335302} does not apply to time-dependent general Hamiltonians. Recently, Ref.~\cite{PhysRevA.108.052421} pointed out the non-existence of ML bounds for systems with time-dependent Hamiltonians, except for adiabatic evolutions~\cite{JPhysA_37_L157_2004}. Next, Ref.~\cite{PhysRevLett.123.180403} derived a QSL bound for non-Hermitian systems by exploiting geometric phases via the gauge-invariant Fubini-Study metric. Furthermore, the family of speed limits in Eq.~\eqref{eq:0000000000008} is substantially distinct from the QSLs addressed in Ref.~\cite{DDQSL}. Indeed, the latter present a geometric framework to address QSLs for general physical processes based on a family of contractive Riemannian metrics that equip the space of quantum states. This family of information metrics is predicted by the so-called Morozova-Cencov-Petz (MCP) theorem, which in turn encompasses Riemannian metrics related to quantum Fisher information, and also Wigner-Yanase skew information, to name a few. To the best of our knowledge, the inner product encoded by the MCP theorem does not cover metric-based matrix norms such as $\alpha$-Schatten norms.

Hereafter, we set the matrix norm $\|\bullet\|$ as the Schatten $\alpha$-norm ${\|\bullet\|_{\alpha}}$, with $\alpha \in [1,\infty)$. In this setting, the metric space $(\mathcal{D}(\mathcal{H}),D(\rho,\varrho))$ is equipped with the distance ${D_{\alpha}}(\rho,\varrho) = {\| \rho - \varrho \|_{\alpha}}$, where $\| A \|_{\alpha} := (\mathrm{Tr}\{(\sqrt{A^\dagger A})^{\alpha}\})^{\frac{1}{\alpha}}$ [see Appendix~\ref{app:00000000A}]. Hence, from Eq.~\eqref{eq:0000000000008}, the $\alpha$-QSL induced from this norm is given by
\begin{equation}
\label{eq:0000000000009}
{\tau_{QSL}^{\alpha}} (\tau) := \frac{ \| {\rho_{\tau}} - {\rho_0} \|_\alpha}{ {\int_{0}^{\tau}} dt \left\| \frac{d{\rho_t}}{dt} \right\|_\alpha} \, \tau ~.
\end{equation}


\subsection{$\alpha$-QSL for closed and open quantum systems}
\label{sec:000000002B}

In the following, we specify the physical process in order to recast the result in Eq.~\eqref{eq:0000000000009} in terms of well-known information theoretic quantifiers. From now on, we set $\hbar = 1$. We shall begin addressing closed quantum systems. Let $H$ be the time-independent Hamiltonian that generates the dynamics of the system, with $\rho_0$ being the initial state of the finite-dimensional quantum system, where ${\rho_0^{\dagger}} = {\rho_0}$, ${\rho_0} \geq 0$, and $\text{Tr}({\rho_0}) = 1$. The instantaneous state of the system is given by ${\rho_t} = {\Lambda_t}({\rho_0}) = {U_t}{\rho_0}{U_t^{\dagger}}$, where the unitary operator ${U_t} = {e^{- i t H}}$ satisfies the relation ${U_t}{U_t^{\dagger}} = {U_t^{\dagger}}{U_t} = \mathbb{I}$. The time-dependent state ${\rho_t}$ satisfies the von Neumann equation $d{\rho_t}/dt = -i [H,{\rho_t}] = -i\,{U_t}[H,{\rho_0}]{U_t^{\dagger}}$. In this case, Eq.~\eqref{eq:0000000000009} becomes
\begin{equation}
\label{eq:0000000000010}
{\tau_{QSL}^{\alpha}} (\tau) := \frac{\|{\rho_{\tau}} - {\rho_0} \|_{\alpha}}{{\int_0^{\tau}} dt \left\| \frac{d{\rho_t}}{dt} \right\|_{\alpha} }\tau = \frac{ \| {\rho_{\tau}} - {\rho_0} \|_{\alpha}}{\left\|(-i)[H,{\rho_0}]\right\|_{\alpha}}  ~,
\end{equation}
where we used the fact that the Schatten $\alpha$-norm is unitari\-ly invariant. We note that the contribution $\left\|(-i)[H,{\rho_0}]\right\|_{\alpha}$ captures the quantum fluctuations of the Hamiltonian of the system with respect to the initial state of the system. In particular, for $\alpha = 2$, one can show that ${\left\|(-i)[H,{\rho_0}]\right\|_2^2} = 4\,{\mathcal{I}_L}({\rho_0},H)$, where ${\mathcal{I}_L}({\rho_0},H) := -(1/4)\text{Tr}({[{\rho_0},H]^2})$ defines a coherence quantifier~\cite{PhysRevLett.113.170401}. Hence, for $\alpha = 2$, Eq.~\eqref{eq:0000000000010} can be written as ${\tau_{QSL}^{\alpha = 2}} (\tau) = {\|{\rho_{\tau}} - {\rho_0}\|_2}/\sqrt{4\,{\mathcal{I}_L}({\rho_0},H)}$. For initial pure states, it follows that ${\mathcal{I}_L}({\rho_0},H) = (1/2){(\Delta{E})^2}$, where ${(\Delta{E})^2} = { {\langle{H^2}\rangle_{\rho_0}} - {\langle{H}\rangle^2_{\rho_0}}}$ is the variance of the Hamiltonian with respect to the pure state $\rho_0$, with ${\langle{H^n}\rangle_{\rho_0}} := \text{Tr}({\rho_0}{H^n})$ for all $n \in \mathbb{N}$. In this case, one gets the QSL time as ${\tau_{QSL}^{\alpha = 2}}(\tau) = {\|{\rho_{\tau}} - {\rho_0}\|_2}/(\sqrt{2}{\Delta{E}})$. For initial mixed states, it has been proved that ${\mathcal{I}_L}({\rho_0},H) \leq {\mathcal{I}_{\text{WY}}}({\rho_0},H) \leq {(\Delta{E})^2}$~\cite{PhysRevLett.113.170401,Yanagi_2010}, where ${\mathcal{I}_{\text{WY}}}({\rho_0},H) = -(1/2)\text{Tr}({[\sqrt{\rho_0},H]^2})$ is the Wigner-Yanase skew information. In this case, one finds the lower bound ${\tau_{QSL}^{\alpha = 2}}(\tau) \geq {\|{\rho_{\tau}} - {\rho_0}\|_2}/(2{\Delta{E}})$, which is also inversely proportional to the variance $\Delta{E}$. These results show that, for closed quantum systems driven by time-independent Hamiltonians, the QSL time fits into the class of QSLs \textit{\`{a} la} Mandelstam-Tamm, regardless the initial state being a pure or mixed one.

Next, for open quantum systems, let ${\rho_t} = {\Lambda_t}({\rho_0}) = {\sum_j}{K_j}{\rho_0}{K_j^{\dagger}}$ be the evolved state of the system, where $\{{K_j}\}$ depicts the set of time-dependent Kraus operators related to the nonunitary quantum operation ${\Lambda_t}(\bullet)$ that models the dynamics, with ${\sum_j}{K_j^{\dagger}}{K_j} = \mathbb{I}$. It can be shown that ${\|d{\rho_t}/dt\|_{\alpha}} \leq 2\,{\sum_j}\,{\|{K_j}{\rho_0}{\dot{K}_j^{\dagger}}\|_{\alpha}}$, where we have applied the triangle inequality ${\|A + B\|_{\alpha}} \leq {\|A\|_{\alpha}} + {\|B\|_{\alpha}}$, and also used the fact that the norm $\|\bullet\|$ is unitarily invariant such that ${\|{\dot{K}_j}{\rho_0}{K_j^{\dagger}}\|_{\alpha}} = {\|{K_j}{\rho_0}{\dot{K}_j^{\dagger}}\|_{\alpha}}$. Hence, Eq.~\eqref{eq:0000000000009} becomes
\begin{equation}
\label{eq:0000000000011}
{\tau_{QSL}^{\alpha}}(\tau) = \frac{ \| {\rho_{\tau}} - {\rho_0} \|_{\alpha}}{ {\int_0^{\tau}} dt \left\| \frac{d{\rho_t}}{dt} \right\|_{\alpha} }\tau \geq \frac{ \| {\rho_{\tau}} - {\rho_0} \|_{\alpha}}{2\,{\sum_j}\,{\int_0^{\tau}}\, dt \left\|{K_j}{\rho_0}{\dot{K}_j^{\dagger}}\right\|_{\alpha}}\tau ~.
\end{equation}


\subsection{$\alpha$-QSL for qubits}
\label{sec:000000002C}

In the following, we consider the dynamics of single-qubit states, ${\rho_t} = {\Lambda_t}({\rho_0}):= (1/2)(\mathbb{I} + {\vec{n}_t}\cdot\vec{\sigma})$, whose evolution is governed by general dynamical maps, where ${\vec{n}_t} = ({n_{x,t}},{n_{y,t}},{n_{z,t}})$ is the Bloch vector for every $t \geq 0$, and  ${\vec{\sigma}} = ({\sigma_{x}},{\sigma_{y}},{\sigma_{z}})$ are the Pauli matrices. In this case, we can compute explicitly the distance $\|  {\rho_{\tau}} - {\rho_0} \|_\alpha$ as well as the norm $\|d{\rho_t}/dt\|_{\alpha}$ [see Appendix~\ref{app:00000000B}]. We state the following result:
\begin{theorem}[Universality of the $\alpha$-QSL for qubits]
\label{theo:THEOREM000001}
Let's consider an initial qubit state ${\rho_0} = (1/2)(\mathbb{I} + {\vec{n}_0}\cdot\vec{\sigma})$. The dynamics of the system is given in terms of a general dynamical map ${\Lambda_{t,0}}(\bullet)$, and its instantaneous state becomes ${\rho_t} = {\Lambda_{t,0}}({\rho_0}) = (1/2)(\mathbb{I} + {\vec{n}_t}\cdot\vec{\sigma})$, for all $t \in [0,\tau]$. From Eq.~\eqref{eq:0000000000009}, the geometric $\alpha$-QSL that is induced from Schatten $\alpha$-norm, for every $\alpha \in [1, \infty)$, yields
\begin{equation}
\label{eq:0000000000012}
\tau_{QSL}^\alpha(\tau) = \frac{d_{\text{euc}}({\vec{n}_0},{\vec{n}_{\tau}})}{{\int_0^{\tau}} dt \left\| \frac{d{\vec{n}_t}}{dt} \right\|} \, \tau ~,
\end{equation}
with
\begin{equation}
\label{eq:0000000000013}
\left\| \frac{d{\vec{n}_t}}{dt} \right\| = \sqrt{{\sum_{j = x,y,z}}\,{\left(\frac{d{n_{j,t}}}{dt}\right)^2}} ~,
\end{equation}
while
\begin{equation}
\label{eq:0000000000014}
d_{\text{euc}}({\vec{n}_0}, {\vec{n}_{\tau}}) = \sqrt{ {\sum_{j = x,y,z}} ({n_{j,\tau}} - {n_{j,0}})^2}
\end{equation}
stands for the Euclidean distance between the initial and the final Bloch vectors in the Bloch space. 
\end{theorem}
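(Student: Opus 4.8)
The plan is to exploit the special structure of dimension $d=2$: both the displacement ${\rho_\tau}-{\rho_0}$ and the instantaneous generator $d{\rho_t}/dt$ are traceless Hermitian operators, and every traceless Hermitian $2\times 2$ matrix has a pair of \emph{degenerate} singular values. This degeneracy is precisely what will make the Schatten $\alpha$-norm collapse, up to an $\alpha$-dependent constant, onto a Euclidean norm in Bloch space, and hence produce the claimed universality.

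First I would write ${\rho_t}=(1/2)(\mathbb{I}+{\vec n_t}\cdot\vec\sigma)$, so that ${\rho_\tau}-{\rho_0}=(1/2)({\vec n_\tau}-{\vec n_0})\cdot\vec\sigma$ and $d{\rho_t}/dt=(1/2)(d{\vec n_t}/dt)\cdot\vec\sigma$, both of the form $A=\vec a\cdot\vec\sigma$ with $\vec a\in\mathbb{R}^{3}$; they are automatically traceless since $\text{Tr}({\rho_t})=1$ for all $t$, and Hermitian since ${\vec n_t}$ is real. Next I would compute the Schatten norm of such an $A$. Using the Pauli identity $(\vec a\cdot\vec\sigma)(\vec b\cdot\vec\sigma)=(\vec a\cdot\vec b)\,\mathbb{I}+i(\vec a\times\vec b)\cdot\vec\sigma$ one gets $(\vec a\cdot\vec\sigma)^{2}=\|\vec a\|^{2}\,\mathbb{I}$, so that $A$ has eigenvalues $\pm\|\vec a\|$ and, being Hermitian, singular values both equal to $\|\vec a\|$; therefore
\begin{equation}
{\|\vec a\cdot\vec\sigma\|_\alpha}=\big(\|\vec a\|^{\alpha}+\|\vec a\|^{\alpha}\big)^{1/\alpha}=2^{1/\alpha}\,\|\vec a\|
\end{equation}
for every $\alpha\in[1,\infty)$.

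Applying this twice gives ${\|{\rho_\tau}-{\rho_0}\|_\alpha}=2^{1/\alpha-1}\,d_{\text{euc}}({\vec n_0},{\vec n_\tau})$ and ${\|d{\rho_t}/dt\|_\alpha}=2^{1/\alpha-1}\,\|d{\vec n_t}/dt\|$, with the right-hand sides the ordinary Euclidean expressions~\eqref{eq:0000000000013}--\eqref{eq:0000000000014}. Substituting into Eq.~\eqref{eq:0000000000009}, the factor $2^{1/\alpha-1}$ in the numerator cancels the identical factor pulled out of the time integral in the denominator, and with it all dependence on $\alpha$ disappears, leaving exactly Eq.~\eqref{eq:0000000000012}. (The same manipulation handles $\alpha=\infty$ as a limit, consistently with $2^{1/\alpha}\to 1$, though the statement only claims $\alpha\in[1,\infty)$.)

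I do not expect any real analytic obstacle: once the norm of $\vec a\cdot\vec\sigma$ is known the rest is a one-line substitution. The only delicate point is the spectral computation $(\vec a\cdot\vec\sigma)^{2}=\|\vec a\|^{2}\mathbb{I}$ and the consequent degeneracy of the singular values — this is the genuine content of the theorem and the true source of the universality, since in $d>2$ traceless Hermitian matrices have generically distinct singular values and the $\alpha$-dependence no longer factors out. A minor bookkeeping check is that the differentiation $d{\rho_t}/dt=(1/2)(d{\vec n_t}/dt)\cdot\vec\sigma$ and the interchange of the integral with the constant prefactor are legitimate, which holds under the smoothness assumption on $\gamma_\tau$ already invoked before Eq.~\eqref{eq:0000000000006}.
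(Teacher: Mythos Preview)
Your proof is correct and follows essentially the same approach as the paper: both compute that a traceless Hermitian qubit operator $(1/2)\,\vec a\cdot\vec\sigma$ has degenerate singular values $\tfrac{1}{2}\|\vec a\|$, giving ${\|\,(1/2)\,\vec a\cdot\vec\sigma\,\|_\alpha}=2^{1/\alpha-1}\|\vec a\|$, and then substitute into Eq.~\eqref{eq:0000000000009} so that the $\alpha$-dependent prefactor cancels between numerator and denominator. Your derivation is in fact slightly more self-contained, since you exhibit the Pauli identity $(\vec a\cdot\vec\sigma)^{2}=\|\vec a\|^{2}\mathbb{I}$ as the source of the spectral degeneracy rather than quoting the eigenvalues directly.
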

\begin{proof}
See Appendix~\ref{app:00000000C}.
\end{proof}

Theorem~\ref{theo:THEOREM000001} shows that there is a universality of the geometric QSL induced through of a $\alpha$-norm. That is, no matter which Schatten $\alpha$-norm is chosen, Eq.~\eqref{eq:0000000000012} gives the same result for all $1 \leq \alpha < \infty$.


\subsection{Comparing $\alpha$-QSL with others quantifiers}
\label{sec:000000002D}

Let's compare the geometric QSL induced from an $\alpha$-norm with the QSL developed by Deffner and Lutz (DL)~\cite{DLQSL}. Ta\-king a pure initial state ${\rho_0} = |{\psi_0}\rangle\langle{\psi_0}|$ undergoing a certain nonunitary dynamics, DL showed the lower bound $\tau \geq \tau_{QSL}^{DL}$, with
\begin{equation}
\label{eq:0000000000015}
{\tau_{QSL}^{DL}} = \frac{1 - \langle{\psi_0}|{\rho_{\tau}}|{\psi_0}\rangle}{ \frac{1}{\tau} {\int_0^{\tau}} dt \left\| \frac{d{\rho_t}}{dt} \right\|_{\infty}} ~.
\end{equation}
Since $\text{dim}(\mathcal{H}) = 2$, we can choose an orthonormal basis $\{|{\psi_0}\rangle, |{\psi_0^{\perp}}\rangle\}$, where $\langle{\psi_0}|{\psi_0^{\perp}}\rangle = 0$. In this basis, we can define $\sigma_z = |{\psi_0}\rangle\langle{\psi_0}| - |{\psi_0^{\perp}}\rangle\langle{\psi_0^{\perp}}|$. Therefore, taking the Bloch representation for the final state, we can write $1 - \langle{\psi_0}|{\rho_{\tau}}|{\psi_0}\rangle = \frac{1}{2}({1 - {n_{z,\tau}}}) \leq \frac{1}{2} \, {d_{\text{euc}}}({\vec{n}_0},{\vec{n}_{\tau}})$, where we have used the fact that ${n_{z,t}} \leq 1$ for every $t \geq 0$, and ${d_{\text{euc}}}({\vec{n}_0},{\vec{n}_{\tau}}) = \sqrt{{n_{x,\tau}^2} + {n_{y,\tau}^2} + {(1 - {n_{z,\tau}})^2}}$ is the Euclidean distance between vectors ${\vec{n}_t} = ({n_{x,t}},{n_{y,t}},{n_{z,t}})$, and ${\vec{n}_0} = (0,0,1)$ in the chosen basis $\{|{\psi_0}\rangle,|{\psi_0^{\perp}}\rangle\}$. Therefore, using Theorem~\ref{theo:THEOREM000001}, we conclude that
\begin{equation}
\label{eq:0000000000016}
\tau_{QSL}^\alpha \geq \tau_{QSL}^{DL} ~,
\end{equation}
where we have used that ${\| d{\rho_t}/dt\|_{\infty}} = (1/2)\|d{\vec{n}_t}/dt\|$.

Importantly, Eq.~\eqref{eq:0000000000016} holds whenever we have an initial single-qubit pure state. We note that an initial mixed state can be purified, finding again the previous case. Then, the geometric QSL induced from a Schatten $\alpha$-norm [see Eq.~\eqref{eq:0000000000010}], for every $\alpha \in [1,\infty)$, is tighter than the QSL obtained by Deffner and Lutz [see Eq.~\eqref{eq:0000000000015}].

Next, we compare the $\alpha$-QSL with the speed limit developed by Campaioli {\it et al.} (CPM)~\cite{Kavan}, which in turn depends on the Hilbert-Schmidt norm. In detail, they showed that $\tau \geq {\tau_{QSL}^{CPM}}$, with
\begin{equation}
\label{eq:0000000000017}
{\tau_{QSL}^{CPM}} := \frac{ \| {\rho_{\tau}} - {\rho_0} \|_2}{{\int_{0}^{\tau}} dt \left\| \frac{d{\rho_t}}{dt} \right\|_2 } \, \tau ~.
\end{equation}
We see that Eq.~\eqref{eq:0000000000017} is a particular case of the general geome\-tric QSL induced from a Schatten $\alpha$-norm [see Eq.~\eqref{eq:0000000000009}], for $\alpha = 2$. In addition, by taking a qubit dynamics, one finds that the quantifier ${\tau_{QSL}^{CPM}}$ recovers the QSL time $\tau_{QSL}^\alpha(\tau)$ in Eq.~\eqref{eq:0000000000012}, for every value of $\alpha \in [1,\infty)$ [see Theorem~\ref{theo:THEOREM000001}]. That is, for a qubit dynamics and for every $\alpha \in [1,\infty)$, we have
\begin{equation}
\label{eq:0000000000018}
\tau_{QSL}^{CPM} =  \tau_{QSL}^{(\alpha = 2)}(\tau)=  \tau_{QSL}^\alpha(\tau) = \frac{d_{\text{euc}}({\vec{n}_0},{\vec{n}_{\tau}})}{\int_0^\tau dt \left\| \frac{d {\vec{n}_t}}{dt} \right\|} \, \tau ~.
\end{equation}
On the one hand, the CPM bound in Eq.~\eqref{eq:0000000000017} is a particular case of a more general QSL, as seen in Eq.~\eqref{eq:0000000000018}. On the other hand, Theorem~\ref{theo:THEOREM000001} implies that for a single-qubit dynamics the value assumed by the CPM bound is the same for the general case.


\subsection{Tightness condition of $\alpha$-QSL}
\label{sec:000000002E}

The geometric QSL induced from Schatten $\alpha$-norm, Eq.~\eqref{eq:0000000000012}, is tight, that is ${\tau_{QSL}^{\alpha}}/{\tau} = 1$, if and only if the dynamical path is the geodesic path, which represents the optimal path in $\mathcal{D}(\mathcal{H})$. Since the state space has an Euclidean metric induced from the Schatten norm, it follows that the geodesic is the straight line connecting the initial state $\rho_0$ to the final state $\rho_{\tau}$. We state the following result:
\begin{lemma}[Tightness condition in the Bloch space for the $\alpha$-QSL]
\label{lem:LEMMA000001}
Let ${\rho_t} = (1/2)(\mathbb{I} + {\vec{n}_t}\cdot\vec{\sigma}) = {\rho_{11,t}}|0\rangle\langle{0}| + (1 - {\rho_{11,t}})|1\rangle\langle{1}| + {\rho_{12,t}}|0\rangle\langle{1}| + {\rho^*_{12,t}}|1\rangle\langle{0}|$ be a given qubit state, with $\rho_{jl,t}$ being time-dependent matrix elements of the state $\rho_t$ respective to the computational basis $\{|0\rangle, |1\rangle \}$. For the general dynamics of this state, the geometric QSL induced from the Schatten $\alpha$-norm [see Eq.~\eqref{eq:0000000000007}] is tight, i.e., ${\tau_{QSL}^{\alpha}}/{\tau} = 1$, if one gets
\begin{equation}
\label{eq:0000000000019}
\frac{d{\vec{n}_t}}{dt} = \left\| \frac{d{\vec{n}_t}}{dt} \right\| \hat{r}_{0,t} ~,
\end{equation}
which implies that
\begin{equation}
\label{eq:0000000000020}
\left(\Re(d{\rho_{12,t}}/dt),-\Im(d{\rho_{12,t}}/dt), d{\rho_{11,t}}/dt\right) = {\Omega_t}~{\hat{r}_{0,t}} ~,
\end{equation}
where
\begin{equation}
\label{eq:0000000000021}
\hat{r}_{0,t} = \frac{{\vec{n}_t} - {\vec{n}_0}}{\|{\vec{n}_t} - {\vec{n}_0} \|}
\end{equation}
is the unit vector which define the direction of the line that connect the initial and the final Bloch vectors, with
\begin{equation}
\label{eq:0000000000022}
{\Omega_t} :=\sqrt{{\left|\frac{d\rho_{11,t}}{dt}\right|^2} + {\left|\frac{d\rho_{12,t}}{dt}\right|^2}} ~,
\end{equation}
and $\|d{\vec{n}_t}/dt\|$ is defined in Eq.~\eqref{eq:0000000000013}.
\end{lemma}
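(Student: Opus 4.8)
The plan is to reduce the tightness condition to a statement about the Bloch trajectory, using Theorem~\ref{theo:THEOREM000001}, and then to rewrite that statement in terms of the density-matrix elements. First I would recall from the discussion in Sec.~\ref{sec:000000002E} that ${\tau_{QSL}^{\alpha}}/\tau = 1$ holds precisely when the length bound~\eqref{eq:0000000000006} is saturated. Specializing to a qubit and invoking Theorem~\ref{theo:THEOREM000001}, this is equivalent to the arclength of the Bloch trajectory equalling the Euclidean chord length,
\begin{equation}
{\int_0^{\tau}} dt \left\| \frac{d{\vec{n}_t}}{dt} \right\| = {d_{\text{euc}}}({\vec{n}_0},{\vec{n}_{\tau}}) = \| {\vec{n}_{\tau}} - {\vec{n}_0} \| ~.
\end{equation}

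Next I would invoke the equality condition of the triangle inequality for vector-valued integrals. Since ${\int_0^{\tau}} dt\, \| d{\vec{n}_t}/dt \| \geq \| {\int_0^{\tau}} dt\, d{\vec{n}_t}/dt \| = \| {\vec{n}_{\tau}} - {\vec{n}_0}\|$, saturation forces $d{\vec{n}_t}/dt = \lambda(t)\,\hat{e}$ for a fixed unit vector $\hat{e}$ and a nonnegative scalar $\lambda(t)$; integrating gives ${\vec{n}_t} - {\vec{n}_0} = \big({\int_0^{t}} ds\,\lambda(s)\big)\hat{e}$, so the Bloch trajectory is the straight segment from ${\vec{n}_0}$ to ${\vec{n}_{\tau}}$, whence ${\hat{r}_{0,t}} = ({\vec{n}_t} - {\vec{n}_0})/\|{\vec{n}_t} - {\vec{n}_0}\| = \hat{e}$ and $\lambda(t) = \|d{\vec{n}_t}/dt\|$, which is Eq.~\eqref{eq:0000000000019}. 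For the converse I would note that Eq.~\eqref{eq:0000000000019} says $\vec{u}_t := {\vec{n}_t} - {\vec{n}_0}$ obeys the radial flow $\dot{\vec{u}}_t = \|\dot{\vec{u}}_t\|\,\vec{u}_t/\|\vec{u}_t\|$, which keeps $\vec{u}_t$ on a fixed ray; hence $d{\vec{n}_t}/dt$ is a nonnegative multiple of a constant unit vector and the bound above is saturated. The degenerate cases I would treat separately: on any subinterval where $d{\vec{n}_t}/dt = 0$ both sides of Eq.~\eqref{eq:0000000000019} vanish and the direction ${\hat{r}_{0,t}}$ is immaterial, while in the closed-loop case ${\vec{n}_{\tau}} = {\vec{n}_0}$ saturation forces ${\vec{n}_t}$ to be constant.

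Finally I would convert Eq.~\eqref{eq:0000000000019} into Eq.~\eqref{eq:0000000000020} using the elementary dictionary between Bloch components and matrix elements in the computational basis, ${n_{x,t}} = 2\Re({\rho_{12,t}})$, ${n_{y,t}} = -2\Im({\rho_{12,t}})$, ${n_{z,t}} = 2{\rho_{11,t}} - 1$. Differentiating gives $d{\vec{n}_t}/dt = 2\big(\Re(d{\rho_{12,t}}/dt),\,-\Im(d{\rho_{12,t}}/dt),\,d{\rho_{11,t}}/dt\big)$ and, since ${\rho_{11,t}}$ is real, $\|d{\vec{n}_t}/dt\| = 2{\Omega_t}$ with ${\Omega_t}$ as in Eq.~\eqref{eq:0000000000022}; substituting into Eq.~\eqref{eq:0000000000019} and cancelling the common factor of $2$ yields Eq.~\eqref{eq:0000000000020}. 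The only genuinely nontrivial step is the equality analysis of the triangle inequality together with the careful handling of the degenerate intervals; the remaining manipulations are direct computations.
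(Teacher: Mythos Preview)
Your proposal is correct and follows the same overall architecture as the paper: reduce tightness via Theorem~\ref{theo:THEOREM000001} to the arclength-equals-chord condition on the Bloch trajectory, extract the vector condition~\eqref{eq:0000000000019}, and then translate to density-matrix elements via the dictionary ${n_{x,t}} = 2\Re({\rho_{12,t}})$, ${n_{y,t}} = -2\Im({\rho_{12,t}})$, ${n_{z,t}} = 2{\rho_{11,t}} - 1$, obtaining $\|d{\vec{n}_t}/dt\| = 2{\Omega_t}$ and hence Eq.~\eqref{eq:0000000000020}.

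The one substantive difference is in how the vector condition is extracted. The paper differentiates the equality ${d_{\text{euc}}}({\vec{n}_0},{\vec{n}_{\tau}}) = {\int_0^{\tau}} dt\,\|d{\vec{n}_t}/dt\|$ with respect to the upper limit to obtain $\hat{r}_{0,t}\cdot d{\vec{n}_t}/dt = \|d{\vec{n}_t}/dt\|$ and then invokes Cauchy--Schwarz to reach Eq.~\eqref{eq:0000000000019}; this step tacitly assumes the equality holds on a full interval of endpoints so that the derivative makes sense. You instead invoke directly the equality case of the triangle inequality for vector-valued integrals, which immediately forces $d{\vec{n}_t}/dt$ to be a nonnegative multiple of a fixed unit vector; this is cleaner, gives both directions at once, and your explicit handling of the degenerate cases (stationary intervals, closed loops) is a genuine improvement in rigor over the paper's argument. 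Both routes lead to the same conclusion and the conversion to Eq.~\eqref{eq:0000000000020} is identical.
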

\begin{proof}
See Appendix~\ref{app:00000000D}.
\end{proof}

Therefore, due to the Theorem~\ref{theo:THEOREM000001}, the optimal dynamical path in the Bloch space is the straight line connecting the initial Bloch vector $\vec{n}_0$ to the final one $\vec{n}_{\tau}$. Indeed, the solution of Eq.~\eqref{eq:0000000000019} [or Eq.~\eqref{eq:0000000000020}] provides the geodesic path that saturates the $\alpha$-QSL bound for initial single-qubit pure states. This condition can be seen as the geometrical interpretation of the Lemma~\ref{lem:LEMMA000001}. From the algebraic point of view, the unit vector $\hat{r}_{0,t}$, which depends only on the initial and final Bloch vectors, tell us what is the signal of the rates of the populations and quantum coherences. For example, if $\hat{z}\cdot{\hat{r}_{0,t}} = (1/{\Omega_t})({d\rho_{11,t}}/{dt}) \leq 0$, then the optimal physical process need be such that $d\rho_{11,t}/dt \leq 0$, that is, a dissipative process. Then, the final and initial single-qubit states restrict the class of possible dynamical maps in which the QSL can be tight.


\section{QSL from Holder's inequality}
\label{sec:000000003}

Here we will present a family of quantum speed limits through the Holder's inequality. In addition, we will compare this one with others QSL, and analyze the necessary and sufficient condition for this QSL to be tightness. 

\subsection{$(\alpha,\beta)$-QSL}
\label{sec:000000003A}

Let $\mathcal{H}$ be a finite-dimensional Hilbert space and $\mathcal{D}(\mathcal{H})$ its associated space of quantum states. By hypothesis, this last set is a normed space as it is equipped with a family of Schatten $\alpha$-norms through which distances between quantum states can be evaluated. In addition, let's consider the dynamical map ${\Lambda_t}(\bullet)$ with ${\rho_t} = {\Lambda_t}({\rho_0})$ for some initial state ${\rho_0} \in \mathcal{D}(\mathcal{H})$. Therefore, $d{\rho_t}/dt$ is supposed be well defined for every $t \geq 0$ and the dynamical path is a continuous and smooth path in the state space.

We define the overlap between the initial and final states as $\xi(t) = \mathrm{Tr}({\rho_0}{\rho_t})$. For an initial pure state ${\rho_0} = |{\psi_0}\rangle\langle{\psi_0}|$, the overlap $\xi(t) = \langle{\psi_0}|{\rho_t}|{\psi_0}\rangle$ becomes the fidelity. Due to the Holder's inequality $|\mathrm{Tr}[Y^\dagger X]| \leq \|X\|_\alpha \|Y\|_\beta$, which holds for every two linear operators $X,Y \in L(\mathcal{H})$ and real numbers $\alpha,\beta \in [1,\infty)$, where $\frac{1}{\alpha} + \frac{1}{\beta} = 1$, one finds that the absolute value of rate of such overlap becomes
\begin{equation}
\label{eq:0000000000023}
\left|\frac{d \xi(t)}{dt}\right| = \left| \text{Tr}\left({\rho_0}\frac{d{\rho_t}}{dt}\right)\right| \leq {\|{\rho_0}\|_{\beta}} \left\| \frac{d{\rho_t}}{dt} \right\|_\alpha ~.
\end{equation}
Integrating this inequality in the interval $[0,\tau]$, we obtain  
\begin{equation}
\label{eq:0000000000024}
\frac{|\xi(\tau) - \xi(0)|}{\| {\rho_0} \|_\beta} \leq {\int_0^{\tau}} dt {\left\| \frac{d{\rho_t}}{dt} \right\|_{\alpha}} ~,
\end{equation}
where we used $|\xi(\tau) - \xi(0)| = \left| \int_0^\tau dt \, ({d \xi (t)}/{dt}) \right| \leq {\int_0^{\tau}} dt \left| {d \xi(t)}/{dt} \right|$. However, note that $\xi(0) = \mathrm{Tr}({\rho_0^2}) \in \left[\frac{1}{d},1\right ]$ is the purity of the initial state, while $\xi(\tau) = \mathrm{Tr}({\rho_{\tau}}{\rho_0}) \in [0,1]$ is the overlap between the initial and final states. Therefore, Eq.~\eqref{eq:0000000000024} implies the lower bound on the time of evolution
\begin{equation}
\label{eq:0000000000025}
\tau \geq {\tau_{QSL}^{\alpha,\beta}} ~,
\end{equation}
where we introduce the $(\alpha,\beta)$-QSL defined as
\begin{equation}
\label{eq:0000000000026}
\tau_{QSL}^{\alpha,\beta} =  \frac{\left| \text{Tr}[({\rho_{\tau}} - {\rho_0}){\rho_0}] \right|}{ {\|{\rho_0}\|_{\beta}} \, {\int_0^{\tau}} dt \left\| \frac{d {\rho_t}}{dt} \right\|_\alpha} \, \tau ~.
\end{equation}

Finally, we specialize the general QSL obtained in Eq.~\eqref{eq:0000000000026} for an initial pure state ${\rho_0} = |{\psi_0}\rangle\langle{\psi_0}|$. In this case, we find that $\text{Tr}({\rho_0^2}) = 1$ and $\text{Tr}({\rho_0}{\rho_{\tau}}) = \langle{\psi_0}|{\rho_{\tau}}|{\psi_0}\rangle \in [0,1]$. Since $\rho_0$ is a projector, we have that ${\rho_0^2} = {\rho_0}$, and therefore ${\|{\rho_0}\|_{\beta}} = 1$ for every $\beta \in[1,\infty)$. Taking all together, Eq.~\eqref{eq:0000000000026} becomes
\begin{equation}
\label{eq:0000000000027}
\frac{\tau_{QSL}^{\alpha,\beta}}{\tau} = \frac{1 - \langle \psi_0 | {\rho_{\tau}} | \psi_0 \rangle }{{\int_0^{\tau}} dt \left\| \frac{d{\rho_t}}{dt} \right\|_\alpha} ~.
\end{equation}
Noteworthy, Eq.~\eqref{eq:0000000000027} shows that there is no more dependence on the dual norm $\|\bullet\|_\beta$ of the initial state, and the bound only depends on the norm $\|\bullet\|_\alpha$. In what follows, we will investigate the case of single-qubit states.


\subsection{$(\alpha,\beta)$-QSL for qubits}
\label{sec:000000003B}

Here we address $\tau_{QSL}^{\alpha,\beta}$ in Eq.~\eqref{eq:0000000000026} for single-qubit states ${\rho_t} = (1/2)(\mathbb{I} + {\vec{n}_t}\cdot\vec{\sigma})$ that can be mixed or pure, undergoing a general physical processes over the time interval $0 \leq t \leq \tau$. In this setting, it can be verified that Eq.~\eqref{eq:0000000000026} becomes
\begin{equation}
\label{eq:0000000000028}
 \frac{\tau_{QSL}^{\alpha,\beta}}{\tau} = \frac{2^{1 - \frac{1}{\alpha}} |({\vec{n}_{\tau}} - {\vec{n}_0})\cdot{\vec{n}_0}|}{\left[ {(1 + \|{\vec{n}_0}\|)^{\beta}} + {(1 - \|{\vec{n}_0}\|)^{\beta}}\right]^{\frac{1}{\beta}} {\int_0^{\tau}} dt \left\| \frac{d {\vec{n}_t}}{dt} \right\|} ~.
\end{equation}

In particular, let's see what happens when the initial state is pure. In this case, we have $\|{\vec{n}_0}\| = 1$, and the QSL given by Eq.~\eqref{eq:0000000000028} can be written as
\begin{equation}
\label{eq:0000000000029}
\frac{\tau_{QSL}^{\alpha,\beta}}{\tau} = 2^{-\frac{1}{\alpha}} \frac{\left( 1 - {\vec{n}_{\tau}}\cdot{\vec{n}_0}\right)}{{\int_0^{\tau}} dt \left\| \frac{d {\vec{n}_t}}{dt} \right\|} ~.
\end{equation}
We note that the QSL time in Eq.~\eqref{eq:0000000000029} is independent of the dual norm $\| \cdot \|_\beta$ of the Schatten norm $\| \cdot \|_\alpha$. In fact, we can obtain Eq.~\eqref{eq:0000000000029} directly from the Eq.~\eqref{eq:0000000000027}.


\subsection{Comparing $(\alpha,\beta)$-QSL with others quantifiers}
\label{sec:000000003C}

We address a comparison between the $(\alpha,\beta)$-QSL in Eq.~\eqref{eq:0000000000026}, and the quantum speed limit obtained by del Campo {\it et al}. (CEPH)~\cite{delCampoQSL}. To do so, we fix $\alpha = \beta = 2$, and thus define the Schatten $2$-norm as a {\it bona fide} distance measure on the space of quantum states. In this case, the QSL in Eq.~\eqref{eq:0000000000026} becomes
\begin{equation}
\label{eq:0000000000030}
\frac{\tau_{QSL}^{2,2}}{\tau} = \frac{{\|{\rho_0}\|_2} \, \left| f(\tau) - f(0) \right|}{{\int_0^{\tau}} dt \left\| \frac{d{\rho_t}}{dt} \right\|_2}  ~,
\end{equation}
where $f(t) = {\text{Tr}({\rho_0}{\rho_t})}/{\|{\rho_0}\|_2^2}$ is the relative purity, with $0 \leq f(t) \leq 1$ for all $t \in [0,\tau]$, and $f(0) = 1$. In particular, introducing the parameterization $f(t) = \cos{\theta_t}$, with $0 \leq {\theta_t} \leq \pi/2$, and taking into account the fact that $|\cos{\theta_{\tau}} - 1| \geq {4 {\theta_{\tau}^2}}/{\pi^2}$, we find that Eq.~\eqref{eq:0000000000030} implies the lower bound as
\begin{equation}
\label{eq:0000000000031}
\frac{\tau_{QSL}^{2,2}}{\tau} = \frac{{\|{\rho_0}\|_2} \left| \cos{\theta_{\tau}} - 1 \right|}{{\int_0^{\tau}} dt \left\| \frac{d{\rho_t}}{dt} \right\|_2} \geq \frac{\|{\rho_0}\|_2}{{\int_0^{\tau}} dt \left\| \frac{d{\rho_t}}{dt} \right\|_2} \frac{4 {\theta_{\tau}^2}}{\pi^2} ~.
\end{equation}
The quantity ${\tau_{QSL}^{CEPH}}/{\tau} := ({4 {\theta_{\tau}^2}}/{\pi^2}){\|{\rho_0}\|_2}/ {{\int_0^{\tau}} dt \left\| {d{\rho_t}}/{dt} \right\|_2}$ in the right-hand side of Eq.~\eqref{eq:0000000000031} defines the QSL obtained by del Campo {\it et al}. (CEPH) for the nonunitary evolution in open quantum systems~\cite{delCampoQSL}. Therefore, we conclude that
\begin{equation}
\label{eq:0000000000032}
{\tau_{QSL}^{2,2}} \geq {\tau_{QSL}^{CEPH}} ~.
\end{equation}
Equation~\eqref{eq:0000000000032} shows that, for $\alpha = \beta = 2$, the QSL obtained from the Holder's inequality, Eq.~\eqref{eq:0000000000026}, is expected to be tighter than the QSL developed by del Campo {\it et al}. Here, the choice $\alpha = \beta = 2$ is necessary because ${\tau_{QSL}^{CEPH}}$ was originally derived using the Hilbert-Schmidt norm. We emphasize that our analysis is in agreement with that discussed in Ref.~\cite{delCampoQSL}.


Next, we compare the general QSL in Eq.~\eqref{eq:0000000000026} with the QSL obtained by Deffner-Lutz~\cite{DLQSL}. For an initial pure state ${\rho_0} = |{\psi_0}\rangle\langle{\psi_0}|$, the Deffner-Lutz QSL is given in Eq.~\eqref{eq:0000000000015}, while the $\tau_{QSL}^{\alpha,\beta}$ is given by Eq.~\eqref{eq:0000000000027}. Note that we have 
\begin{equation}
\label{eq:0000000000033}
 \lim_{\alpha \rightarrow \infty} \tau_{QSL}^{\alpha,\beta} = \tau_{QSL}^{DL} ~.
\end{equation}
In particular, for a single-qubit dynamics, these two QSLs can be written as
\begin{eqnarray}
\label{eq:0000000000034}
\frac{\tau_{QSL}^{DL}}{\tau} = \frac{1 - {\vec{n}_{\tau}} \cdot {\vec{n}_0}}{\int_0^\tau dt \left\| \frac{d {\vec{n}_t}}{dt} \right\|} ~,
\end{eqnarray}
and
\begin{equation}
\label{eq:0000000000035}
 \frac{\tau_{QSL}^{\alpha,\beta}}{\tau} = 2^{-\frac{1}{\alpha}} \, \frac{\left(1 - {\vec{n}_{\tau}} \cdot {\vec{n}_0}\right)}{{\int_0^{\tau}} dt \left\| \frac{d {\vec{n}_t}}{dt} \right\|} ~.
\end{equation}
In this setting, the quantifier ${\tau_{QSL}^{\alpha,\beta}}$ is related with ${\tau_{QSL}^{DL}}$ by
\begin{equation}
\label{eq:0000000000036}
{\tau_{QSL}^{\alpha,\beta}} = 2^{-\frac{1}{\alpha}}{\tau_{QSL}^{DL}} ~.
\end{equation}

In the next section, we will analyze the necessary and sufficient condition for the $(\alpha,\beta)$-QSL to be tight, providing a geometrical interpretation in terms of a optimal path in the Bloch space, and an algebraic one in terms of the populations and coherences of the density operator.


\subsection{Tightness condition of $(\alpha,\beta)$-QSL}
\label{sec:00000000D}

Here we address the QSL in Eq.~\eqref{eq:0000000000029} and analyze the necessary and sufficient condition to obtain $\tau_{QSL}^{\alpha,\beta} = \tau$, that is, the optimal dynamics. We state the following result:
\begin{lemma}[Tightness condition in the Bloch space for the $(\alpha,\beta)$-QSL]
\label{lem:LEMMA000002}
Let's consider an initial pure state ${\rho_0} = |{\psi_0}\rangle\langle{\psi_0}| = (1/2)(\mathbb{I} + {\vec{n}_0}\cdot\vec{\sigma})$, while its evolved state becomes ${\rho_t} = (1/2)(\mathbb{I} + {\vec{n}_t}\cdot\vec{\sigma}) = {\rho_{11,t}}|{\psi_0}\rangle\langle{\psi_0}| + (1 - {\rho_{11,t}})|{\psi_0^{\perp}}\rangle\langle{\psi_0^{\perp}}| + {\rho_{12,t}}|{\psi_0}\rangle\langle{\psi_0^{\perp}}| + {\rho^*_{12,t}}|{\psi_0^{\perp}}\rangle\langle{\psi_0}|$. Then, in the orthonormal basis $\{ |{\psi_0} \rangle, | {\psi_0^{\perp}} \rangle\}$ the $(\alpha,\beta)$-QSL [see Eq.~\eqref{eq:0000000000029}] has the following necessary and sufficient condition to be tight:
\begin{eqnarray}
\label{eq:0000000000037}
\frac{\tau_{QSL}^{\alpha,\beta}}{\tau} = 1 &\Leftrightarrow& \left\| \frac{d{\vec{n}_t}}{dt} \right\| = - {2^{-\frac{1}{\alpha}}} \left( {\vec{n}_0} \cdot \frac{d{\vec{n}_t}}{dt}\right) \nonumber \\
& & =  2^{-\frac{1}{\alpha}}\left(-\frac{d{n_{z,t}}}{dt} \right) ~,
\end{eqnarray}
with ${d {n_{z,t}}}/{dt} \leq 0$, or equivalently
\begin{eqnarray}
\label{eq:0000000000038}
\frac{\tau_{QSL}^{\alpha,\beta}}{\tau} = 1  &\Leftrightarrow& {\Omega_t} = 2^{-\frac{1}{\alpha}} \left(-\frac{d\rho_{11,t}}{dt}\right) ~,
\end{eqnarray}
where $d{\rho_{11,t}}/{dt} = d\langle{\psi_0}|{\rho_t}|{\psi_0}\rangle/dt \leq 0$, with $\left\| {d{\vec{n}_t}}/{dt} \right\|$ defined in Eq.~\eqref{eq:0000000000013}, and ${\Omega_t}$ is given in Eq.~\eqref{eq:0000000000022}.
\end{lemma}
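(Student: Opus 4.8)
The plan is to establish the tightness condition by tracing back through the chain of inequalities used in deriving the $(\alpha,\beta)$-QSL in Eq.~\eqref{eq:0000000000029} and demanding that each becomes an equality. For an initial pure single-qubit state we have $\|\vec{n}_0\| = 1$, and Eq.~\eqref{eq:0000000000029} gives $\tau_{QSL}^{\alpha,\beta}/\tau = 2^{-1/\alpha}(1 - \vec{n}_\tau\cdot\vec{n}_0)/\int_0^\tau dt\,\|d\vec{n}_t/dt\|$. First I would note that, in the basis $\{|\psi_0\rangle,|\psi_0^\perp\rangle\}$ where $\vec{n}_0 = (0,0,1)$, one has $1 - \vec{n}_\tau\cdot\vec{n}_0 = 1 - n_{z,\tau} = \int_0^\tau dt\,(-dn_{z,t}/dt)$ whenever $n_{z,t}$ is monotonically nonincreasing, but more generally $|1 - n_{z,\tau}| = |\int_0^\tau dt\,(dn_{z,t}/dt)| \le \int_0^\tau dt\,|dn_{z,t}/dt| \le \int_0^\tau dt\,\|d\vec{n}_t/dt\|$. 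The first inequality here (pulling the absolute value inside the integral) is saturated if and only if $dn_{z,t}/dt$ does not change sign; combined with $n_{z,0}=1$ and $n_{z,t}\le 1$, this forces $dn_{z,t}/dt \le 0$.

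Next I would identify the second inequality: $|dn_{z,t}/dt| \le \|d\vec{n}_t/dt\| = \sqrt{(dn_{x,t}/dt)^2 + (dn_{y,t}/dt)^2 + (dn_{z,t}/dt)^2}$, which is an elementary norm bound saturated if and only if $dn_{x,t}/dt = dn_{y,t}/dt = 0$, i.e., the Bloch vector moves purely along the $z$-axis at every instant. Putting the two saturation requirements together yields precisely $\|d\vec{n}_t/dt\| = -dn_{z,t}/dt$ with $dn_{z,t}/dt\le 0$. The residual factor $2^{-1/\alpha}$ must be carried through carefully: because $\tau_{QSL}^{\alpha,\beta}/\tau$ contains the prefactor $2^{-1/\alpha}$, demanding $\tau_{QSL}^{\alpha,\beta}/\tau = 1$ is equivalent to demanding that the remaining ratio equals $2^{1/\alpha}$, which is exactly the statement $\|d\vec{n}_t/dt\| = 2^{-1/\alpha}(-dn_{z,t}/dt)$ after one reorganizes; I would double-check the bookkeeping here, since it is the step most prone to an off-by-a-power-of-two slip, and it also needs the Hölder step in Eq.~\eqref{eq:0000000000023} (with $\|\rho_0\|_\beta=1$ for a pure state) to contribute no further loss — which it does not, since for $\rho_0$ a rank-one projector the Hölder inequality there is used only through the factor $\|\rho_0\|_\beta = 1$.

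For the equivalent formulation Eq.~\eqref{eq:0000000000038}, I would translate the Bloch-space condition into density-matrix elements using the dictionary already fixed in Lemma~\ref{lem:LEMMA000001}: with $\rho_{11,t} = \langle\psi_0|\rho_t|\psi_0\rangle = (1+n_{z,t})/2$ and $\rho_{12,t} = (n_{x,t} - i n_{y,t})/2$, one has $d\rho_{11,t}/dt = (1/2)\,dn_{z,t}/dt$ and $|d\rho_{12,t}/dt|^2 = (1/4)[(dn_{x,t}/dt)^2 + (dn_{y,t}/dt)^2]$, so that $\Omega_t = (1/2)\|d\vec{n}_t/dt\|$ after combining with Eq.~\eqref{eq:0000000000022}. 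Substituting these identities into $\|d\vec{n}_t/dt\| = 2^{-1/\alpha}(-dn_{z,t}/dt)$ converts it directly into $\Omega_t = 2^{-1/\alpha}(-d\rho_{11,t}/dt)$ with $d\rho_{11,t}/dt\le 0$, and the condition $dn_{x,t}/dt = dn_{y,t}/dt = 0$ (which saturates the second inequality) is automatically encoded because it is what makes $\Omega_t = |d\rho_{11,t}/dt|$ rather than strictly larger.

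The main obstacle I anticipate is not any single step but the careful handling of the ``if and only if'': one must verify that the two saturation conditions (no sign change of $dn_{z,t}/dt$, and vanishing of the transverse derivatives) are jointly necessary as well as sufficient, and that no hidden slack enters through the initial purity condition or the integration over $[0,\tau]$. The cleanest route is to write $\tau_{QSL}^{\alpha,\beta}/\tau$ as a product of ratios each bounded by $1$ (after absorbing the $2^{-1/\alpha}$ appropriately), observe that a product of quantities $\le 1$ equals $1$ iff each factor equals $1$, and then read off the pointwise-in-$t$ consequences of each factor being $1$. I would also remark, as the Lemma's discussion does for $\alpha$-QSL, that this pins down the optimal dynamical path as the straight line from $\vec{n}_0$ to $\vec{n}_\tau$ traversed along the $z$-axis, with the sign of $d\rho_{11,t}/dt$ fixed to be nonpositive — i.e., a dissipative process — which is the geometric content of the result.
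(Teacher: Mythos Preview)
Your decomposition has a genuine gap, precisely at the $2^{-1/\alpha}$ factor you flag. The chain you write, $|1-n_{z,\tau}|\le\int_0^\tau|dn_{z,t}/dt|\,dt\le\int_0^\tau\|d\vec{n}_t/dt\|\,dt$, is \emph{not} the chain through which the $(\alpha,\beta)$-QSL is derived; saturating it yields $\|d\vec{n}_t/dt\|=-\,dn_{z,t}/dt$ (the $\alpha=\infty$ Deffner--Lutz case), not the claimed $\|d\vec{n}_t/dt\|=2^{-1/\alpha}(-dn_{z,t}/dt)$. The inequality that actually carries the $2^{-1/\alpha}$ is H\"older's inequality in Eq.~\eqref{eq:0000000000023}: for a qubit with pure $\rho_0$ it reads $\tfrac{1}{2}\,|\vec{n}_0\cdot d\vec{n}_t/dt|\le\|\rho_0\|_\beta\,\|d\rho_t/dt\|_\alpha=2^{-1+1/\alpha}\|d\vec{n}_t/dt\|$, and saturation of \emph{this} step is what produces the factor. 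Your assertion that ``the H\"older step contributes no further loss since $\|\rho_0\|_\beta=1$'' is wrong: having $\|\rho_0\|_\beta=1$ does not make H\"older an equality. The equality condition depends on the alignment of singular value decompositions, and for a rank-one $\rho_0$ against a traceless $2\times2$ operator with two equal singular values H\"older is strict for every finite $\alpha$ unless $d\rho_t/dt=0$. Consequently your transverse condition $dn_{x,t}/dt=dn_{y,t}/dt=0$ is not the right saturation requirement; it emerges (for finite $\alpha$) only as a by-product of the much stronger H\"older saturation, which in fact forces $d\vec{n}_t/dt=0$.

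The paper's proof avoids any decomposition into sub-inequalities: it writes tightness as the single integral identity $2^{-1/\alpha}(1-\vec{n}_\tau\cdot\vec{n}_0)=\int_0^\tau\|d\vec{n}_t/dt\|\,dt$ and passes directly to the pointwise condition $\|d\vec{n}_t/dt\|=-2^{-1/\alpha}\,\vec{n}_0\cdot d\vec{n}_t/dt$ (sufficiency by integrating; necessity by differentiating in $\tau$ and letting $\tau$ range). If you want to salvage your product-of-ratios strategy, use the correct two factors: one from $|\int d\xi/dt|\le\int|d\xi/dt|$, which gives the sign condition $dn_{z,t}/dt\le 0$, and one from H\"older itself, whose saturation reads $|dn_{z,t}/dt|=2^{1/\alpha}\|d\vec{n}_t/dt\|$. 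Combining these reproduces Eq.~\eqref{eq:0000000000037}, and the translation to Eq.~\eqref{eq:0000000000038} via $\Omega_t=\tfrac{1}{2}\|d\vec{n}_t/dt\|$ and $d\rho_{11,t}/dt=\tfrac{1}{2}\,dn_{z,t}/dt$ then goes through as you outlined.
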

\begin{proof}
See Appendix~\ref{app:00000000E}.
\end{proof}

Let's now interpret the result of the Lemma~\ref{lem:LEMMA000002}. We note that, by solving Eqs.~\eqref{eq:0000000000037} [or Eq.~\eqref{eq:0000000000038}], one one finds the geodesic curve on the space state that saturates the $(\alpha,\beta)$-QSL bound for initial single-qubit pure states. From the algebraic point of view, we see that the constraint $\tau_{QSL}^{\alpha,\beta} = \tau$ indicates that the dynamical map is dissipative, that is, the populations $\rho_{11,t} = \langle{\psi_0}|{\rho_t}|{\psi_0} \rangle$ of the instantaneous state respective to the basis $\{|{\psi_0}\rangle,|{\psi^{\perp}_0}\rangle\}$ are monotonically decreasing functions in time. This result restricts the class of quantum channels for which the $\tau_{QSL}^{\alpha,\beta}$ is tight given an initial pure state.

Finally, we address the particular case where $\alpha = \infty$ and $\beta = 1$. As discussed above, in this case and taking a pure initial state, we have that
\begin{equation}
\label{eq:0000000000039}
\lim_{\alpha \rightarrow \infty} \tau_{QSL}^{\alpha,\beta} = \tau_{QSL}^{DL} ~.
\end{equation}
From the Lemma~\ref{lem:LEMMA000002} and since $\|{\vec{n}_0}\| = 1$, we have that
\begin{eqnarray}
\label{eq:0000000000040}
\frac{\tau_{QSL}^{\alpha = \infty,\beta = 1}}{\tau} = 1 &\Leftrightarrow& \left\| \frac{d{\vec{n}_t}}{dt} \right\| = - \left({\vec{n}_0}\cdot\frac{d{\vec{n}_t}}{dt}\right) \nonumber \\
& \Leftrightarrow & \frac{d{\vec{n}_t}}{dt} = - \left\| \frac{d{\vec{n}_t}}{dt} \right\| {\vec{n}_0} ~.
\end{eqnarray}
Therefore, for $\alpha = \infty$ and $\beta = 1$, and for a pure initial state, the QSL $\tau_{QSL}^{ \infty, 1} = \tau_{QSL}^{DL}$ is tight if, and only if, the dynamical path is radial in the Bloch sphere and it has the opposite direction of the initial unit Bloch vector ${\vec{n}_0}$. Moreover, this condition can be seen as the geometric interpretation of tightness of the Deffner-Lutz QSL. 


\section{Comparing $\alpha$-QSL and $(\alpha,\beta)$-QSL}
\label{sec:000000004}

Here we investigate the relationship between $\tau_{QSL}^{\alpha}$ and $\tau_{QSL}^{\alpha,\beta}$. In order to do that, we state the following result:
\begin{lemma}[Inequality between $\tau_{QSL}^\alpha$ and $\tau_{QSL}^{\alpha,\beta}$]
\label{lem:LEMMA000003}
For a qubit dynamics, if ${\vec{n}_0}\cdot{\vec{n}_{\tau}} \leq 0$ and the initial state is pure, then
\begin{equation}
\label{eq:0000000000041}
\tau_{QSL}^\alpha \leq 2^{\frac{1}{2} + \frac{1}{\alpha}}\tau_{QSL}^{\alpha,\beta} ~.
\end{equation}
\end{lemma}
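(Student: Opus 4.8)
The plan is to exploit the explicit qubit formulas already derived for both quantifiers and reduce the claimed inequality to a pointwise bound between the numerators, since both QSL times share the same denominator $\int_0^\tau dt\,\|d\vec{n}_t/dt\|$. From Theorem~\ref{theo:THEOREM000001} (or Eq.~\eqref{eq:0000000000018}) the geometric $\alpha$-QSL for a qubit is $\tau_{QSL}^\alpha/\tau = d_{\text{euc}}(\vec{n}_0,\vec{n}_\tau)/\int_0^\tau dt\,\|d\vec{n}_t/dt\|$, while for an initial pure state Eq.~\eqref{eq:0000000000029} gives $\tau_{QSL}^{\alpha,\beta}/\tau = 2^{-1/\alpha}(1-\vec{n}_\tau\cdot\vec{n}_0)/\int_0^\tau dt\,\|d\vec{n}_t/dt\|$. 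Hence Eq.~\eqref{eq:0000000000041} is equivalent to the purely geometric statement
\begin{equation}
\label{eq:plan-reduction}
d_{\text{euc}}(\vec{n}_0,\vec{n}_\tau) \;\leq\; 2^{\frac{1}{2}+\frac{1}{\alpha}}\cdot 2^{-\frac{1}{\alpha}}\,\bigl(1-\vec{n}_\tau\cdot\vec{n}_0\bigr) \;=\; \sqrt{2}\,\bigl(1-\vec{n}_\tau\cdot\vec{n}_0\bigr)~,
\end{equation}
so remarkably the $\alpha$-dependence cancels and one only needs a single clean inequality between the Euclidean chord length and the overlap deficit.

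Next I would express both sides in terms of $\|\vec{n}_0\|=1$ (pure initial state), $\|\vec{n}_\tau\|\leq 1$, and $c:=\vec{n}_0\cdot\vec{n}_\tau\leq 0$. Squaring the left side, $d_{\text{euc}}^2 = \|\vec{n}_0\|^2 + \|\vec{n}_\tau\|^2 - 2c = 1 + \|\vec{n}_\tau\|^2 - 2c$. The target is $d_{\text{euc}}^2 \leq 2(1-c)^2$, i.e. $1+\|\vec{n}_\tau\|^2 - 2c \leq 2(1-c)^2 = 2 - 4c + 2c^2$, which rearranges to $\|\vec{n}_\tau\|^2 \leq 1 - 2c + 2c^2$. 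Since $\|\vec{n}_\tau\|\leq 1$ it suffices to show $1 \leq 1 - 2c + 2c^2$, i.e. $2c^2 - 2c \geq 0$, i.e. $2c(c-1)\geq 0$; and this holds precisely because $c\leq 0$ (so $c\leq 0$ and $c-1<0$, making the product nonnegative). This is where the hypothesis $\vec{n}_0\cdot\vec{n}_\tau\leq 0$ enters essentially, and it is the only place it is needed.

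The main obstacle — really the only nontrivial point — is making sure the reduction in the first paragraph is airtight: one must confirm that the denominators genuinely coincide (they do, both being $\int_0^\tau dt\,\|d\vec{n}_t/dt\|$ via Theorem~\ref{theo:THEOREM000001} and Eq.~\eqref{eq:0000000000029}), that the numerator of $\tau_{QSL}^{\alpha,\beta}$ for a pure state is indeed $2^{-1/\alpha}(1-\vec{n}_\tau\cdot\vec{n}_0)$ with no hidden absolute value causing sign issues (here $1-\vec{n}_\tau\cdot\vec{n}_0\geq 1 >0$ automatically when $c\leq 0$, so $|(\vec{n}_\tau-\vec{n}_0)\cdot\vec{n}_0| = |c-1| = 1-c$ is unambiguous), and that the constant bookkeeping $2^{1/2+1/\alpha}\cdot 2^{-1/\alpha}=\sqrt2$ is correct. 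Everything after that is the elementary estimate above. I would therefore structure the proof as: (i) invoke the two explicit qubit formulas and cancel the common denominator; (ii) observe the $\alpha$-powers collapse to $\sqrt2$; (iii) square and use $\|\vec{n}_\tau\|\leq1$ together with $2c(c-1)\geq0$ for $c\leq0$ to finish. I would also remark that the inequality is saturated exactly when $\|\vec{n}_\tau\|=1$ and $c=0$, i.e. the final state is pure and orthogonal (in Bloch terms, the Bloch vectors are perpendicular), which dovetails with the paper's later claim that the geometric QSL is the tighter one for pure initial states.
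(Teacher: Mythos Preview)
Your proposal is correct and follows essentially the same route as the paper's proof in Appendix~\ref{app:00000000F}: both reduce the lemma, via the explicit qubit formulas and cancellation of the common denominator, to the purely geometric bound $d_{\text{euc}}(\vec{n}_0,\vec{n}_\tau)\leq\sqrt{2}\,(1-\vec{n}_\tau\cdot\vec{n}_0)$, and both verify this using $\|\vec{n}_\tau\|\leq1$ together with the hypothesis $c:=\vec{n}_0\cdot\vec{n}_\tau\leq0$. The only cosmetic difference is in the last step: the paper first bounds $d_{\text{euc}}\leq\sqrt{2}\sqrt{1-c}$ and then invokes $\sqrt{x}\leq x$ for $x\geq1$, whereas you square from the outset and check $2c(c-1)\geq0$; these are equivalent rearrangements of the same elementary inequality.
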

\begin{proof}
See Appendix~\ref{app:00000000F}.
\end{proof}

The inequality in Eq.~\eqref{eq:0000000000041} has an interesting geometrical interpretation. The initial pure state is represented by a point ${\vec{n}_0}$ in the surface of the Bloch sphere, since $\|{\vec{n}_0}\| = 1$. This unit vector divides the Bloch sphere into two hemispheres through the plane orthogonal to this vector. Let us call the northern hemisphere the half of the sphere that has the point ${\vec{n}_0}$, and the southern hemisphere the other half, which has the point $-{\vec{n}_0}$. In this setting, we have ${\vec{n}_{\tau}}\cdot{\vec{n}_0}\leq 0$ whenever the final vector ${\vec{n}_{\tau}}$ belongs to the southern hemisphere, and the result of the Lemma is true in these cases.

In addition, we demonstrated that the $\alpha$-QSL is tighter than the $(\alpha,\beta)$-QSL, as stated in the following result:
\begin{lemma}[Universal QSL]
\label{lem:LEMMA000004}
For a two-dimensional system with a initial pure state, the $\alpha$-QSL is tighter than $(\alpha,\beta)$-QSL. In other words, for every evolution time $\tau$ between a initial pure state $\rho_0$ and a final general state $\rho_\tau$ of a qubit, we have
\begin{equation}
\label{eq:0000000000042}
 \tau \geq \tau_{QSL}^{\alpha}\geq \tau_{QSL}^{\alpha,\beta}~,
\end{equation}
where $\alpha,\beta \in [1,\infty]$ and $1/\alpha + 1/\beta = 1$.
\end{lemma}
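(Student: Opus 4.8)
The plan is to combine the explicit qubit formulas already obtained for the two speed limits with the hemisphere case analysis that underlies Lemma~\ref{lem:LEMMA000003}. For an initial pure state, Theorem~\ref{theo:THEOREM000001} gives
\[
\tau_{QSL}^{\alpha} = \frac{d_{\text{euc}}(\vec n_0,\vec n_\tau)}{\int_0^\tau dt\,\|d\vec n_t/dt\|}\,\tau ,
\qquad
\tau_{QSL}^{\alpha,\beta} = 2^{-\frac1\alpha}\,\frac{1-\vec n_\tau\cdot\vec n_0}{\int_0^\tau dt\,\|d\vec n_t/dt\|}\,\tau ,
\]
where the latter is Eq.~\eqref{eq:0000000000029}. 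Since the denominators and the factor $\tau$ coincide, the inequality $\tau_{QSL}^{\alpha}\ge\tau_{QSL}^{\alpha,\beta}$ reduces to the purely geometric claim
\[
d_{\text{euc}}(\vec n_0,\vec n_\tau) \;\ge\; 2^{-\frac1\alpha}\,\bigl(1-\vec n_\tau\cdot\vec n_0\bigr),
\]
valid for all $\vec n_0$ with $\|\vec n_0\|=1$ and all $\vec n_\tau$ in the Bloch ball ($\|\vec n_\tau\|\le 1$). So the first step is to record these two closed forms and reduce to this scalar inequality; the non-negativity of the left side (and of $\tau_{QSL}^{\alpha,\beta}$, hence $\tau\ge\tau_{QSL}^{\alpha}$ from Eq.~\eqref{eq:0000000000007}) is immediate.

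The second step is to prove the scalar inequality. Writing $c=\vec n_0\cdot\vec n_\tau\in[-1,1]$ and $r=\|\vec n_\tau\|\le 1$, one has $d_{\text{euc}}(\vec n_0,\vec n_\tau)^2 = 1 - 2c + r^2 \ge 1 - 2c + c^2 = (1-c)^2$ whenever $r^2\ge c^2$; but $c = \vec n_0\cdot\vec n_\tau \le \|\vec n_\tau\| = r$ by Cauchy--Schwarz, and also $-r\le c$, so $|c|\le r$ and hence $r^2\ge c^2$ always holds. Therefore $d_{\text{euc}}(\vec n_0,\vec n_\tau)\ge 1-c = 1-\vec n_\tau\cdot\vec n_0 \ge 2^{-1/\alpha}(1-\vec n_\tau\cdot\vec n_0)$, the last step because $1-\vec n_\tau\cdot\vec n_0\ge 0$ and $2^{-1/\alpha}\le 1$ for $\alpha\in[1,\infty]$. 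This closes the chain $\tau\ge\tau_{QSL}^{\alpha}\ge\tau_{QSL}^{\alpha,\beta}$. Note this argument actually needs no hemisphere hypothesis, so Lemma~\ref{lem:LEMMA000004} is unconditional on the sign of $\vec n_0\cdot\vec n_\tau$, unlike Lemma~\ref{lem:LEMMA000003}; I would remark on this but present the clean direct bound.

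The main obstacle is essentially bookkeeping rather than depth: one must be careful about which formula for $\tau_{QSL}^{\alpha,\beta}$ applies. Equation~\eqref{eq:0000000000029} was derived in the basis where $\vec n_0$ plays the role of the $z$-axis and already incorporates $\|\vec n_0\|=1$ and $\|\rho_0\|_\beta=1$; one should confirm that the $\beta$-dependence has genuinely dropped out (it has, via Eq.~\eqref{eq:0000000000027}) so that the bound is meaningful for the full range $\alpha\in[1,\infty]$, $1/\alpha+1/\beta=1$, including the endpoint $\alpha=\infty$ where $2^{-1/\alpha}=1$ and $\tau_{QSL}^{\alpha,\beta}=\tau_{QSL}^{DL}$. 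A secondary point to check is that the common denominator $\int_0^\tau dt\,\|d\vec n_t/dt\|$ is strictly positive (otherwise the state is constant and all quantities vanish trivially), so that dividing through is legitimate. With those caveats handled, the proof is the two-line geometric estimate above. Full details are deferred to Appendix~\ref{app:00000000F} alongside the proof of Lemma~\ref{lem:LEMMA000003}.
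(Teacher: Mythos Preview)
Your proof is correct and reaches the same conclusion as the paper, but the organization differs slightly. The paper's proof in Appendix~\ref{app:00000000G} simply combines two results already established in the main text: Eq.~\eqref{eq:0000000000016}, namely $\tau_{QSL}^{\alpha}\ge\tau_{QSL}^{DL}$, and Eq.~\eqref{eq:0000000000036}, namely $\tau_{QSL}^{DL}=2^{1/\alpha}\tau_{QSL}^{\alpha,\beta}$, together with $2^{1/\alpha}\ge 1$. You instead bypass the Deffner--Lutz quantifier and prove the underlying scalar inequality $d_{\text{euc}}(\vec n_0,\vec n_\tau)\ge 1-\vec n_0\cdot\vec n_\tau$ directly via Cauchy--Schwarz ($|c|\le r$ hence $1-2c+r^2\ge(1-c)^2$). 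This is exactly the inequality the paper used (in the special basis $\vec n_0=(0,0,1)$) to obtain Eq.~\eqref{eq:0000000000016}, so the mathematical content is the same; your version is just self-contained and basis-free, while the paper's is a two-line appeal to earlier equations.
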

\begin{proof}
See Appendix~\ref{app:00000000G}.
\end{proof}

The result shown in Lemma~\ref{lem:LEMMA000004} states that the $\alpha$-QSL can be seen as a universal quantifier for qubit dynamics. Note that $\tau_{QSL}^{\alpha}$ is tighter than DL-QSL for a qubit with a pure initial state and general dynamics. Additionally, this quantifier can be viewed as a generalization of Campaioli's QSL~\cite{Kavan} for a general Schatten norm. Furthermore, we have demonstrated that the $(\alpha,\beta)$-QSL is tighter than $\tau_{QSL}^{CEPH}$~\cite{delCampoQSL}. Therefore, as a consequence of Lemma~\ref{lem:LEMMA000004}, the $\alpha$-QSL is also tighter than $\tau_{QSL}^{CEPH}$. Hence, we can conclude that for the general dynamics of a two-dimensional system with a pure initial state, the $\alpha$-QSL is the tightest quantifier, or in other words, $\tau_{QSL}^\alpha$ can be interpreted as a universal QSL.

\section{Conclusions}
\label{sec:000000005}

In this work, we discuss two families of QSLs induced by general Schatten norms. The first was obtained through the geometric approach, while the second one from the Holder's inequality. Such QSLs were compared with others speed limits found in literature, and the tightness condition for these two quantifiers was also obtained.

On the one hand, we obtain the $\alpha$-QSL within a geometric approach, which in turn provides a generalization of the QSL obtained by Campaioli {\it et al.}~\cite{Kavan} for general Schatten norms. We find that the value of such a QSL does not depend on the specific Schatten norm when we are considering a single-qubit dynamics, highlighting a certain notion of universality. Furthermore, it was noted that when the initial state is pure, such a QSL is tighter than the Deffner-Lutz QSL~\cite{DLQSL}. On the other hand, $(\alpha,\beta)$-QSL was obtained from the Holders' inequality, and can be seen as a generalization of the QSL discussed in Ref.~\cite{delCampoQSL}. Furthermore, $(\alpha,\beta)$-QSL recovers the Deffner-Lutz QSL for $\alpha \rightarrow \infty$ (operator norm), while stands proportional to this one in the case of qubit dynamics.

We investigate the necessary and sufficient conditions for $\alpha$-QSL and $(\alpha,\beta)$-QSL to be tight under general qubit dynamics. We obtain conditions with geome\-tric interpretations in terms of the dynamic path in the Bloch space, just as we obtain the constraints on the behavior of the populations and coherences of the qubit, thus reflecting an algebraic condition. Overall, the geodesic that saturates each bound is expected to be related to the generator that drives the evolution of the quantum system. On the one hand, the optimal dynamical path that saturates the $\alpha$-QSL bound is the straight line connecting initial and final states, and it is fully characterized by the generator that governs the evolution. For instance, if $\hat{z}\cdot{\hat{r}_{0,t}} = (1/{\Omega_t})(d{\rho_{11,t}}/dt) \leq 0$, it follows that the optimal physical process that saturates the $\alpha$-QSL should be a dissipative one, since that $d{\rho_{11,t}}/dt \leq 0$. On the other hand, we also address the geodesic path that saturates the $(\alpha,\beta)$-QSL bound for initial single-qubit pure states. We note that the optimal path is related to the constraint $d{\rho_{11,t}}/dt = d\langle{\psi_0}|{\rho_t}|{\psi_0}\rangle \leq 0$, which in turn signals a dissipative dynamical map in which the populations ${\rho_{11,t}} = \langle{\psi_0}|{\rho_t}|{\psi_0}\rangle$ of the instantaneous state monotonically decreases as a function of time.

Finally, we compared the $\alpha$-QSL and $(\alpha,\beta)$-QSL for the case of single-qubit systems. We found an inequality constraint between them that holds whenever the initial state is pure. This link exhibits a geometric interpretation in terms of the initial and final vectors in Bloch space. Moreover, we showed that the geometric QSL is the tightest quantifier for a general qubit dynamics with a pure initial state, and thus this quantifier can be interpreted as a universal QSL. In addition to obtaining a family of QSLs, these quantifiers generalizes previous QSLs discussed in the literature and which were developed under completely different approaches. This again reinforces the notion of universality between the different types of QSL. 

In conclusion, it seems promising to investigate the modi\-fi\-cation of our bounds within the perspective of exact and inexact speed limits, according to Refs.~\cite{arXiv:2305.03839,arXiv:2406.08584}. In principle, one can get exact or tighter QSLs bounds by removing the ``classical'' counterpart of the generator of the dynamics that commutes with the probe state~\cite{arXiv:2305.03839,arXiv:2406.08584}. This could lead to improving our QSL time in Eqs.~\eqref{eq:0000000000010} and~\eqref{eq:0000000000011}, thus providing an exact speed limit within our geometric approach. In general, the inexact speed limit is not attainable for most CPTP dynamics mainly due to the ``classical'' counterpart of the generator, i.e., that contribution that commutes with the initial state~\cite{arXiv:2305.03839,arXiv:2406.08584}. This counterpart depends on the initial and instantaneous states of the quantum system and does not drive the distinguishability of such states. This overall analysis may contribute to improve our bound in Eq.~\eqref{eq:0000000000011} respective to CPTP maps. This investigation as described deserves a study on its own and will be reported elsewhere.


\begin{acknowledgments}
This study was financed in part by the Coordena\c{c}\~{a}o de Aperfei\c{c}oamento de Pessoal de N\'{i}vel Superior--Brasil (CAPES) (Finance Code 001). D. O. S. P acknowledges the support by the Brazilian funding agencies CNPq (Grant No. 304891/2022-3), FAPESP (Grant No. 2017/03727-0) and the Brazilian National Institute of Science and Technology of Quantum Information (INCT/IQ). D. P. P. also acknow\-ledges Funda\c{c}\~{a}o de Amparo \`{a} Pesquisa e ao Desenvolvimento Cient\'{i}fico e Tecnol\'{o}gico do Maranh\~{a}o (FAPEMA).
\end{acknowledgments}


\textit{\textbf{Note added}}.--- After completion of this work, a paper has appeared~\cite{arXiv:2401.01746} reporting coherent QSL bounds based on Schatten $\alpha$-norms. We note that their approach is different from ours, and their results only apply to closed quantum systems. In addition, $\alpha$-QSL and $(\alpha,\beta)$-QSL has been not achieved in that paper.

\setcounter{equation}{0}
\setcounter{table}{0}
\setcounter{section}{0}
\numberwithin{equation}{section}
\makeatletter
\renewcommand{\thesection}{\Alph{section}} 
\renewcommand{\thesubsection}{\Alph{section}.\arabic{subsection}}
\def\@gobbleappendixname#1\csname thesubsection\endcsname{\Alph{section}.\arabic{subsection}}
\renewcommand{\theequation}{\Alph{section}\arabic{equation}}
\renewcommand{\thefigure}{\arabic{figure}}
\renewcommand{\bibnumfmt}[1]{[#1]}
\renewcommand{\citenumfont}[1]{#1}

\section*{Appendix}


\section{Preliminaries and definitions}
\label{app:00000000A}

Let us start with some definitions about Schatten $\alpha$-norms, followed by the metric space structure  for the state space, and the Bloch representation of quantum states. These mathematical tools are very important in the discussion developed in this work.

\subsection{Normed spaces and Schatten norms}
\label{app:00000000A1}

Let us consider the Hilbert space $\mathcal{H}$ and we will denote $L(\mathcal{H})$ as the set of every linear transformations on this space. In fact, $L(\mathcal{H})$ is a complex vectorial space and this one can be transformed in a normed space with the help of the Schatten $\alpha$-norms:
\begin{definition}[Schatten norms]
\label{def:DEF000000001}
\label{Schatten norms}
For every linear operator $A \in L(\mathcal{H})$ acting on a Hilbert space $\mathcal{H}$, we define it's \textit{Schatten $\alpha$-norm} by: 
\begin{equation}
\label{eq:SM0000000001}
\| A \|_{\alpha} := (\mathrm{Tr}\{(\sqrt{A^\dagger A})^{\alpha}\})^{\frac{1}{\alpha}} ~,
\end{equation}
where $\alpha \in [1,\infty)$.
\end{definition}

Some values of $\alpha$ for the Schatten norms defined in the Definition~\ref{def:DEF000000001} are very common in the study of the space of quantum states. In particular, for $\alpha = 1$ we have the trace norm (or $L_1$-norm), for $\alpha = 2$ we have the Hilbert-Schmidt (or $L_2$-norm) norm and finally for the limit $\alpha \rightarrow \infty$ we have the operator norm (or spectral norm). We can compute these Schatten norms through the singular values. We point out that the singular values of some linear operator $A \in L(\mathcal{H})$ are given by the eigenvalues of the positive operator $\sqrt{A^\dagger A}$. In this way, the Schatten $\alpha$-norm of the operator $A$ can be given by
\begin{equation}
\label{eq:SM0000000002}
\| A \|_{\alpha} = \left(\sum_{i = 1}^\text{r} \sigma(A)_{i}^\alpha\right)^\frac{1}{\alpha} ~,
\end{equation}
where $\sigma(A)_i$ are the singular values of the operator $A$, and $r = \text{rank}(A)$ is the rank of $A$. Through Eq.~\eqref{eq:SM0000000002}, we have that the trace norm, Hilbert-Schmidt norm, and operator norm can be written as
\begin{align}
\label{eq:SM0000000003}
{\| A \|_1} &= {\sum_{j = 1}^r} \, {\sigma(A)_j} ~,\\
{\| A \|_2} &= \sqrt{{\sum_{j = 1}^r} \, {\sigma(A)_j^2}} ~,\\
{\| A \|_{\infty}} &= {\max_j} \{ {\sigma(A)_j} \} ~,
\end{align}
respectively. Now, since the singular values are the eigenva\-lues of $\sqrt{A^\dagger A}$, if the operator $A$ is Hermitian (self-adjoint), $A^\dagger = A$, then the eigenvalues of $A$, denoted by $\lambda(A)$, are related with the singular values $\{{\sigma(A)_j}\}_{j = 1,\ldots,r}$ as follows
\begin{equation}
\label{eq:SM0000000004}
{\sigma(A)_j} = |{\lambda(A)_j}| ~.
\end{equation}
In particular, note that if the operator $A$ is positive semi-definite, then the singular values are equals to the eigenvalues. 

The Schatten $\alpha$-norms have a large number of interesting properties. In particular, we have the follow result: 
\begin{theorem}[Duality of the Schatten norms]
\label{theosuppmat:THEOREMSUPPMATERIAL00000001}
For every linear operator $A$, the Schatten norm can be given as a \textit{supremum}:
\begin{equation}
\label{eq:SM0000000005}
\|A\|_\alpha = \sup_{\chi \neq 0}\left\{ |\mathrm{Tr}[\chi^\dagger A]|: \|\chi\|_\beta \leq 1, \frac{1}{\alpha}+\frac{1}{\beta} = 1  \right\} ~.
 \end{equation}
\end{theorem}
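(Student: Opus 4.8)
The plan is to establish the identity by proving the two inequalities ``$\le$'' and ``$\ge$'' separately and then exhibiting an explicit maximizer, so that the supremum in Eq.~\eqref{eq:SM0000000005} is in fact attained. The case $A=0$ being trivial, I assume $A\neq 0$ throughout, and I freely use the singular value representation $A=\sum_{i=1}^{r}\sigma(A)_i\,\ketbra{u_i}{v_i}$ with $r=\mathrm{rank}(A)$ and orthonormal families $\{\ket{u_i}\}$, $\{\ket{v_i}\}$.

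First I would treat the ``$\le$'' direction, i.e.\ that $|\mathrm{Tr}[\chi^\dagger A]|\le\|A\|_\alpha$ whenever $\|\chi\|_\beta\le 1$; this is exactly the matrix Hölder inequality already invoked in Eq.~\eqref{eq:0000000000023}. I would derive it from von Neumann's trace inequality, $|\mathrm{Tr}[\chi^\dagger A]|\le\sum_i\sigma(\chi^\dagger)_i\,\sigma(A)_i=\sum_i\sigma(\chi)_i\,\sigma(A)_i$ (singular values in nonincreasing order, using $\sigma(\chi^\dagger)_i=\sigma(\chi)_i$), followed by the classical Hölder inequality for finite sequences with exponents $\alpha,\beta$, which gives $\sum_i\sigma(\chi)_i\,\sigma(A)_i\le(\sum_i\sigma(\chi)_i^\beta)^{1/\beta}(\sum_i\sigma(A)_i^\alpha)^{1/\alpha}=\|\chi\|_\beta\,\|A\|_\alpha\le\|A\|_\alpha$. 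Taking the supremum over admissible $\chi$ yields ``$\le$''.

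Next I would prove ``$\ge$'' by constructing an admissible $\chi$ that saturates the bound. For $\alpha\in(1,\infty)$, set $\chi:=\|A\|_\alpha^{1-\alpha}\sum_{i=1}^{r}\sigma(A)_i^{\alpha-1}\,\ketbra{u_i}{v_i}$. Using the identity $\beta(\alpha-1)=\alpha$ that follows from $1/\alpha+1/\beta=1$, a short computation on singular values gives $\|\chi\|_\beta=1$, so $\chi$ is admissible, and $\mathrm{Tr}[\chi^\dagger A]=\|A\|_\alpha^{1-\alpha}\sum_i\sigma(A)_i^{\alpha}=\|A\|_\alpha$; combined with the upper bound this proves the theorem and shows the supremum is attained. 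For the endpoint exponents I would argue directly: for $\alpha=\infty$, $\beta=1$ take the rank-one $\chi=\ketbra{u_1}{v_1}$ with $\sigma(A)_1=\max_i\sigma(A)_i$, so that $\|\chi\|_1=1$ and $\mathrm{Tr}[\chi^\dagger A]=\|A\|_\infty$; for $\alpha=1$, $\beta=\infty$ take the partial isometry $\chi=\sum_{i=1}^{r}\ketbra{u_i}{v_i}$, so that $\|\chi\|_\infty=1$ and $\mathrm{Tr}[\chi^\dagger A]=\sum_i\sigma(A)_i=\|A\|_1$.

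The step I expect to require the most care is the upper bound, since it leans on von Neumann's trace inequality; an alternative I would keep in reserve is to first absorb the left and right singular-vector unitaries of $A$ into $\chi$ — legitimate because $\|\bullet\|_\beta$ is unitarily invariant — reducing to the case where $A$ is diagonal and nonnegative, and then apply the commutative Hölder inequality to the diagonal of $\chi$, which sidesteps the trace inequality entirely. The verifications that $\|\chi\|_\beta=1$ and $\mathrm{Tr}[\chi^\dagger A]=\|A\|_\alpha$ for the constructed $\chi$ are routine once the exponent identity $\beta(\alpha-1)=\alpha$ is in hand, so I would state them in a line or two rather than expand them in full.
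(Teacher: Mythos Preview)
Your argument is correct and is essentially the standard textbook proof of Schatten-norm duality: the upper bound via von Neumann's trace inequality plus scalar H\"older, and the lower bound via an explicit SVD-based maximizer, with the endpoint exponents handled separately. The computations you sketch all check out (in particular, $(1-\alpha)\beta+\alpha=0$ follows from $\alpha+\beta=\alpha\beta$, so indeed $\|\chi\|_\beta=1$).

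However, note that the paper does \emph{not} actually prove this theorem. In Appendix~\ref{app:00000000A1} it is simply stated as a known fact and then immediately used to extract H\"older's inequality~\eqref{eq:SM0000000006} as a corollary; no argument is given. So there is no ``paper's own proof'' to compare against --- the authors treat the duality of Schatten norms as background material. Your write-up would thus supply a proof where the paper offers none, and your approach is the canonical one.
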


In particular, due to the Theorem~\ref{theosuppmat:THEOREMSUPPMATERIAL00000001}, we have the Holder's inequality, defined as
\begin{equation}
\label{eq:SM0000000006}
|\mathrm{Tr}[Y^\dagger X]| \leq \|X\|_\alpha \|Y\|_\beta ~,
\end{equation}
which holds for every two linear operators $X,Y \in L(\mathcal{H})$ and for every real numbers $\alpha,\beta \in [1,\infty)$ where $\frac{1}{\alpha} + \frac{1}{\beta} = 1$. In this sense, the norms $\| \cdot\|_\alpha$ and $\| \cdot\|_\beta$ are said to be \textit{dual} to each other. The Holder's inequality in Eq.~\eqref{eq:SM0000000006} is the fundamental inequality that implied the $(\alpha,\beta)$-QSL discussed in the main text.


\subsection{Metric spaces and space of quantum states}
\label{app:00000000A2}

Since $L(\mathcal{H})$ is in fact a complex vectorial space, we can write its elements in a particular chosen basis. This idea is the core for the Bloch representation of quantum states (density operators). In this way, we can see density operators as vectors in the Euclidean space, namely the Bloch space. Here, let us denote $\mathcal{D}(\mathcal{H})$ as the set of every quantum state $\rho$ associated to the Hilbert space $\mathcal{H}$. Obviously $ \mathcal{D}(\mathcal{H}) \subseteq L(\mathcal{H}) $, since every density operator is also a linear operator acting on the Hilbert space $\mathcal{H}$. Then, the set $\mathcal{D}(\mathcal{H})$ will be called state space of the Hilbert space $\mathcal{H}$. In fact, the space of states equipped with a Schatten $\alpha$-norm is not only a normed space, but it is also a metric space. Remember that a metric space is an ordered pair $(M,D)$ where $M$ is a set and $D$ is a metric on $M$, where $D:M \times M \rightarrow \mathbb{R}$ is a function such that, for every $x, y, z \in M$, we have: 
\begin{itemize}
\item \textbf{Positivity semi-definite:} $D(x,y) \geq 0$, with $D(x,y) = 0$ if and only if $x = y$.
\item \textbf{Symmetry:} $D(x,y) = D(y,x)$.
\item \textbf{Triangle inequality:} $D(x,z) \leq D(x,y)+D(y,z)$.
\end{itemize}

In this way, taking the set $M = \mathcal{D}(\mathcal{H})$, and considering the $\alpha$-norm in the state space, we can define the metric
\begin{equation}
\label{eq:SM0000000007}
 D_{\alpha}(\rho,\eta) = \|\rho - \eta\|_{\alpha} ~,
\end{equation}
and then the state space becomes an Euclidean metric space. Therefore, an normed state space $\mathcal{D}(\mathcal{H})$ equipped with a Schatten $\alpha$-norm becomes, in fact, an authentic metric space. We can work with elements of the state space in a very useful fashion, called Bloch representation. This approach provides a useful geometric interpretation for two-dimensional quantum systems.


\subsection{General qudit systems and Bloch representation}
\label{app:00000000A3}

Now, let us consider a finite dimensional Hilbert space $\mathcal{H}$ with dimension $d$, and let $\{F_\alpha\}_{\alpha = 0}^{d^2-1}$ be an orthonormal traceless Hermitian basis in $L(\mathcal{H})$, that is $\mathrm{Tr}(F_\alpha F_\beta) = \delta_{\alpha,\beta}$, together with $F_0 = \mathds{1}/\sqrt{d}$ and $F_\alpha^\dagger = F_\alpha$, as well as $\mathrm{Tr}(F_\alpha) = 0$ for every $\alpha = 1,2,...,d^2-1$. In this way, any quantum state (density operator) $\rho$ of the Hilbert Space $\mathcal{H}$ can be represented by 
\begin{equation}
\label{eq:SM0000000008}
{\rho_t} = \frac{1}{d} \left( \mathds{1} + {\vec{x}_t}\cdot\vec{F} \right) ~,
\end{equation}
where ${\vec{x}_t}\cdot{\vec{F}} = {\sum_{i = 1}^{d^2-1}}\,{x_{i,t}} {F_i}$ sets for the standard inner product, while ${x_{j,t}} = \text{Tr}({F_j}{\rho_t})$, where ${\vec{x}_t} = ({x_{1,t}},...,{x_{{d^2} - 1,t}}) \in {\mathbb{R}^{d^2-1}}$, and $\vec{F} = (F_1,...,F_{d^2-1})$. In this setting, every density operator $\rho$ can be represented by a vector $\vec{x}$ belonging to a subset of the Euclidean space $\mathbb{R}^{d^2-1}$. The subset of $\mathbb{R}^{d^2-1}$ of every accessible states is called Bloch space. For a qubit system, the set of accessible states has an spherical geometry, and it is called Bloch sphere. For systems in which the dimension $d$ is greater than two, the geometry of the Bloch space can becomes very complicated. Actually, note that the operators $F_{\alpha}$ are the generators of the $SU(d)$ Lie algebra. 


\section{Single-qubit systems}
\label{app:00000000B}


\subsection{General single-qubit state and the Bloch representation}
\label{app:00000000B1}

For a two-dimensional Hilbert space, we have the Pauli matrix operators as a basis for $\mathcal{D}(\mathcal{H})$. In this case, we have ${\rho_t} = (1/2)\left( \mathbb{I} + {\vec{n}_t}\cdot\vec{\sigma}\right)$, where ${\vec{n}_t} = ({n_{x,t}},{n_{y,t}},{n_{z,t}}) \in \mathbb{R}^3$ is a time-dependent three-dimensional vector, and $\vec{\sigma} = (\sigma_x,\sigma_y,\sigma_z)$ is the vector of Pauli matrices. We remember that the density matrix is Hermitian, ${\rho_t^{\dagger}} = {\rho_t}$, positive semi-definite, ${\rho_t} \geq 0$, with trace equal to the unit, $\text{Tr}({\rho_t}) = 1$, for all $t \geq 0$. Here, we set the eigenbasis $\{ |0\rangle, |{1}\rangle  \}$ of the operator $\sigma_z$ for the two-dimensional Hilbert space $\mathcal{H}$, with ${\sigma_z}|\ell\rangle =  {(-1)^{\ell}}|\ell\rangle$, and $\ell = \{0,1\}$. With respect to this eigenbasis, it follows that the aforementioned single-qubit state is written as follows
\begin{equation}
\label{eq:SM0000000009}
{\rho_t} = \frac{1}{2}\begin{pmatrix} 1 + {n_{z,t}} & {n_{x,t}} - i {n_{y,t}} \\ {n_{x,t}} + i {n_{y,t}} & 1 - {n_{z,t}} \end{pmatrix} ~.
\end{equation}
The eigenvalues of the density matrix in Eq.~\eqref{eq:SM0000000009} can be written as
\begin{equation}
\label{eq:SM0000000010}
 \lambda_{\pm}(t) = \frac{1}{2} (1 \pm \| {\vec{n}_t} \|) ~,
\end{equation}
where $\|{\vec{n}_t}\| = \sqrt{{n_{x,t}^2} + {n_{y,t}^2} + {n_{z,t}^2}}$ is the Euclidean norm of the vector ${\vec{n}_t}$, with $\| {\vec{n}_t} \| \leq 1$. The Bloch space is a subset of the three-dimensional sphere in the Euclidean space $\mathbb{R}^3$. We note that the density matrix $\rho_t$ exhibits the singular va\-lues $\sigma_{\pm} = (1/2) (1 \pm \| {\vec{n}_t} \|)$ [see Eq.~\eqref{eq:SM0000000004}]. Finally, since we know the sin\-gu\-lar values of $\rho_t$, we can compute its Schatten $\alpha$-norm as [see Eq.~\eqref{eq:SM0000000002}]
\begin{equation}
\label{eq:SM0000000011}
\| {\rho_t} \|_\alpha = \frac{1}{2} \left[(1 - \| {\vec{n}_t} \|)^\alpha+(1+\| {\vec{n}_t} \|)^\alpha\right]^{\frac{1}{\alpha}} ~.
\end{equation}

Note that both populations and coherences of the density matrix $\rho_t$ can be written in terms of the elements $\{ {n_{j,t}} \}_{j = x, y, z}$ of the Bloch vector ${\vec{n}_t}$, and vice-versa. On the one hand, by writing the density matrix in the eigenbasis $\{ |0\rangle, |1\rangle \}$, we have that ${\rho_t} = {p_t}|0\rangle\langle{0}| + (1 - {p_t})|{1}\rangle\langle{1}| + {q_t}|0\rangle\langle{1}| + {q^*_t}|1\rangle\langle{0}|$, where ${p_t} = \langle{0}|{\rho_t}|{0}\rangle \in [0,1]$ is the population of the state $|0\rangle$, and ${q_t} = \langle{0}|{\rho_t}|{1}\rangle$ is the coherence between vector states $|0\rangle$ and $|1\rangle$. On the other hand, from Eq.~\eqref{eq:SM0000000009}, we find that ${p_t} = (1/2)(1 + {n_{z,t}})$, and ${q_t} = (1/2)({n_{x,t}} - i \, {n_{y,t}})$, which readily implies the relations ${n_{z,t}} = 2{p_t} - 1$, ${n_{x,t}} = 2 \Re({q_t})$, and ${n_{y,t}} = -2 \Im({q_t})$.

Next, we suppose that the qubit state ${\rho_0}$ undergoes a dynamical evolution governed by the time-differentiable dynamical map ${\Lambda_t}$. In the Bloch representation, taking the eigenbasis of $\sigma_z$ for the two-dimensional Hilbert space $\mathcal{H}$, we find that
\begin{align}
\label{eq:SM0000000012}
 \frac{d{\rho_t}}{dt} &= \frac{1}{2} \, \frac{d{\vec{n}_t}}{dt} \cdot \vec{\sigma} \nonumber\\
 &= \frac{1}{2}\begin{pmatrix} d{n_{z,t}}/dt & d{n_{x,t}}/dt - i\, d{n_{y,t}}/dt \\ d{n_{x,t}}/dt + i\, d{n_{y,t}}/dt & - d{n_{z,t}}/dt  \end{pmatrix} ~.
\end{align}
The eigenvalues of the operator ${d\rho_t}/dt$ are given by $\gamma_{\pm} = \pm (1/2)\left\| {d {\vec{n}_t}}/{dt} \right\|$, where we define the Euclidean norm ${\left\| {d{\vec{n}_t}}/{dt} \right\|^2} = {\sum_{j = x,y,z}}\, {({d{n_{j,t}}}/dt)^2}$. The singular values of this operator are given by $\omega_1 = \omega_2 = (1/2)\left\| {d {\vec{n}_t}}/{dt} \right\|$. Hence, Eq.~\eqref{eq:SM0000000002} implies the following result
\begin{equation}
\label{eq:SM0000000013}
\left\| \frac{d {\rho_t}}{dt}  \right\|_\alpha = 2^{-1 + \frac{1}{\alpha}} \left\| \frac{d {\vec{n}_t}}{dt}  \right\| ~.
\end{equation}
Equation~\eqref{eq:SM0000000013} means that the Schatten $\alpha$-norm $\left\|{d{\rho_t}}/{dt}\right\|_{\alpha}$ in the state space is proportional to the Euclidean norm $\left\|{d{\vec{n}_t}}/{dt}\right\|$ in the Bloch space. 


\subsection{The space of single-qubit states as a metric space}
\label{app:00000000B2}

Let us compute the distance ${D_{\alpha}}({\rho_t},{\eta_t}) = \|{\rho_t} - {\eta_t}\|_{\alpha}$ [see Eq.~\eqref{eq:SM0000000007}] between the single-qubit states ${\rho_t}, {\eta_t} \in \mathcal{D}(\mathcal{H})$, with ${\rho_t} = (1/2)(\mathbb{I} + {\vec{n}_t}\cdot\vec{\sigma})$, and ${\eta_t} = (1/2)(\mathbb{I} + {\vec{m}_t}\cdot\vec{\sigma})$. First, we note that
\begin{equation}
\label{eq:SM0000000014}
{\rho_t} - {\eta_t} = \frac{1}{2} ({\vec{n}_t} - {\vec{m}_t})\cdot\vec{\sigma} ~,
\end{equation}
where ${\vec{n}_t} = ({n_{x,t}},{n_{y,t}},{n_{z,t}})$, and ${\vec{m}_t} = ({m_{x,t}}, {m_{y,t}}, {m_{z,t}})$ are the Bloch vectors of $\rho_t$ and $\eta_t$, respectively. We see that Eqs.~\eqref{eq:SM0000000012} and~\eqref{eq:SM0000000014} are similar each other for the single-qubit dynamics. Hence, the eigenvalues of the operator ${\rho_t} - {\eta_t}$ and the respective singular values can be obtained in the same way. In particular, the Schatten $\alpha$-norm of the operator ${\rho_t} - {\eta_t}$ [that is, the distance $D_\alpha({\rho_t}, {\eta_t})$] is given by
\begin{equation}
\label{eq:SM0000000015}
{D_\alpha}({\rho_t}, {\eta_t}) = {\| {\rho_t} - {\eta_t} \|_{\alpha}} = 2^{-1 + \frac{1}{\alpha}} {d_{\text{euc}}}({\vec{n}_t}, {\vec{m}_t}) ~,
\end{equation}
where
\begin{align}
\label{eq:SM0000000016}
&{d_{\text{euc}}}({\vec{n}_t},{\vec{m}_t}) = \nonumber\\
&\sqrt{{({n_{x,t}} - {m_{x,t}})^2} + {({n_{y,t}} - {m_{y,t}})^2} + {({n_{z,t}} - {m_{z,t}})^2}} ~.
\end{align}
Equation~\eqref{eq:SM0000000015} means that the distance $D_{\alpha}$ in the state space $\mathcal{D}(\mathcal{H})$ induced from the $\alpha$-norm is proportional to the Euclidean distance $d_{\text{euc}}$ in the Bloch space.


\section{Proof of Theorem~\ref{theo:THEOREM000001}}
\label{app:00000000C}


The quantum speed limit time induced by the $\alpha$-Schatten norm, for a time of evolution $\tau$, is given by
\begin{equation}
\label{eq:SM0000000017}
{\tau_{QSL}^{\alpha}} (\tau) := \frac{ \| {\rho_{\tau}} - {\rho_0} \|_\alpha}{ {\int_{0}^{\tau}} dt \left\| \frac{d{\rho_t}}{dt} \right\|_\alpha} \, \tau ~.
\end{equation}
Given the single-qubit states ${\rho_0} = (1/2)(\mathbb{I} + {\vec{n}_0}\cdot\vec{\sigma})$ and ${\rho_{\tau}} = (1/2)(\mathbb{I} + {\vec{n}_{\tau}}\cdot\vec{\sigma})$, one can verify the identity [see Appendix~\ref{app:00000000B}]
\begin{equation}
\label{eq:SM0000000018}
\| {\rho_{\tau}} - {\rho_0} \|_\alpha = 2^{-1 + \frac{1}{\alpha}} \, {d_{\text{euc}}}({\vec{n}_0},{\vec{n}_{\tau}}) ~,
\end{equation}
with the Euclidean distance defined as
\begin{equation}
\label{eq:SM0000000019}
d_{\text{euc}}({\vec{n}_0}, {\vec{n}_{\tau}}) = \sqrt{ {\sum_{j = x,y,z}} ({n_{j,\tau}} - {n_{j,0}})^2} ~.
\end{equation}
We also note that
\begin{equation}
\label{eq:SM0000000020}
\left\| \frac{d {\rho_t}}{dt}  \right\|_\alpha = 2^{-1 + \frac{1}{\alpha}} \left\| \frac{d {\vec{n}_t}}{dt}  \right\| ~,
\end{equation}
with
\begin{equation}
\label{eq:SM0000000021}
\left\| \frac{d{\vec{n}_t}}{dt} \right\| = \sqrt{{\sum_{j = x,y,z}}\,{\left(\frac{d{n_{j,t}}}{dt}\right)^2}} ~.
\end{equation}
Hence, by combining Eqs.~\eqref{eq:SM0000000017},~\eqref{eq:SM0000000018}, and~\eqref{eq:SM0000000020}, one gets the result
\begin{equation}
\label{eq:SM0000000022}
\tau_{QSL}^\alpha(\tau) = \frac{d_{\text{euc}}({\vec{n}_0},{\vec{n}_{\tau}})}{{\int_0^{\tau}} dt \left\| \frac{d{\vec{n}_t}}{dt} \right\|} \, \tau ~,
\end{equation}
which concludes the proof of Theorem~\ref{theo:THEOREM000001}.


\section{Proof of Lemma~\ref{lem:LEMMA000001}}
\label{app:00000000D}


Here we discuss the conditions in which the $\alpha$-QSL in Eq.~\eqref{eq:SM0000000022} for single-qubit states is tight, i.e., ${\tau_{QSL}^\alpha}(\tau)/{\tau}$ = 1. To do so, one should satisfies the constraint as follows
\begin{equation}
\label{eq:SM0000000023}
{d_{\text{euc}}}({\vec{n}_0},{\vec{n}_{\tau}}) = {\int_0^{\tau}} dt \left\| \frac{d{\vec{n}_t}}{dt} \right\| ~,
\end{equation}
where the Euclidean distance ${d_{\text{euc}}}({\vec{n}_0},{\vec{n}_{\tau}})$ is defined in Eq.~\eqref{eq:SM0000000019}. By differentiating both sides of Eq.~\eqref{eq:SM0000000023} with respect to the parameter $\tau$, one obtains the result
\begin{equation}
\label{eq:SM0000000024}
\left({\hat{r}_{0,t}} \cdot \frac{d{\vec{n}_t}}{dt} \right)_{t = \tau} = \left(\left\| \frac{d {\vec{n}_t}}{dt} \right\|\right)_{t = \tau} ~,
\end{equation}
where we have defined the unit vector
\begin{equation}
\label{eq:SM0000000025}
\hat{r}_{0,t} := \frac{{\vec{n}_{t}} - {\vec{n}_0}}{\|{\vec{n}_{t}} - {\vec{n}_0} \|} ~.
\end{equation}
We note that Eq.~\eqref{eq:SM0000000024} holds for all $t \geq 0$ (or even $\tau \geq 0$), and from now on we deliberately work with such an equation written in terms of the parameter $t$. In addition, Eq.~\eqref{eq:SM0000000024} can be solved in terms of the quantity $\left\| {d {\vec{n}_t}}/{dt} \right\|$ as long as one chooses the vector ${d {\vec{n}_t}}/{dt}$ as follows
\begin{equation}
\label{eq:SM0000000026}
 \frac{d {\vec{n}_t}}{dt} = \left\| \frac{d {\vec{n}_t}}{dt} \right\| {\hat{r}_{0,t}} ~,
\end{equation}
where we used the fact that $\hat{r}_{0,t}\cdot\hat{r}_{0,t} = 1$, for all $t \geq 0$. Therefore, Eq.~\eqref{eq:SM0000000026} means that ${\tau_{QSL}^{\alpha}} = \tau$ if and only if the dynamical path in the Bloch space is given by the straight line connecting the initial Bloch vector $\vec{n}_0$ to the final one ${\vec{n}_{\tau}}$. Next, we note that Eq.~\eqref{eq:SM0000000026} can be recasted in terms of the populations and quantum coherences of the evolved state $\rho_t$ respective to the computational eigenbasis $\{ |0\rangle, |1\rangle \}$. To do so, we use the fact that ${n_{z,t}} = 2{\rho_{11,t}} - 1$, ${n_{x,t}} = 2\Re(\rho_{12,t})$, and ${n_{y,t}} = -2\Im({\rho_{12,t}})$ for single-qubit states. In this case, one readily obtains
\begin{align}
\label{eq:SM0000000027}
\frac{d {\vec{n}_t}}{dt} &= \left(d{n_{x,t}}/dt, d{n_{y,t}}/dt, d{n_{z,t}}/dt\right) \nonumber\\
&= 2\left(\Re({d\rho_{12,t}}/{dt}),-\Im({d\rho_{12,t}}/{dt}), {d\rho_{11,t}}/{dt}\right) ~.
\end{align}
Next, Eq.~\eqref{eq:SM0000000027} implies that the Euclidean norm $\|{d {\vec{n}_t}}/{dt}\|$ is written as follows
\begin{equation}
\label{eq:SM0000000028}
\left\| \frac{d{\vec{n}_t}}{dt} \right\| = \sqrt{{\sum_{j = x,y,z}}\,{\left(\frac{d{n_{j,t}}}{dt}\right)^2}} = 2{\Omega_t} ~,
\end{equation}
where we have defined the quantity 
\begin{equation}
\label{eq:SM0000000029}
{\Omega_t} := \sqrt{ \, \left| \frac{d\rho_{11,t}}{dt} \right|^2+\left| \frac{d\rho_{12,t}}{dt} \right|^2}
\end{equation}
Therefore, by combining Eqs.~\eqref{eq:SM0000000026},~\eqref{eq:SM0000000027} and~\eqref{eq:SM0000000028}, we find that
\begin{equation}
\label{eq:SM0000000030}
\left(\Re({d\rho_{12,t}}/{dt}),-\Im({d\rho_{12,t}}/{dt}), {d\rho_{11,t}}/{dt}\right) =  {\Omega_t}\, \hat{r}_{0,t} ~,
\end{equation}
and one concludes the proof of Lemma~\ref{lem:LEMMA000001}.


\section{Proof of Lemma~\ref{lem:LEMMA000002}}
\label{app:00000000E}


In order to achieve the tighter $(\alpha,\beta)$-QSL, i.e., ${\tau_{QSL}^{\alpha,\beta}}/{\tau} = 1$, we require that
\begin{equation}
\label{eq:SM0000000031}
{2^{-\frac{1}{\alpha}}} (1 - {\vec{n}_{\tau}}\cdot{\vec{n}_0}) = {\int_0^{\tau}} dt \left\| \frac{d{\vec{n}_t}}{dt} \right\| ~.
\end{equation}
We note that Eq.~\eqref{eq:SM0000000031} is satisfied as long as one sets the Euclidean norm as
\begin{equation}
\label{eq:SM0000000032}
\left\| \frac{d{\vec{n}_t}}{dt} \right\| = - {2^{-\frac{1}{\alpha}}} \left({\vec{n}_0} \cdot \frac{d{\vec{n}_t}}{dt}\right) ~.
\end{equation}
In particular, we hereafter fix the orthonormal basis $\{ |{\psi_0}\rangle, |{\psi_0^{\perp}} \rangle\}$. In this setting, one gets the vector ${\vec{n}_0} = (0,0,1)$, which in turn implies that ${\vec{n}_0}\cdot({d{\vec{n}_t}}/{dt}) = d{n_{z,t}}/dt $.  Hence, Eq.~\eqref{eq:SM0000000032} becomes
\begin{equation}
\label{eq:SM000000033}
\left\| \frac{d{\vec{n}_t}}{dt} \right\| = {2^{-\frac{1}{\alpha}}} \left(- \frac{d{n_{z,t}}}{dt} \right) ~.
\end{equation}
We note that, since $\left\| {d}{\vec{n}_t}/{dt} \right\| \geq 0$, we thus necessarily have that $d{n_{z,t}}/dt \leq 0$ in Eq.~\eqref{eq:SM000000033}. From Appendix~\ref{app:00000000D}, we know that $d{n_{z,t}}/dt = 2\, d{\rho_{11,t}}/dt = 2\, d\langle{\psi_0}|{\rho_t}|{\psi_0}\rangle/dt$ [see Eq.~\eqref{eq:SM0000000027}], and also $\left\| {d{\vec{n}_t}}/{dt} \right\| = 2{\Omega_t}$ [see Eqs.~\eqref{eq:SM0000000028} and~\eqref{eq:SM0000000029}]. Finally, by substituting these results in Eq.~\eqref{eq:SM000000033}, one obtains that
\begin{equation}
\label{eq:SM0000000034}
{\Omega_t} = {2^{-\frac{1}{\alpha}}} \left(- \frac{d{\rho_{11,t}}}{dt} \right) ~,
\end{equation}
with 
\begin{equation}
\label{eq:SM000000003402}
\frac{d{\rho_{11,t}}}{dt} = \frac{d}{dt} \langle{\psi_0}|{\rho_t}|{\psi_0}\rangle \leq 0 ~.
\end{equation}
In conclusion, Eq.~\eqref{eq:SM0000000034} proves Lemma~\ref{lem:LEMMA000002}.


\section{Proof of Lemma~\ref{lem:LEMMA000003}}
\label{app:00000000F}


We consider a two-level system initialized in a single-qubit pure state that undergoes a general evolution. In this setting, given the dynamics between states $\rho_0 = (1/2)(\mathbb{I} + {\vec{n}_0}\cdot\vec{\sigma})$ and $\rho_{\tau} = (1/2)(\mathbb{I} + {\vec{n}_{\tau}}\cdot\vec{\sigma})$, we proved the geometric $\alpha$-QSL time as 
\begin{equation}
\label{eq:SM0000000035}
\frac{\tau_{QSL}^{\alpha}}{\tau} = \frac{d_{\text{euc}}({\vec{n}_0},{\vec{n}_{\tau}})}{\int_0^\tau dt \left\| {d {\vec{n}_t}}/{dt} \right\|} ~,
\end{equation}
and also the $(\alpha,\beta)$-QSL time as 
\begin{equation}
\label{eq:SM0000000036}
\frac{\tau_{QSL}^{\alpha,\beta}}{\tau} = {2^{-\frac{1}{\alpha}}} \frac{\left( 1 - {\vec{n}_{\tau}}\cdot{\vec{n}_0}\right)}{{\int_0^{\tau}} dt \left\| {d {\vec{n}_t}}/{dt} \right\|}
\end{equation}
that is based on Holder's inequality, where ${d_{\text{euc}}}({\vec{n}_0},{\vec{n}_{\tau}}) = \|{\vec{n}_{\tau}} - {\vec{n}_0}  \|$ is the Euclidean distance between the initial and final states, with $\|\vec{x}\| = \sqrt{\vec{x}\cdot\vec{x}}$ for every vector $\vec{x}$ in the Bloch space. We note that the Euclidean distance can be written as follows
\begin{equation}
\label{eq:SM0000000037}
d_{\text{euc}}({\vec{n}_0},{\vec{n}_{\tau}}) =  \sqrt{{\|{\vec{n}_0}\|^2} + {\|{\vec{n}_{\tau}}\|^2} - 2\,{\vec{n}_0}\cdot{\vec{n}_{\tau}}} ~.
\end{equation}
Because $\|{\vec{n}_t}\|\leq 1$ for all $t\in[0,\tau]$, one gets ${d_{\text{euc}}}({\vec{n}_0},{\vec{n}_{\tau}}) \leq \sqrt{2} \sqrt{1 - {\vec{n}_{\tau}}\cdot{\vec{n}_0}}$. In particular, by considering the case ${\vec{n}_{\tau}}\cdot{\vec{n}_0} \leq 0$, one gets that $ 1 - {\vec{n}_{\tau}}\cdot{\vec{n}_0} \geq 1$. In this setting, by taking into account that $\sqrt{x}\leq x$ for every $x \geq 1$, we find that $\sqrt{1 - {\vec{n}_{\tau}}\cdot{\vec{n}_0} }\leq 1 - {\vec{n}_{\tau}}\cdot{\vec{n}_0}$. Hence, for ${\vec{n}_{\tau}}\cdot{\vec{n}_0} \leq 0$, Eq.~\eqref{eq:SM0000000037} can be written as follows
\begin{equation}
\label{eq:SM0000000038}
d_{\text{euc}}({\vec{n}_0},{\vec{n}_{\tau}}) \leq \sqrt{2} \, (1 - {\vec{n}_{\tau}}\cdot{\vec{n}_0}) ~.
\end{equation}
Finally, by combining Eqs.~\eqref{eq:SM0000000035},~\eqref{eq:SM0000000036}, and~\eqref{eq:SM0000000038}, we obtain the result
\begin{equation}
\label{eq:SM0000000039}
{\tau_{QSL}^{\alpha}} \leq 2^{\frac{1}{2} + \frac{1}{\alpha}} {\tau_{QSL}^{\alpha,\beta}} ~,
\end{equation}
which holds whenever ${\vec{n}_{\tau}}\cdot{\vec{n}_0} \leq 0$. This proves Lemma~\ref{lem:LEMMA000003}.


\section{Proof of Lemma~\ref{lem:LEMMA000004}}
\label{app:00000000G}

Considering a two-dimensional system with a initial pure state, we showed that $\tau_{QSL}^\alpha \geq \tau_{QSL}^{DL}$. Moreover, in this same setting we also showed that the DL-QSL is only a particular case of the $(\alpha,\beta)$-QSL, namely that $\tau_{QSL}^{\alpha,\beta} = 2^{-1/\alpha}\tau_{QSL}^{DL}$, or equivalently $ \tau_{QSL}^{DL} = 2^{1/\alpha}\tau_{QSL}^{\alpha,\beta}$, where $\alpha,\beta \in [1,\infty]$ and $1/\alpha + 1/\beta = 1$.

Since $2^{1/\alpha}\geq 1$ for every $\alpha \in [1,\infty]$, we have 
\begin{equation}
    \tau_{QSL}^{\alpha}\geq \tau_{QSL}^{DL} = 2^{1/\alpha}\tau_{QSL}^{\alpha,\beta} \geq \tau_{QSL}^{\alpha,\beta}~,
\end{equation}
and consequently we can conclude that $\tau_{QSL}^{\alpha} \geq \tau_{QSL}^{\alpha,\beta}$. 




\begin{thebibliography}{77}%
\makeatletter
\providecommand \@ifxundefined [1]{%
 \@ifx{#1\undefined}
}%
\providecommand \@ifnum [1]{%
 \ifnum #1\expandafter \@firstoftwo
 \else \expandafter \@secondoftwo
 \fi
}%
\providecommand \@ifx [1]{%
 \ifx #1\expandafter \@firstoftwo
 \else \expandafter \@secondoftwo
 \fi
}%
\providecommand \natexlab [1]{#1}%
\providecommand \enquote  [1]{``#1''}%
\providecommand \bibnamefont  [1]{#1}%
\providecommand \bibfnamefont [1]{#1}%
\providecommand \citenamefont [1]{#1}%
\providecommand \href@noop [0]{\@secondoftwo}%
\providecommand \href [0]{\begingroup \@sanitize@url \@href}%
\providecommand \@href[1]{\@@startlink{#1}\@@href}%
\providecommand \@@href[1]{\endgroup#1\@@endlink}%
\providecommand \@sanitize@url [0]{\catcode `\\12\catcode `\$12\catcode
  `\&12\catcode `\#12\catcode `\^12\catcode `\_12\catcode `\%12\relax}%
\providecommand \@@startlink[1]{}%
\providecommand \@@endlink[0]{}%
\providecommand \url  [0]{\begingroup\@sanitize@url \@url }%
\providecommand \@url [1]{\endgroup\@href {#1}{\urlprefix }}%
\providecommand \urlprefix  [0]{URL }%
\providecommand \Eprint [0]{\href }%
\providecommand \doibase [0]{https://doi.org/}%
\providecommand \selectlanguage [0]{\@gobble}%
\providecommand \bibinfo  [0]{\@secondoftwo}%
\providecommand \bibfield  [0]{\@secondoftwo}%
\providecommand \translation [1]{[#1]}%
\providecommand \BibitemOpen [0]{}%
\providecommand \bibitemStop [0]{}%
\providecommand \bibitemNoStop [0]{.\EOS\space}%
\providecommand \EOS [0]{\spacefactor3000\relax}%
\providecommand \BibitemShut  [1]{\csname bibitem#1\endcsname}%
\let\auto@bib@innerbib\@empty
\bibitem [{\citenamefont {Deffner}\ and\ \citenamefont {Lutz}(2010)}]{DL2010}%
  \BibitemOpen
  \bibfield  {author} {\bibinfo {author} {\bibfnamefont {S.}~\bibnamefont
  {Deffner}}\ and\ \bibinfo {author} {\bibfnamefont {E.}~\bibnamefont {Lutz}},\
  }\bibfield  {title} {\bibinfo {title} {Generalized {C}lausius {I}nequality
  for {N}onequilibrium {Q}uantum {P}rocesses},\ }\href
  {https://doi.org/10.1103/PhysRevLett.105.170402} {\bibfield  {journal}
  {\bibinfo  {journal} {Phys. Rev. Lett.}\ }\textbf {\bibinfo {volume} {105}},\
  \bibinfo {pages} {170402} (\bibinfo {year} {2010})}\BibitemShut {NoStop}%
\bibitem [{\citenamefont {Aghion}\ and\ \citenamefont
  {Green}(2023)}]{arXiv:2204.10368}%
  \BibitemOpen
  \bibfield  {author} {\bibinfo {author} {\bibfnamefont {E.}~\bibnamefont
  {Aghion}}\ and\ \bibinfo {author} {\bibfnamefont {J.~R.}\ \bibnamefont
  {Green}},\ }\bibfield  {title} {\bibinfo {title} {Thermodynamic speed limits
  for mechanical work},\ }\href {https://doi.org/10.1088/1751-8121/acb5d6}
  {\bibfield  {journal} {\bibinfo  {journal} {J. Phys. A: Math. Theor.}\
  }\textbf {\bibinfo {volume} {56}},\ \bibinfo {pages} {05LT01} (\bibinfo
  {year} {2023})}\BibitemShut {NoStop}%
\bibitem [{\citenamefont {Mohan}\ and\ \citenamefont
  {Pati}(2021)}]{arXiv:2006.14523}%
  \BibitemOpen
  \bibfield  {author} {\bibinfo {author} {\bibfnamefont {B.}~\bibnamefont
  {Mohan}}\ and\ \bibinfo {author} {\bibfnamefont {A.~K.}\ \bibnamefont
  {Pati}},\ }\bibfield  {title} {\bibinfo {title} {Reverse quantum speed limit:
  {H}ow slowly a quantum battery can discharge},\ }\href
  {https://doi.org/10.1103/PhysRevA.104.042209} {\bibfield  {journal} {\bibinfo
   {journal} {Phys. Rev. A}\ }\textbf {\bibinfo {volume} {104}},\ \bibinfo
  {pages} {042209} (\bibinfo {year} {2021})}\BibitemShut {NoStop}%
\bibitem [{\citenamefont {Hasegawa}(2023)}]{arXiv:2203.12421}%
  \BibitemOpen
  \bibfield  {author} {\bibinfo {author} {\bibfnamefont {Y.}~\bibnamefont
  {Hasegawa}},\ }\bibfield  {title} {\bibinfo {title} {Unifying speed limit,
  thermodynamic uncertainty relation and {H}eisenberg principle via
  bulk-boundary correspondence},\ }\href
  {https://doi.org/10.1038/s41467-023-38074-8} {\bibfield  {journal} {\bibinfo
  {journal} {Nat. Commun.}\ }\textbf {\bibinfo {volume} {14}},\ \bibinfo
  {pages} {2828} (\bibinfo {year} {2023})}\BibitemShut {NoStop}%
\bibitem [{\citenamefont {Fogarty}\ \emph {et~al.}(2020)\citenamefont
  {Fogarty}, \citenamefont {Deffner}, \citenamefont {Busch},\ and\
  \citenamefont {Campbell}}]{PhysRevLett.124.110601}%
  \BibitemOpen
  \bibfield  {author} {\bibinfo {author} {\bibfnamefont {T.}~\bibnamefont
  {Fogarty}}, \bibinfo {author} {\bibfnamefont {S.}~\bibnamefont {Deffner}},
  \bibinfo {author} {\bibfnamefont {T.}~\bibnamefont {Busch}},\ and\ \bibinfo
  {author} {\bibfnamefont {S.}~\bibnamefont {Campbell}},\ }\bibfield  {title}
  {\bibinfo {title} {Orthogonality {C}atastrophe as a {C}onsequence of the
  {Q}uantum {S}peed {L}imit},\ }\href
  {https://doi.org/10.1103/PhysRevLett.124.110601} {\bibfield  {journal}
  {\bibinfo  {journal} {Phys. Rev. Lett.}\ }\textbf {\bibinfo {volume} {124}},\
  \bibinfo {pages} {110601} (\bibinfo {year} {2020})}\BibitemShut {NoStop}%
\bibitem [{\citenamefont {del Campo}(2021)}]{PhysRevLett.126.180603}%
  \BibitemOpen
  \bibfield  {author} {\bibinfo {author} {\bibfnamefont {A.}~\bibnamefont {del
  Campo}},\ }\bibfield  {title} {\bibinfo {title} {Probing {Q}uantum {S}peed
  {L}imits with {U}ltracold {G}ases},\ }\href
  {https://doi.org/10.1103/PhysRevLett.126.180603} {\bibfield  {journal}
  {\bibinfo  {journal} {Phys. Rev. Lett.}\ }\textbf {\bibinfo {volume} {126}},\
  \bibinfo {pages} {180603} (\bibinfo {year} {2021})}\BibitemShut {NoStop}%
\bibitem [{\citenamefont {Puebla}\ \emph {et~al.}(2020)\citenamefont {Puebla},
  \citenamefont {Deffner},\ and\ \citenamefont
  {Campbell}}]{PhysRevResearch.2.032020}%
  \BibitemOpen
  \bibfield  {author} {\bibinfo {author} {\bibfnamefont {R.}~\bibnamefont
  {Puebla}}, \bibinfo {author} {\bibfnamefont {S.}~\bibnamefont {Deffner}},\
  and\ \bibinfo {author} {\bibfnamefont {S.}~\bibnamefont {Campbell}},\
  }\bibfield  {title} {\bibinfo {title} {Kibble-{Z}urek scaling in quantum
  speed limits for shortcuts to adiabaticity},\ }\href
  {https://doi.org/10.1103/PhysRevResearch.2.032020} {\bibfield  {journal}
  {\bibinfo  {journal} {Phys. Rev. Research}\ }\textbf {\bibinfo {volume}
  {2}},\ \bibinfo {pages} {032020} (\bibinfo {year} {2020})}\BibitemShut
  {NoStop}%
\bibitem [{\citenamefont {Pires}\ and\ \citenamefont
  {de~Oliveira}(2021)}]{PhysRevA.104.052223}%
  \BibitemOpen
  \bibfield  {author} {\bibinfo {author} {\bibfnamefont {D.~P.}\ \bibnamefont
  {Pires}}\ and\ \bibinfo {author} {\bibfnamefont {T.~R.}\ \bibnamefont
  {de~Oliveira}},\ }\bibfield  {title} {\bibinfo {title} {Relative purity,
  speed of fluctuations, and bounds on equilibration times},\ }\href
  {https://doi.org/10.1103/PhysRevA.104.052223} {\bibfield  {journal} {\bibinfo
   {journal} {Phys. Rev. A}\ }\textbf {\bibinfo {volume} {104}},\ \bibinfo
  {pages} {052223} (\bibinfo {year} {2021})}\BibitemShut {NoStop}%
\bibitem [{\citenamefont {Garc\'{\i}a-Pintos}\ \emph
  {et~al.}(2022)\citenamefont {Garc\'{\i}a-Pintos}, \citenamefont {Nicholson},
  \citenamefont {Green}, \citenamefont {del Campo},\ and\ \citenamefont
  {Gorshkov}}]{PhysRevX.12.011038}%
  \BibitemOpen
  \bibfield  {author} {\bibinfo {author} {\bibfnamefont {L.~P.}\ \bibnamefont
  {Garc\'{\i}a-Pintos}}, \bibinfo {author} {\bibfnamefont {S.~B.}\ \bibnamefont
  {Nicholson}}, \bibinfo {author} {\bibfnamefont {J.~R.}\ \bibnamefont
  {Green}}, \bibinfo {author} {\bibfnamefont {A.}~\bibnamefont {del Campo}},\
  and\ \bibinfo {author} {\bibfnamefont {A.~V.}\ \bibnamefont {Gorshkov}},\
  }\bibfield  {title} {\bibinfo {title} {Unifying {Q}uantum and {C}lassical
  {S}peed {L}imits on {O}bservables},\ }\href
  {https://doi.org/10.1103/PhysRevX.12.011038} {\bibfield  {journal} {\bibinfo
  {journal} {Phys. Rev. X}\ }\textbf {\bibinfo {volume} {12}},\ \bibinfo
  {pages} {011038} (\bibinfo {year} {2022})}\BibitemShut {NoStop}%
\bibitem [{\citenamefont {Giovannetti}\ \emph {et~al.}(2011)\citenamefont
  {Giovannetti}, \citenamefont {Lloyd},\ and\ \citenamefont
  {Maccone}}]{Giovannetti}%
  \BibitemOpen
  \bibfield  {author} {\bibinfo {author} {\bibfnamefont {V.}~\bibnamefont
  {Giovannetti}}, \bibinfo {author} {\bibfnamefont {S.}~\bibnamefont {Lloyd}},\
  and\ \bibinfo {author} {\bibfnamefont {L.}~\bibnamefont {Maccone}},\
  }\bibfield  {title} {\bibinfo {title} {Advances in quantum metrology},\
  }\href {https://doi.org/10.1038/nphoton.2011.35} {\bibfield  {journal}
  {\bibinfo  {journal} {Nature Photon.}\ }\textbf {\bibinfo {volume} {5}},\
  \bibinfo {pages} {222} (\bibinfo {year} {2011})}\BibitemShut {NoStop}%
\bibitem [{\citenamefont {Bekenstein}(1981)}]{JBekenstein}%
  \BibitemOpen
  \bibfield  {author} {\bibinfo {author} {\bibfnamefont {J.~D.}\ \bibnamefont
  {Bekenstein}},\ }\bibfield  {title} {\bibinfo {title} {Energy {C}ost of
  {I}nformation {T}ransfer},\ }\href
  {https://doi.org/10.1103/PhysRevLett.46.623} {\bibfield  {journal} {\bibinfo
  {journal} {Phys. Rev. Lett.}\ }\textbf {\bibinfo {volume} {46}},\ \bibinfo
  {pages} {623} (\bibinfo {year} {1981})}\BibitemShut {NoStop}%
\bibitem [{\citenamefont {Lloyd}(2000)}]{Loyd}%
  \BibitemOpen
  \bibfield  {author} {\bibinfo {author} {\bibfnamefont {S.}~\bibnamefont
  {Lloyd}},\ }\bibfield  {title} {\bibinfo {title} {Ultimate physical limits to
  computation},\ }\href {https://doi.org/10.1038/35023282} {\bibfield
  {journal} {\bibinfo  {journal} {Nature}\ }\textbf {\bibinfo {volume} {406}},\
  \bibinfo {pages} {1047} (\bibinfo {year} {2000})}\BibitemShut {NoStop}%
\bibitem [{\citenamefont {Giovannetti}\ \emph
  {et~al.}(2003{\natexlab{a}})\citenamefont {Giovannetti}, \citenamefont
  {Lloyd},\ and\ \citenamefont {Maccone}}]{EPL_62_615_2003}%
  \BibitemOpen
  \bibfield  {author} {\bibinfo {author} {\bibfnamefont {V.}~\bibnamefont
  {Giovannetti}}, \bibinfo {author} {\bibfnamefont {S.}~\bibnamefont {Lloyd}},\
  and\ \bibinfo {author} {\bibfnamefont {L.}~\bibnamefont {Maccone}},\
  }\bibfield  {title} {\bibinfo {title} {The role of entanglement in dynamical
  evolution},\ }\href {https://doi.org/10.1209/epl/i2003-00418-8} {\bibfield
  {journal} {\bibinfo  {journal} {EPL}\ }\textbf {\bibinfo {volume} {62}},\
  \bibinfo {pages} {615} (\bibinfo {year} {2003}{\natexlab{a}})}\BibitemShut
  {NoStop}%
\bibitem [{\citenamefont {Batle}\ \emph {et~al.}(2005)\citenamefont {Batle},
  \citenamefont {Casas}, \citenamefont {Plastino},\ and\ \citenamefont
  {Plastino}}]{PhysRevA.72.032337}%
  \BibitemOpen
  \bibfield  {author} {\bibinfo {author} {\bibfnamefont {J.}~\bibnamefont
  {Batle}}, \bibinfo {author} {\bibfnamefont {M.}~\bibnamefont {Casas}},
  \bibinfo {author} {\bibfnamefont {A.}~\bibnamefont {Plastino}},\ and\
  \bibinfo {author} {\bibfnamefont {A.~R.}\ \bibnamefont {Plastino}},\
  }\bibfield  {title} {\bibinfo {title} {Connection between entanglement and
  the speed of quantum evolution},\ }\href
  {https://doi.org/10.1103/PhysRevA.72.032337} {\bibfield  {journal} {\bibinfo
  {journal} {Phys. Rev. A}\ }\textbf {\bibinfo {volume} {72}},\ \bibinfo
  {pages} {032337} (\bibinfo {year} {2005})}\BibitemShut {NoStop}%
\bibitem [{\citenamefont {Batle}\ \emph {et~al.}(2006)\citenamefont {Batle},
  \citenamefont {Casas}, \citenamefont {Plastino},\ and\ \citenamefont
  {Plastino}}]{PhysRevA.73.049904}%
  \BibitemOpen
  \bibfield  {author} {\bibinfo {author} {\bibfnamefont {J.}~\bibnamefont
  {Batle}}, \bibinfo {author} {\bibfnamefont {M.}~\bibnamefont {Casas}},
  \bibinfo {author} {\bibfnamefont {A.}~\bibnamefont {Plastino}},\ and\
  \bibinfo {author} {\bibfnamefont {A.~R.}\ \bibnamefont {Plastino}},\
  }\bibfield  {title} {\bibinfo {title} {Erratum: {C}onnection between
  entanglement and the speed of quantum evolution [{P}hys. {R}ev. {A} 72,
  032337 (2005)]},\ }\href {https://doi.org/10.1103/PhysRevA.73.049904}
  {\bibfield  {journal} {\bibinfo  {journal} {Phys. Rev. A}\ }\textbf {\bibinfo
  {volume} {73}},\ \bibinfo {pages} {049904} (\bibinfo {year}
  {2006})}\BibitemShut {NoStop}%
\bibitem [{\citenamefont {Centrone}\ and\ \citenamefont
  {Gessner}(2024)}]{arXiv:2401.04599}%
  \BibitemOpen
  \bibfield  {author} {\bibinfo {author} {\bibfnamefont {F.}~\bibnamefont
  {Centrone}}\ and\ \bibinfo {author} {\bibfnamefont {M.}~\bibnamefont
  {Gessner}},\ }\bibfield  {title} {\bibinfo {title} {Breaking local quantum
  speed limits with steering},\ }\href {https://arxiv.org/abs/2401.04599}
  {\bibfield  {journal} {\bibinfo  {journal} {arXiv:2401.04599}\ } (\bibinfo
  {year} {2024})}\BibitemShut {NoStop}%
\bibitem [{\citenamefont {Shanahan}\ \emph {et~al.}(2018)\citenamefont
  {Shanahan}, \citenamefont {Chenu}, \citenamefont {Margolus},\ and\
  \citenamefont {del Campo}}]{PhysRevLett.120.070401}%
  \BibitemOpen
  \bibfield  {author} {\bibinfo {author} {\bibfnamefont {B.}~\bibnamefont
  {Shanahan}}, \bibinfo {author} {\bibfnamefont {A.}~\bibnamefont {Chenu}},
  \bibinfo {author} {\bibfnamefont {N.}~\bibnamefont {Margolus}},\ and\
  \bibinfo {author} {\bibfnamefont {A.}~\bibnamefont {del Campo}},\ }\bibfield
  {title} {\bibinfo {title} {Quantum {S}peed {L}imits across the
  {Q}uantum-to-{C}lassical {T}ransition},\ }\href
  {https://doi.org/10.1103/PhysRevLett.120.070401} {\bibfield  {journal}
  {\bibinfo  {journal} {Phys. Rev. Lett.}\ }\textbf {\bibinfo {volume} {120}},\
  \bibinfo {pages} {070401} (\bibinfo {year} {2018})}\BibitemShut {NoStop}%
\bibitem [{\citenamefont {Okuyama}\ and\ \citenamefont
  {Ohzeki}(2018{\natexlab{a}})}]{PhysRevLett.120.070402}%
  \BibitemOpen
  \bibfield  {author} {\bibinfo {author} {\bibfnamefont {M.}~\bibnamefont
  {Okuyama}}\ and\ \bibinfo {author} {\bibfnamefont {M.}~\bibnamefont
  {Ohzeki}},\ }\bibfield  {title} {\bibinfo {title} {Quantum {S}peed {L}imit is
  {N}ot {Q}uantum},\ }\href {https://doi.org/10.1103/PhysRevLett.120.070402}
  {\bibfield  {journal} {\bibinfo  {journal} {Phys. Rev. Lett.}\ }\textbf
  {\bibinfo {volume} {120}},\ \bibinfo {pages} {070402} (\bibinfo {year}
  {2018}{\natexlab{a}})}\BibitemShut {NoStop}%
\bibitem [{\citenamefont {Hamazaki}(2022)}]{PRXQuantum.3.020319}%
  \BibitemOpen
  \bibfield  {author} {\bibinfo {author} {\bibfnamefont {R.}~\bibnamefont
  {Hamazaki}},\ }\bibfield  {title} {\bibinfo {title} {Speed {L}imits for
  {M}acroscopic {T}ransitions},\ }\href
  {https://doi.org/10.1103/PRXQuantum.3.020319} {\bibfield  {journal} {\bibinfo
   {journal} {PRX Quantum}\ }\textbf {\bibinfo {volume} {3}},\ \bibinfo {pages}
  {020319} (\bibinfo {year} {2022})}\BibitemShut {NoStop}%
\bibitem [{\citenamefont {Caneva}\ \emph {et~al.}(2009)\citenamefont {Caneva},
  \citenamefont {Murphy}, \citenamefont {Calarco}, \citenamefont {Fazio},
  \citenamefont {Montangero}, \citenamefont {Giovannetti},\ and\ \citenamefont
  {Santoro}}]{Caneva}%
  \BibitemOpen
  \bibfield  {author} {\bibinfo {author} {\bibfnamefont {T.}~\bibnamefont
  {Caneva}}, \bibinfo {author} {\bibfnamefont {M.}~\bibnamefont {Murphy}},
  \bibinfo {author} {\bibfnamefont {T.}~\bibnamefont {Calarco}}, \bibinfo
  {author} {\bibfnamefont {R.}~\bibnamefont {Fazio}}, \bibinfo {author}
  {\bibfnamefont {S.}~\bibnamefont {Montangero}}, \bibinfo {author}
  {\bibfnamefont {V.}~\bibnamefont {Giovannetti}},\ and\ \bibinfo {author}
  {\bibfnamefont {G.~E.}\ \bibnamefont {Santoro}},\ }\bibfield  {title}
  {\bibinfo {title} {Optimal {C}ontrol at the {Q}uantum {S}peed {L}imit},\
  }\href {https://doi.org/10.1103/PhysRevLett.103.240501} {\bibfield  {journal}
  {\bibinfo  {journal} {Phys. Rev. Lett.}\ }\textbf {\bibinfo {volume} {103}},\
  \bibinfo {pages} {240501} (\bibinfo {year} {2009})}\BibitemShut {NoStop}%
\bibitem [{\citenamefont {Kobayashi}\ and\ \citenamefont
  {Yamamoto}(2020)}]{PhysRevA.102.042606}%
  \BibitemOpen
  \bibfield  {author} {\bibinfo {author} {\bibfnamefont {K.}~\bibnamefont
  {Kobayashi}}\ and\ \bibinfo {author} {\bibfnamefont {N.}~\bibnamefont
  {Yamamoto}},\ }\bibfield  {title} {\bibinfo {title} {Quantum speed limit for
  robust state characterization and engineering},\ }\href
  {https://doi.org/10.1103/PhysRevA.102.042606} {\bibfield  {journal} {\bibinfo
   {journal} {Phys. Rev. A}\ }\textbf {\bibinfo {volume} {102}},\ \bibinfo
  {pages} {042606} (\bibinfo {year} {2020})}\BibitemShut {NoStop}%
\bibitem [{\citenamefont {Mohan}\ and\ \citenamefont
  {Pati}(2022)}]{PhysRevA.106.042436}%
  \BibitemOpen
  \bibfield  {author} {\bibinfo {author} {\bibfnamefont {B.}~\bibnamefont
  {Mohan}}\ and\ \bibinfo {author} {\bibfnamefont {A.~K.}\ \bibnamefont
  {Pati}},\ }\bibfield  {title} {\bibinfo {title} {Quantum speed limits for
  observables},\ }\href {https://doi.org/10.1103/PhysRevA.106.042436}
  {\bibfield  {journal} {\bibinfo  {journal} {Phys. Rev. A}\ }\textbf {\bibinfo
  {volume} {106}},\ \bibinfo {pages} {042436} (\bibinfo {year}
  {2022})}\BibitemShut {NoStop}%
\bibitem [{\citenamefont {Juli\`a-Farr\'e}\ \emph {et~al.}(2020)\citenamefont
  {Juli\`a-Farr\'e}, \citenamefont {Salamon}, \citenamefont {Riera},
  \citenamefont {Bera},\ and\ \citenamefont
  {Lewenstein}}]{PhysRevResearch.2.023113}%
  \BibitemOpen
  \bibfield  {author} {\bibinfo {author} {\bibfnamefont {S.}~\bibnamefont
  {Juli\`a-Farr\'e}}, \bibinfo {author} {\bibfnamefont {T.}~\bibnamefont
  {Salamon}}, \bibinfo {author} {\bibfnamefont {A.}~\bibnamefont {Riera}},
  \bibinfo {author} {\bibfnamefont {M.~N.}\ \bibnamefont {Bera}},\ and\
  \bibinfo {author} {\bibfnamefont {M.}~\bibnamefont {Lewenstein}},\ }\bibfield
   {title} {\bibinfo {title} {Bounds on the capacity and power of quantum
  batteries},\ }\href {https://doi.org/10.1103/PhysRevResearch.2.023113}
  {\bibfield  {journal} {\bibinfo  {journal} {Phys. Rev. Res.}\ }\textbf
  {\bibinfo {volume} {2}},\ \bibinfo {pages} {023113} (\bibinfo {year}
  {2020})}\BibitemShut {NoStop}%
\bibitem [{\citenamefont {Gyhm}\ \emph {et~al.}(2024)\citenamefont {Gyhm},
  \citenamefont {Rosa},\ and\ \citenamefont {\ifmmode~\check{S}\else
  \v{S}\fi{}afr\'anek}}]{arXiv:2308.16086}%
  \BibitemOpen
  \bibfield  {author} {\bibinfo {author} {\bibfnamefont {J.-Y.}\ \bibnamefont
  {Gyhm}}, \bibinfo {author} {\bibfnamefont {D.}~\bibnamefont {Rosa}},\ and\
  \bibinfo {author} {\bibfnamefont {D.}~\bibnamefont {\ifmmode~\check{S}\else
  \v{S}\fi{}afr\'anek}},\ }\bibfield  {title} {\bibinfo {title} {Minimal time
  required to charge a quantum system},\ }\href
  {https://doi.org/10.1103/PhysRevA.109.022607} {\bibfield  {journal} {\bibinfo
   {journal} {Phys. Rev. A}\ }\textbf {\bibinfo {volume} {109}},\ \bibinfo
  {pages} {022607} (\bibinfo {year} {2024})}\BibitemShut {NoStop}%
\bibitem [{\citenamefont {Impens}\ \emph {et~al.}(2021)\citenamefont {Impens},
  \citenamefont {D'Angelis}, \citenamefont {Pinheiro},\ and\ \citenamefont
  {Gu\'ery-Odelin}}]{PhysRevA.104.052620}%
  \BibitemOpen
  \bibfield  {author} {\bibinfo {author} {\bibfnamefont {F.}~\bibnamefont
  {Impens}}, \bibinfo {author} {\bibfnamefont {F.~M.}\ \bibnamefont
  {D'Angelis}}, \bibinfo {author} {\bibfnamefont {F.~A.}\ \bibnamefont
  {Pinheiro}},\ and\ \bibinfo {author} {\bibfnamefont {D.}~\bibnamefont
  {Gu\'ery-Odelin}},\ }\bibfield  {title} {\bibinfo {title} {Time scaling and
  quantum speed limit in non-{H}ermitian {H}amiltonians},\ }\href
  {https://doi.org/10.1103/PhysRevA.104.052620} {\bibfield  {journal} {\bibinfo
   {journal} {Phys. Rev. A}\ }\textbf {\bibinfo {volume} {104}},\ \bibinfo
  {pages} {052620} (\bibinfo {year} {2021})}\BibitemShut {NoStop}%
\bibitem [{\citenamefont {Wang}\ and\ \citenamefont
  {Fang}(2020)}]{ChinPhysB_29_030304}%
  \BibitemOpen
  \bibfield  {author} {\bibinfo {author} {\bibfnamefont {Y.-Y.}\ \bibnamefont
  {Wang}}\ and\ \bibinfo {author} {\bibfnamefont {M.-F.}\ \bibnamefont
  {Fang}},\ }\bibfield  {title} {\bibinfo {title} {Quantum speed limit time of
  a non-{H}ermitian two-level system},\ }\href
  {https://doi.org/10.1088/1674-1056/ab6c45} {\bibfield  {journal} {\bibinfo
  {journal} {Chinese Phys. B}\ }\textbf {\bibinfo {volume} {29}},\ \bibinfo
  {pages} {030304} (\bibinfo {year} {2020})}\BibitemShut {NoStop}%
\bibitem [{\citenamefont {Pires}(2022)}]{PhysRevA.106.012403}%
  \BibitemOpen
  \bibfield  {author} {\bibinfo {author} {\bibfnamefont {D.~P.}\ \bibnamefont
  {Pires}},\ }\bibfield  {title} {\bibinfo {title} {Unified entropies and
  quantum speed limits for nonunitary dynamics},\ }\href
  {https://doi.org/10.1103/PhysRevA.106.012403} {\bibfield  {journal} {\bibinfo
   {journal} {Phys. Rev. A}\ }\textbf {\bibinfo {volume} {106}},\ \bibinfo
  {pages} {012403} (\bibinfo {year} {2022})}\BibitemShut {NoStop}%
\bibitem [{\citenamefont {Thakuria}\ \emph {et~al.}(2023)\citenamefont
  {Thakuria}, \citenamefont {Srivastav}, \citenamefont {Mohan}, \citenamefont
  {Kumari},\ and\ \citenamefont {Pati}}]{JPhysA_57_025302_2023}%
  \BibitemOpen
  \bibfield  {author} {\bibinfo {author} {\bibfnamefont {D.}~\bibnamefont
  {Thakuria}}, \bibinfo {author} {\bibfnamefont {A.}~\bibnamefont {Srivastav}},
  \bibinfo {author} {\bibfnamefont {B.}~\bibnamefont {Mohan}}, \bibinfo
  {author} {\bibfnamefont {A.}~\bibnamefont {Kumari}},\ and\ \bibinfo {author}
  {\bibfnamefont {A.~K.}\ \bibnamefont {Pati}},\ }\bibfield  {title} {\bibinfo
  {title} {Generalised quantum speed limit for arbitrary time-continuous
  evolution},\ }\href {https://doi.org/10.1088/1751-8121/ad15ad} {\bibfield
  {journal} {\bibinfo  {journal} {J. Phys. A: Math. Theor.}\ }\textbf {\bibinfo
  {volume} {57}},\ \bibinfo {pages} {025302} (\bibinfo {year}
  {2023})}\BibitemShut {NoStop}%
\bibitem [{\citenamefont {Nishiyama}\ and\ \citenamefont
  {Hasegawa}(2024)}]{arXiv:2404.16392}%
  \BibitemOpen
  \bibfield  {author} {\bibinfo {author} {\bibfnamefont {T.}~\bibnamefont
  {Nishiyama}}\ and\ \bibinfo {author} {\bibfnamefont {Y.}~\bibnamefont
  {Hasegawa}},\ }\bibfield  {title} {\bibinfo {title} {Speed limits and
  thermodynamic uncertainty relations for quantum systems governed by
  non-{H}ermitian {H}amiltonian},\ }\href {https://arxiv.org/abs/2404.16392}
  {\bibfield  {journal} {\bibinfo  {journal} {arXiv:2404.16392}\ } (\bibinfo
  {year} {2024})}\BibitemShut {NoStop}%
\bibitem [{\citenamefont {Pati}\ \emph {et~al.}(2023)\citenamefont {Pati},
  \citenamefont {Mohan}, \citenamefont {Sahil},\ and\ \citenamefont
  {Braunstein}}]{arXiv:2305.03839}%
  \BibitemOpen
  \bibfield  {author} {\bibinfo {author} {\bibfnamefont {A.~K.}\ \bibnamefont
  {Pati}}, \bibinfo {author} {\bibfnamefont {B.}~\bibnamefont {Mohan}},
  \bibinfo {author} {\bibnamefont {Sahil}},\ and\ \bibinfo {author}
  {\bibfnamefont {S.~L.}\ \bibnamefont {Braunstein}},\ }\bibfield  {title}
  {\bibinfo {title} {Exact {Q}uantum {S}peed {L}imits},\ }\href
  {https://arxiv.org/abs/2305.03839} {\bibfield  {journal} {\bibinfo  {journal}
  {arXiv:2305.03839}\ } (\bibinfo {year} {2023})}\BibitemShut {NoStop}%
\bibitem [{\citenamefont {Srivastav}\ \emph {et~al.}(2024)\citenamefont
  {Srivastav}, \citenamefont {Pandey}, \citenamefont {Mohan},\ and\
  \citenamefont {Pati}}]{arXiv:2406.08584}%
  \BibitemOpen
  \bibfield  {author} {\bibinfo {author} {\bibfnamefont {A.}~\bibnamefont
  {Srivastav}}, \bibinfo {author} {\bibfnamefont {V.}~\bibnamefont {Pandey}},
  \bibinfo {author} {\bibfnamefont {B.}~\bibnamefont {Mohan}},\ and\ \bibinfo
  {author} {\bibfnamefont {A.~K.}\ \bibnamefont {Pati}},\ }\bibfield  {title}
  {\bibinfo {title} {Family of {E}xact and {I}nexact {Q}uantum {S}peed {L}imits
  for {C}ompletely {P}ositive and {T}race-{P}reserving {D}ynamics},\ }\href
  {https://arxiv.org/abs/2406.08584} {\bibfield  {journal} {\bibinfo  {journal}
  {arXiv:2406.08584}\ } (\bibinfo {year} {2024})}\BibitemShut {NoStop}%
\bibitem [{\citenamefont {Ness}\ \emph {et~al.}(2021)\citenamefont {Ness},
  \citenamefont {Lam}, \citenamefont {Alt}, \citenamefont {Meschede},
  \citenamefont {Sagi},\ and\ \citenamefont {Alberti}}]{sciadvabj91192021}%
  \BibitemOpen
  \bibfield  {author} {\bibinfo {author} {\bibfnamefont {G.}~\bibnamefont
  {Ness}}, \bibinfo {author} {\bibfnamefont {M.~R.}\ \bibnamefont {Lam}},
  \bibinfo {author} {\bibfnamefont {W.}~\bibnamefont {Alt}}, \bibinfo {author}
  {\bibfnamefont {D.}~\bibnamefont {Meschede}}, \bibinfo {author}
  {\bibfnamefont {Y.}~\bibnamefont {Sagi}},\ and\ \bibinfo {author}
  {\bibfnamefont {A.}~\bibnamefont {Alberti}},\ }\bibfield  {title} {\bibinfo
  {title} {Observing crossover between quantum speed limits},\ }\href
  {https://doi.org/10.1126/sciadv.abj9119} {\bibfield  {journal} {\bibinfo
  {journal} {Sci. Adv.}\ }\textbf {\bibinfo {volume} {7}},\ \bibinfo {pages}
  {9119} (\bibinfo {year} {2021})}\BibitemShut {NoStop}%
\bibitem [{\citenamefont {Villamizar}\ \emph {et~al.}(2018)\citenamefont
  {Villamizar}, \citenamefont {Duzzioni}, \citenamefont {Leal},\ and\
  \citenamefont {Auccaise}}]{PhysRevA.97.052125}%
  \BibitemOpen
  \bibfield  {author} {\bibinfo {author} {\bibfnamefont {D.~V.}\ \bibnamefont
  {Villamizar}}, \bibinfo {author} {\bibfnamefont {E.~I.}\ \bibnamefont
  {Duzzioni}}, \bibinfo {author} {\bibfnamefont {A.~C.~S.}\ \bibnamefont
  {Leal}},\ and\ \bibinfo {author} {\bibfnamefont {R.}~\bibnamefont
  {Auccaise}},\ }\bibfield  {title} {\bibinfo {title} {Estimating the time
  evolution of {NMR} systems via a quantum-speed-limit-like expression},\
  }\href {https://doi.org/10.1103/PhysRevA.97.052125} {\bibfield  {journal}
  {\bibinfo  {journal} {Phys. Rev. A}\ }\textbf {\bibinfo {volume} {97}},\
  \bibinfo {pages} {052125} (\bibinfo {year} {2018})}\BibitemShut {NoStop}%
\bibitem [{\citenamefont {Pires}\ \emph {et~al.}(2024)\citenamefont {Pires},
  \citenamefont {deAzevedo}, \citenamefont {Soares-Pinto}, \citenamefont
  {Brito},\ and\ \citenamefont {Filgueiras}}]{arXiv:2307.06558}%
  \BibitemOpen
  \bibfield  {author} {\bibinfo {author} {\bibfnamefont {D.~P.}\ \bibnamefont
  {Pires}}, \bibinfo {author} {\bibfnamefont {E.~R.}\ \bibnamefont
  {deAzevedo}}, \bibinfo {author} {\bibfnamefont {D.~O.}\ \bibnamefont
  {Soares-Pinto}}, \bibinfo {author} {\bibfnamefont {F.}~\bibnamefont
  {Brito}},\ and\ \bibinfo {author} {\bibfnamefont {J.~G.}\ \bibnamefont
  {Filgueiras}},\ }\href {https://doi.org/10.1038/s42005-024-01634-5}
  {\bibfield  {journal} {\bibinfo  {journal} {Commun. Phys.}\ }\textbf
  {\bibinfo {volume} {7}},\ \bibinfo {pages} {142} (\bibinfo {year}
  {2024})}\BibitemShut {NoStop}%
\bibitem [{\citenamefont {Deffner}\ and\ \citenamefont
  {Campbell}(2017)}]{Deffner_2017}%
  \BibitemOpen
  \bibfield  {author} {\bibinfo {author} {\bibfnamefont {S.}~\bibnamefont
  {Deffner}}\ and\ \bibinfo {author} {\bibfnamefont {S.}~\bibnamefont
  {Campbell}},\ }\bibfield  {title} {\bibinfo {title} {Quantum speed limits:
  from {H}eisenberg's uncertainty principle to optimal quantum control},\
  }\href {https://doi.org/10.1088/1751-8121/aa86c6} {\bibfield  {journal}
  {\bibinfo  {journal} {J. Phys. A: Math. Theor.}\ }\textbf {\bibinfo {volume}
  {50}},\ \bibinfo {pages} {453001} (\bibinfo {year} {2017})}\BibitemShut
  {NoStop}%
\bibitem [{\citenamefont {Gong}\ and\ \citenamefont
  {Hamazaki}(2022)}]{IntJPhysB362230007}%
  \BibitemOpen
  \bibfield  {author} {\bibinfo {author} {\bibfnamefont {Z.}~\bibnamefont
  {Gong}}\ and\ \bibinfo {author} {\bibfnamefont {R.}~\bibnamefont
  {Hamazaki}},\ }\bibfield  {title} {\bibinfo {title} {Bounds in nonequilibrium
  quantum dynamics},\ }\href {https://doi.org/10.1142/S0217979222300079}
  {\bibfield  {journal} {\bibinfo  {journal} {Int. J. Mod. Phys. B}\ }\textbf
  {\bibinfo {volume} {36}},\ \bibinfo {pages} {2230007} (\bibinfo {year}
  {2022})}\BibitemShut {NoStop}%
\bibitem [{\citenamefont {Mandelstam}\ and\ \citenamefont
  {Tamm}(1991)}]{Mandelstam1991}%
  \BibitemOpen
  \bibfield  {author} {\bibinfo {author} {\bibfnamefont {L.}~\bibnamefont
  {Mandelstam}}\ and\ \bibinfo {author} {\bibfnamefont {I.}~\bibnamefont
  {Tamm}},\ }\bibinfo {title} {The uncertainty relation between energy and time
  in non-relativistic quantum mechanics},\ in\ \href
  {https://doi.org/10.1007/978-3-642-74626-0_8} {\emph {\bibinfo {booktitle}
  {Selected {P}apers}}},\ \bibinfo {editor} {edited by\ \bibinfo {editor}
  {\bibfnamefont {B.~M.}\ \bibnamefont {Bolotovskii}}, \bibinfo {editor}
  {\bibfnamefont {V.~Y.}\ \bibnamefont {Frenkel}},\ and\ \bibinfo {editor}
  {\bibfnamefont {R.}~\bibnamefont {Peierls}}}\ (\bibinfo  {publisher}
  {Springer},\ \bibinfo {address} {Berlin, {H}eidelberg},\ \bibinfo {year}
  {1991})\ pp.\ \bibinfo {pages} {115--123}\BibitemShut {NoStop}%
\bibitem [{\citenamefont {Margolus}\ and\ \citenamefont
  {Levitin}(1998)}]{1992_PhysicaD_120_188}%
  \BibitemOpen
  \bibfield  {author} {\bibinfo {author} {\bibfnamefont {N.}~\bibnamefont
  {Margolus}}\ and\ \bibinfo {author} {\bibfnamefont {L.~B.}\ \bibnamefont
  {Levitin}},\ }\bibfield  {title} {\bibinfo {title} {The maximum speed of
  dynamical evolution},\ }\href {https://doi.org/10.1016/S0167-2789(98)00054-2}
  {\bibfield  {journal} {\bibinfo  {journal} {Physica D}\ }\textbf {\bibinfo
  {volume} {120}},\ \bibinfo {pages} {188} (\bibinfo {year}
  {1998})}\BibitemShut {NoStop}%
\bibitem [{\citenamefont {Ness}\ \emph {et~al.}(2022)\citenamefont {Ness},
  \citenamefont {Alberti},\ and\ \citenamefont
  {Sagi}}]{PhysRevLett.129.140403}%
  \BibitemOpen
  \bibfield  {author} {\bibinfo {author} {\bibfnamefont {G.}~\bibnamefont
  {Ness}}, \bibinfo {author} {\bibfnamefont {A.}~\bibnamefont {Alberti}},\ and\
  \bibinfo {author} {\bibfnamefont {Y.}~\bibnamefont {Sagi}},\ }\bibfield
  {title} {\bibinfo {title} {Quantum {S}peed {L}imit for {S}tates with a
  {B}ounded {E}nergy {S}pectrum},\ }\href
  {https://doi.org/10.1103/PhysRevLett.129.140403} {\bibfield  {journal}
  {\bibinfo  {journal} {Phys. Rev. Lett.}\ }\textbf {\bibinfo {volume} {129}},\
  \bibinfo {pages} {140403} (\bibinfo {year} {2022})}\BibitemShut {NoStop}%
\bibitem [{\citenamefont {Levitin}\ and\ \citenamefont
  {Toffoli}(2009)}]{PhysRevLett.103.160502}%
  \BibitemOpen
  \bibfield  {author} {\bibinfo {author} {\bibfnamefont {L.~B.}\ \bibnamefont
  {Levitin}}\ and\ \bibinfo {author} {\bibfnamefont {T.}~\bibnamefont
  {Toffoli}},\ }\bibfield  {title} {\bibinfo {title} {Fundamental {L}imit on
  the {R}ate of {Q}uantum {D}ynamics: {T}he {U}nified {B}ound {I}s {T}ight},\
  }\href {https://doi.org/10.1103/PhysRevLett.103.160502} {\bibfield  {journal}
  {\bibinfo  {journal} {Phys. Rev. Lett.}\ }\textbf {\bibinfo {volume} {103}},\
  \bibinfo {pages} {160502} (\bibinfo {year} {2009})}\BibitemShut {NoStop}%
\bibitem [{\citenamefont {Giovannetti}\ \emph
  {et~al.}(2003{\natexlab{b}})\citenamefont {Giovannetti}, \citenamefont
  {Lloyd},\ and\ \citenamefont {Maccone}}]{PhysRevA.67.052109}%
  \BibitemOpen
  \bibfield  {author} {\bibinfo {author} {\bibfnamefont {V.}~\bibnamefont
  {Giovannetti}}, \bibinfo {author} {\bibfnamefont {S.}~\bibnamefont {Lloyd}},\
  and\ \bibinfo {author} {\bibfnamefont {L.}~\bibnamefont {Maccone}},\
  }\bibfield  {title} {\bibinfo {title} {Quantum limits to dynamical
  evolution},\ }\href {https://doi.org/10.1103/PhysRevA.67.052109} {\bibfield
  {journal} {\bibinfo  {journal} {Phys. Rev. A}\ }\textbf {\bibinfo {volume}
  {67}},\ \bibinfo {pages} {052109} (\bibinfo {year}
  {2003}{\natexlab{b}})}\BibitemShut {NoStop}%
\bibitem [{\citenamefont {Jones}\ and\ \citenamefont
  {Kok}(2010)}]{PhysRevA.82.022107}%
  \BibitemOpen
  \bibfield  {author} {\bibinfo {author} {\bibfnamefont {P.~J.}\ \bibnamefont
  {Jones}}\ and\ \bibinfo {author} {\bibfnamefont {P.}~\bibnamefont {Kok}},\
  }\bibfield  {title} {\bibinfo {title} {Geometric derivation of the quantum
  speed limit},\ }\href {https://doi.org/10.1103/PhysRevA.82.022107} {\bibfield
   {journal} {\bibinfo  {journal} {Phys. Rev. A}\ }\textbf {\bibinfo {volume}
  {82}},\ \bibinfo {pages} {022107} (\bibinfo {year} {2010})}\BibitemShut
  {NoStop}%
\bibitem [{\citenamefont {Zwierz}(2012)}]{PhysRevA.86.016101}%
  \BibitemOpen
  \bibfield  {author} {\bibinfo {author} {\bibfnamefont {M.}~\bibnamefont
  {Zwierz}},\ }\bibfield  {title} {\bibinfo {title} {Comment on ``{G}eometric
  derivation of the quantum speed limit''},\ }\href
  {https://doi.org/10.1103/PhysRevA.86.016101} {\bibfield  {journal} {\bibinfo
  {journal} {Phys. Rev. A}\ }\textbf {\bibinfo {volume} {86}},\ \bibinfo
  {pages} {016101} (\bibinfo {year} {2012})}\BibitemShut {NoStop}%
\bibitem [{\citenamefont {Deffner}\ and\ \citenamefont
  {Lutz}(2013{\natexlab{a}})}]{JPhysAMathTheor_46_335302}%
  \BibitemOpen
  \bibfield  {author} {\bibinfo {author} {\bibfnamefont {S.}~\bibnamefont
  {Deffner}}\ and\ \bibinfo {author} {\bibfnamefont {E.}~\bibnamefont {Lutz}},\
  }\bibfield  {title} {\bibinfo {title} {Energy–time uncertainty relation for
  driven quantum systems},\ }\href
  {https://doi.org/10.1088/1751-8113/46/33/335302} {\bibfield  {journal}
  {\bibinfo  {journal} {J. Phys. A: Math. Theor.}\ }\textbf {\bibinfo {volume}
  {46}},\ \bibinfo {pages} {335302} (\bibinfo {year}
  {2013}{\natexlab{a}})}\BibitemShut {NoStop}%
\bibitem [{\citenamefont {Mondal}\ \emph {et~al.}(2016)\citenamefont {Mondal},
  \citenamefont {Datta},\ and\ \citenamefont {Sazim}}]{PhysLettA380_689}%
  \BibitemOpen
  \bibfield  {author} {\bibinfo {author} {\bibfnamefont {D.}~\bibnamefont
  {Mondal}}, \bibinfo {author} {\bibfnamefont {C.}~\bibnamefont {Datta}},\ and\
  \bibinfo {author} {\bibfnamefont {S.}~\bibnamefont {Sazim}},\ }\bibfield
  {title} {\bibinfo {title} {Quantum coherence sets the quantum speed limit for
  mixed states},\ }\href {https://doi.org/10.1016/j.physleta.2015.12.015}
  {\bibfield  {journal} {\bibinfo  {journal} {Phys. Lett. A}\ }\textbf
  {\bibinfo {volume} {380}},\ \bibinfo {pages} {689} (\bibinfo {year}
  {2016})}\BibitemShut {NoStop}%
\bibitem [{\citenamefont {Pires}\ \emph {et~al.}(2021)\citenamefont {Pires},
  \citenamefont {Modi},\ and\ \citenamefont {C\'eleri}}]{PhysRevE.103.032105}%
  \BibitemOpen
  \bibfield  {author} {\bibinfo {author} {\bibfnamefont {D.~P.}\ \bibnamefont
  {Pires}}, \bibinfo {author} {\bibfnamefont {K.}~\bibnamefont {Modi}},\ and\
  \bibinfo {author} {\bibfnamefont {L.~C.}\ \bibnamefont {C\'eleri}},\
  }\bibfield  {title} {\bibinfo {title} {Bounding generalized relative
  entropies: {N}onasymptotic quantum speed limits},\ }\href
  {https://doi.org/10.1103/PhysRevE.103.032105} {\bibfield  {journal} {\bibinfo
   {journal} {Phys. Rev. E}\ }\textbf {\bibinfo {volume} {103}},\ \bibinfo
  {pages} {032105} (\bibinfo {year} {2021})}\BibitemShut {NoStop}%
\bibitem [{\citenamefont {Campaioli}\ \emph {et~al.}(2019)\citenamefont
  {Campaioli}, \citenamefont {Pollock},\ and\ \citenamefont {Modi}}]{Kavan}%
  \BibitemOpen
  \bibfield  {author} {\bibinfo {author} {\bibfnamefont {F.}~\bibnamefont
  {Campaioli}}, \bibinfo {author} {\bibfnamefont {F.~A.}\ \bibnamefont
  {Pollock}},\ and\ \bibinfo {author} {\bibfnamefont {K.}~\bibnamefont
  {Modi}},\ }\bibfield  {title} {\bibinfo {title} {Tight, robust, and feasible
  quantum speed limits for open dynamics},\ }\href
  {https://doi.org/10.22331/q-2019-08-05-168} {\bibfield  {journal} {\bibinfo
  {journal} {Quantum}\ }\textbf {\bibinfo {volume} {3}},\ \bibinfo {pages}
  {168} (\bibinfo {year} {2019})}\BibitemShut {NoStop}%
\bibitem [{\citenamefont {H\"{o}rnedal}\ \emph {et~al.}(2022)\citenamefont
  {H\"{o}rnedal}, \citenamefont {Allan},\ and\ \citenamefont
  {S\"{o}nnerborn}}]{NJPhys240550042022}%
  \BibitemOpen
  \bibfield  {author} {\bibinfo {author} {\bibfnamefont {N.}~\bibnamefont
  {H\"{o}rnedal}}, \bibinfo {author} {\bibfnamefont {D.}~\bibnamefont
  {Allan}},\ and\ \bibinfo {author} {\bibfnamefont {O.}~\bibnamefont
  {S\"{o}nnerborn}},\ }\bibfield  {title} {\bibinfo {title} {Extensions of the
  {M}andelstam-{T}amm quantum speed limit to systems in mixed states},\ }\href
  {https://doi.org/10.1088/1367-2630/ac688a} {\bibfield  {journal} {\bibinfo
  {journal} {New. J. Phys.}\ }\textbf {\bibinfo {volume} {24}},\ \bibinfo
  {pages} {055004} (\bibinfo {year} {2022})}\BibitemShut {NoStop}%
\bibitem [{\citenamefont {H\"ornedal}\ and\ \citenamefont
  {S\"onnerborn}(2023{\natexlab{a}})}]{arXiv:2301.10063}%
  \BibitemOpen
  \bibfield  {author} {\bibinfo {author} {\bibfnamefont {N.}~\bibnamefont
  {H\"ornedal}}\ and\ \bibinfo {author} {\bibfnamefont {O.}~\bibnamefont
  {S\"onnerborn}},\ }\bibfield  {title} {\bibinfo {title} {Margolus-{L}evitin
  quantum speed limit for an arbitrary fidelity},\ }\href
  {https://doi.org/10.1103/PhysRevResearch.5.043234} {\bibfield  {journal}
  {\bibinfo  {journal} {Phys. Rev. Res.}\ }\textbf {\bibinfo {volume} {5}},\
  \bibinfo {pages} {043234} (\bibinfo {year} {2023}{\natexlab{a}})}\BibitemShut
  {NoStop}%
\bibitem [{\citenamefont {Sun}\ and\ \citenamefont
  {Zheng}(2019)}]{PhysRevLett.123.180403}%
  \BibitemOpen
  \bibfield  {author} {\bibinfo {author} {\bibfnamefont {S.}~\bibnamefont
  {Sun}}\ and\ \bibinfo {author} {\bibfnamefont {Y.}~\bibnamefont {Zheng}},\
  }\bibfield  {title} {\bibinfo {title} {Distinct {B}ound of the {Q}uantum
  {S}peed {L}imit via the {G}auge {I}nvariant {D}istance},\ }\href
  {https://doi.org/10.1103/PhysRevLett.123.180403} {\bibfield  {journal}
  {\bibinfo  {journal} {Phys. Rev. Lett.}\ }\textbf {\bibinfo {volume} {123}},\
  \bibinfo {pages} {180403} (\bibinfo {year} {2019})}\BibitemShut {NoStop}%
\bibitem [{\citenamefont {Taddei}\ \emph {et~al.}(2013)\citenamefont {Taddei},
  \citenamefont {Escher}, \citenamefont {Davidovich},\ and\ \citenamefont
  {de~Matos~Filho}}]{Taddei2013}%
  \BibitemOpen
  \bibfield  {author} {\bibinfo {author} {\bibfnamefont {M.~M.}\ \bibnamefont
  {Taddei}}, \bibinfo {author} {\bibfnamefont {B.~M.}\ \bibnamefont {Escher}},
  \bibinfo {author} {\bibfnamefont {L.}~\bibnamefont {Davidovich}},\ and\
  \bibinfo {author} {\bibfnamefont {R.~L.}\ \bibnamefont {de~Matos~Filho}},\
  }\bibfield  {title} {\bibinfo {title} {Quantum {S}peed {L}imit for {P}hysical
  {P}rocesses},\ }\href {https://doi.org/10.1103/PhysRevLett.110.050402}
  {\bibfield  {journal} {\bibinfo  {journal} {Phys. Rev. Lett.}\ }\textbf
  {\bibinfo {volume} {110}},\ \bibinfo {pages} {050402} (\bibinfo {year}
  {2013})}\BibitemShut {NoStop}%
\bibitem [{\citenamefont {del Campo}\ \emph {et~al.}(2013)\citenamefont {del
  Campo}, \citenamefont {Egusquiza}, \citenamefont {Plenio},\ and\
  \citenamefont {Huelga}}]{delCampoQSL}%
  \BibitemOpen
  \bibfield  {author} {\bibinfo {author} {\bibfnamefont {A.}~\bibnamefont {del
  Campo}}, \bibinfo {author} {\bibfnamefont {I.~L.}\ \bibnamefont {Egusquiza}},
  \bibinfo {author} {\bibfnamefont {M.~B.}\ \bibnamefont {Plenio}},\ and\
  \bibinfo {author} {\bibfnamefont {S.~F.}\ \bibnamefont {Huelga}},\ }\bibfield
   {title} {\bibinfo {title} {Quantum {S}peed {L}imits in {O}pen {S}ystem
  {D}ynamics},\ }\href {https://doi.org/10.1103/PhysRevLett.110.050403}
  {\bibfield  {journal} {\bibinfo  {journal} {Phys. Rev. Lett.}\ }\textbf
  {\bibinfo {volume} {110}},\ \bibinfo {pages} {050403} (\bibinfo {year}
  {2013})}\BibitemShut {NoStop}%
\bibitem [{\citenamefont {Deffner}\ and\ \citenamefont
  {Lutz}(2013{\natexlab{b}})}]{DLQSL}%
  \BibitemOpen
  \bibfield  {author} {\bibinfo {author} {\bibfnamefont {S.}~\bibnamefont
  {Deffner}}\ and\ \bibinfo {author} {\bibfnamefont {E.}~\bibnamefont {Lutz}},\
  }\bibfield  {title} {\bibinfo {title} {Quantum {S}peed {L}imit for
  {N}on-{M}arkovian {D}ynamics},\ }\href
  {https://doi.org/10.1103/PhysRevLett.111.010402} {\bibfield  {journal}
  {\bibinfo  {journal} {Phys. Rev. Lett.}\ }\textbf {\bibinfo {volume} {111}},\
  \bibinfo {pages} {010402} (\bibinfo {year} {2013}{\natexlab{b}})}\BibitemShut
  {NoStop}%
\bibitem [{\citenamefont {Sun}\ \emph {et~al.}(2015)\citenamefont {Sun},
  \citenamefont {Liu}, \citenamefont {Ma},\ and\ \citenamefont
  {Wang}}]{Sun2015}%
  \BibitemOpen
  \bibfield  {author} {\bibinfo {author} {\bibfnamefont {Z.}~\bibnamefont
  {Sun}}, \bibinfo {author} {\bibfnamefont {J.}~\bibnamefont {Liu}}, \bibinfo
  {author} {\bibfnamefont {J.}~\bibnamefont {Ma}},\ and\ \bibinfo {author}
  {\bibfnamefont {X.}~\bibnamefont {Wang}},\ }\bibfield  {title} {\bibinfo
  {title} {Quantum speed limits in open systems: {N}on-{M}arkovian dynamics
  without rotating-wave approximation},\ }\href
  {https://doi.org/10.1038/srep08444} {\bibfield  {journal} {\bibinfo
  {journal} {Sci. Rep.}\ }\textbf {\bibinfo {volume} {5}},\ \bibinfo {pages}
  {8444} (\bibinfo {year} {2015})}\BibitemShut {NoStop}%
\bibitem [{\citenamefont {Meng}\ \emph {et~al.}(2015)\citenamefont {Meng},
  \citenamefont {Wu},\ and\ \citenamefont {Guo}}]{Meng2015}%
  \BibitemOpen
  \bibfield  {author} {\bibinfo {author} {\bibfnamefont {X.}~\bibnamefont
  {Meng}}, \bibinfo {author} {\bibfnamefont {C.}~\bibnamefont {Wu}},\ and\
  \bibinfo {author} {\bibfnamefont {H.}~\bibnamefont {Guo}},\ }\bibfield
  {title} {\bibinfo {title} {Minimal evolution time and quantum speed limit of
  non-{M}arkovian open systems},\ }\href {https://doi.org/10.1038/srep16357}
  {\bibfield  {journal} {\bibinfo  {journal} {Sci. Rep.}\ }\textbf {\bibinfo
  {volume} {5}},\ \bibinfo {pages} {16357} (\bibinfo {year}
  {2015})}\BibitemShut {NoStop}%
\bibitem [{\citenamefont {Mirkin}\ \emph {et~al.}(2016)\citenamefont {Mirkin},
  \citenamefont {Toscano},\ and\ \citenamefont {Wisniacki}}]{Nicolas2016}%
  \BibitemOpen
  \bibfield  {author} {\bibinfo {author} {\bibfnamefont {N.}~\bibnamefont
  {Mirkin}}, \bibinfo {author} {\bibfnamefont {F.}~\bibnamefont {Toscano}},\
  and\ \bibinfo {author} {\bibfnamefont {D.~A.}\ \bibnamefont {Wisniacki}},\
  }\bibfield  {title} {\bibinfo {title} {Quantum-speed-limit bounds in an open
  quantum evolution},\ }\href {https://doi.org/10.1103/PhysRevA.94.052125}
  {\bibfield  {journal} {\bibinfo  {journal} {Phys. Rev. A}\ }\textbf {\bibinfo
  {volume} {94}},\ \bibinfo {pages} {052125} (\bibinfo {year}
  {2016})}\BibitemShut {NoStop}%
\bibitem [{\citenamefont {Cianciaruso}\ \emph {et~al.}(2017)\citenamefont
  {Cianciaruso}, \citenamefont {Maniscalco},\ and\ \citenamefont
  {Adesso}}]{Cianciaruso2017}%
  \BibitemOpen
  \bibfield  {author} {\bibinfo {author} {\bibfnamefont {M.}~\bibnamefont
  {Cianciaruso}}, \bibinfo {author} {\bibfnamefont {S.}~\bibnamefont
  {Maniscalco}},\ and\ \bibinfo {author} {\bibfnamefont {G.}~\bibnamefont
  {Adesso}},\ }\bibfield  {title} {\bibinfo {title} {Role of non-{M}arkovianity
  and backflow of information in the speed of quantum evolution},\ }\href
  {https://doi.org/10.1103/PhysRevA.96.012105} {\bibfield  {journal} {\bibinfo
  {journal} {Phys. Rev. A}\ }\textbf {\bibinfo {volume} {96}},\ \bibinfo
  {pages} {012105} (\bibinfo {year} {2017})}\BibitemShut {NoStop}%
\bibitem [{\citenamefont {Teittinen}\ \emph {et~al.}(2019)\citenamefont
  {Teittinen}, \citenamefont {Lyyra},\ and\ \citenamefont
  {Maniscalco}}]{Teittinen2019}%
  \BibitemOpen
  \bibfield  {author} {\bibinfo {author} {\bibfnamefont {J.}~\bibnamefont
  {Teittinen}}, \bibinfo {author} {\bibfnamefont {H.}~\bibnamefont {Lyyra}},\
  and\ \bibinfo {author} {\bibfnamefont {S.}~\bibnamefont {Maniscalco}},\
  }\bibfield  {title} {\bibinfo {title} {There is no general connection between
  the quantum speed limit and non-{M}arkovianity},\ }\href
  {https://doi.org/10.1088/1367-2630/ab59fe} {\bibfield  {journal} {\bibinfo
  {journal} {New J. Phys.}\ }\textbf {\bibinfo {volume} {21}},\ \bibinfo
  {pages} {123041} (\bibinfo {year} {2019})}\BibitemShut {NoStop}%
\bibitem [{\citenamefont {Teittinen}\ and\ \citenamefont
  {Maniscalco}(2021)}]{Entropy_23_331_2021}%
  \BibitemOpen
  \bibfield  {author} {\bibinfo {author} {\bibfnamefont {J.}~\bibnamefont
  {Teittinen}}\ and\ \bibinfo {author} {\bibfnamefont {S.}~\bibnamefont
  {Maniscalco}},\ }\bibfield  {title} {\bibinfo {title} {Quantum {S}peed
  {L}imit and {D}ivisibility of the {D}ynamical {M}ap},\ }\href
  {https://doi.org/10.3390/e23030331} {\bibfield  {journal} {\bibinfo
  {journal} {Entropy}\ }\textbf {\bibinfo {volume} {23}},\ \bibinfo {pages}
  {331} (\bibinfo {year} {2021})}\BibitemShut {NoStop}%
\bibitem [{\citenamefont {Anandan}\ and\ \citenamefont
  {Aharonov}(1990)}]{PhysRevLett.65.1697}%
  \BibitemOpen
  \bibfield  {author} {\bibinfo {author} {\bibfnamefont {J.}~\bibnamefont
  {Anandan}}\ and\ \bibinfo {author} {\bibfnamefont {Y.}~\bibnamefont
  {Aharonov}},\ }\bibfield  {title} {\bibinfo {title} {Geometry of quantum
  evolution},\ }\href {https://doi.org/10.1103/PhysRevLett.65.1697} {\bibfield
  {journal} {\bibinfo  {journal} {Phys. Rev. Lett.}\ }\textbf {\bibinfo
  {volume} {65}},\ \bibinfo {pages} {1697} (\bibinfo {year}
  {1990})}\BibitemShut {NoStop}%
\bibitem [{\citenamefont {Bengtsson}\ and\ \citenamefont
  {\.{Z}yczkowski}(2006)}]{Ingemar_Bengtsson_Zyczkowski}%
  \BibitemOpen
  \bibfield  {author} {\bibinfo {author} {\bibfnamefont {I.}~\bibnamefont
  {Bengtsson}}\ and\ \bibinfo {author} {\bibfnamefont {K.}~\bibnamefont
  {\.{Z}yczkowski}},\ }\href {https://doi.org/10.1017/CBO9780511535048} {\emph
  {\bibinfo {title} {Geometry of {Q}uantum {S}tates: {A}n {I}ntroduction to
  {Q}uantum {E}ntanglement}}}\ (\bibinfo  {publisher} {Cambridge University
  Press},\ \bibinfo {address} {Cambridge},\ \bibinfo {year} {2006})\BibitemShut
  {NoStop}%
\bibitem [{\citenamefont {Pires}\ \emph {et~al.}(2016)\citenamefont {Pires},
  \citenamefont {Cianciaruso}, \citenamefont {C\'eleri}, \citenamefont
  {Adesso},\ and\ \citenamefont {Soares-Pinto}}]{DDQSL}%
  \BibitemOpen
  \bibfield  {author} {\bibinfo {author} {\bibfnamefont {D.~P.}\ \bibnamefont
  {Pires}}, \bibinfo {author} {\bibfnamefont {M.}~\bibnamefont {Cianciaruso}},
  \bibinfo {author} {\bibfnamefont {L.~C.}\ \bibnamefont {C\'eleri}}, \bibinfo
  {author} {\bibfnamefont {G.}~\bibnamefont {Adesso}},\ and\ \bibinfo {author}
  {\bibfnamefont {D.~O.}\ \bibnamefont {Soares-Pinto}},\ }\bibfield  {title}
  {\bibinfo {title} {Generalized {G}eometric {Q}uantum {S}peed {L}imits},\
  }\href {https://doi.org/10.1103/PhysRevX.6.021031} {\bibfield  {journal}
  {\bibinfo  {journal} {Phys. Rev. X}\ }\textbf {\bibinfo {volume} {6}},\
  \bibinfo {pages} {021031} (\bibinfo {year} {2016})}\BibitemShut {NoStop}%
\bibitem [{\citenamefont {O'Connor}\ \emph {et~al.}(2021)\citenamefont
  {O'Connor}, \citenamefont {Guarnieri},\ and\ \citenamefont
  {Campbell}}]{PhysRevA.103.022210}%
  \BibitemOpen
  \bibfield  {author} {\bibinfo {author} {\bibfnamefont {E.}~\bibnamefont
  {O'Connor}}, \bibinfo {author} {\bibfnamefont {G.}~\bibnamefont
  {Guarnieri}},\ and\ \bibinfo {author} {\bibfnamefont {S.}~\bibnamefont
  {Campbell}},\ }\bibfield  {title} {\bibinfo {title} {Action quantum speed
  limits},\ }\href {https://doi.org/10.1103/PhysRevA.103.022210} {\bibfield
  {journal} {\bibinfo  {journal} {Phys. Rev. A}\ }\textbf {\bibinfo {volume}
  {103}},\ \bibinfo {pages} {022210} (\bibinfo {year} {2021})}\BibitemShut
  {NoStop}%
\bibitem [{\citenamefont {Lan}\ \emph {et~al.}(2022)\citenamefont {Lan},
  \citenamefont {Xie},\ and\ \citenamefont {Cai}}]{NJPhys_24_055003_2022}%
  \BibitemOpen
  \bibfield  {author} {\bibinfo {author} {\bibfnamefont {K.}~\bibnamefont
  {Lan}}, \bibinfo {author} {\bibfnamefont {S.}~\bibnamefont {Xie}},\ and\
  \bibinfo {author} {\bibfnamefont {X.}~\bibnamefont {Cai}},\ }\bibfield
  {title} {\bibinfo {title} {Geometric quantum speed limits for {M}arkovian
  dynamics in open quantum systems},\ }\href
  {https://doi.org/10.1088/1367-2630/ac696b} {\bibfield  {journal} {\bibinfo
  {journal} {New J. Phys.}\ }\textbf {\bibinfo {volume} {24}},\ \bibinfo
  {pages} {055003} (\bibinfo {year} {2022})}\BibitemShut {NoStop}%
\bibitem [{\citenamefont {Mai}\ and\ \citenamefont
  {Yu}(2023)}]{PhysRevA.108.052207}%
  \BibitemOpen
  \bibfield  {author} {\bibinfo {author} {\bibfnamefont {Z.-Y.}\ \bibnamefont
  {Mai}}\ and\ \bibinfo {author} {\bibfnamefont {C.-S.}\ \bibnamefont {Yu}},\
  }\bibfield  {title} {\bibinfo {title} {Tight and attainable quantum speed
  limit for open systems},\ }\href
  {https://doi.org/10.1103/PhysRevA.108.052207} {\bibfield  {journal} {\bibinfo
   {journal} {Phys. Rev. A}\ }\textbf {\bibinfo {volume} {108}},\ \bibinfo
  {pages} {052207} (\bibinfo {year} {2023})}\BibitemShut {NoStop}%
\bibitem [{\citenamefont {Russell}\ and\ \citenamefont
  {Stepney}(2014)}]{10.1142_S0129054114400073}%
  \BibitemOpen
  \bibfield  {author} {\bibinfo {author} {\bibfnamefont {B.}~\bibnamefont
  {Russell}}\ and\ \bibinfo {author} {\bibfnamefont {S.}~\bibnamefont
  {Stepney}},\ }\bibfield  {title} {\bibinfo {title} {Applications of {F}insler
  geometry to speed limits to quantum information processing},\ }\href
  {https://doi.org/10.1142/S0129054114400073} {\bibfield  {journal} {\bibinfo
  {journal} {Int. J. Found. Comput. Sci.}\ }\textbf {\bibinfo {volume} {25}},\
  \bibinfo {pages} {489} (\bibinfo {year} {2014})}\BibitemShut {NoStop}%
\bibitem [{\citenamefont {Deffner}(2017)}]{NewJPhys_19_103018_2017}%
  \BibitemOpen
  \bibfield  {author} {\bibinfo {author} {\bibfnamefont {S.}~\bibnamefont
  {Deffner}},\ }\bibfield  {title} {\bibinfo {title} {Geometric quantum speed
  limits: a case for {W}igner phase space},\ }\href
  {https://doi.org/10.1088/1367-2630/aa83dc} {\bibfield  {journal} {\bibinfo
  {journal} {New. J. Phys.}\ }\textbf {\bibinfo {volume} {19}},\ \bibinfo
  {pages} {103018} (\bibinfo {year} {2017})}\BibitemShut {NoStop}%
\bibitem [{\citenamefont {Gessner}\ and\ \citenamefont
  {Smerzi}(2018)}]{PhysRevA.97.022109}%
  \BibitemOpen
  \bibfield  {author} {\bibinfo {author} {\bibfnamefont {M.}~\bibnamefont
  {Gessner}}\ and\ \bibinfo {author} {\bibfnamefont {A.}~\bibnamefont
  {Smerzi}},\ }\bibfield  {title} {\bibinfo {title} {Statistical speed of
  quantum states: {G}eneralized quantum {F}isher information and {S}chatten
  speed},\ }\href {https://doi.org/10.1103/PhysRevA.97.022109} {\bibfield
  {journal} {\bibinfo  {journal} {Phys. Rev. A}\ }\textbf {\bibinfo {volume}
  {97}},\ \bibinfo {pages} {022109} (\bibinfo {year} {2018})}\BibitemShut
  {NoStop}%
\bibitem [{\citenamefont {Funo}\ \emph {et~al.}(2019)\citenamefont {Funo},
  \citenamefont {Shiraishi},\ and\ \citenamefont
  {Saito}}]{NewJPhys_21_013006_2019}%
  \BibitemOpen
  \bibfield  {author} {\bibinfo {author} {\bibfnamefont {K.}~\bibnamefont
  {Funo}}, \bibinfo {author} {\bibfnamefont {N.}~\bibnamefont {Shiraishi}},\
  and\ \bibinfo {author} {\bibfnamefont {K.}~\bibnamefont {Saito}},\ }\bibfield
   {title} {\bibinfo {title} {Speed limit for open quantum systems},\ }\href
  {https://doi.org/10.1088/1367-2630/aaf9f5} {\bibfield  {journal} {\bibinfo
  {journal} {New J. Phys.}\ }\textbf {\bibinfo {volume} {21}},\ \bibinfo
  {pages} {013006} (\bibinfo {year} {2019})}\BibitemShut {NoStop}%
\bibitem [{\citenamefont {Van~Vu}\ and\ \citenamefont
  {Hasegawa}(2021)}]{PhysRevLett.126.010601}%
  \BibitemOpen
  \bibfield  {author} {\bibinfo {author} {\bibfnamefont {T.}~\bibnamefont
  {Van~Vu}}\ and\ \bibinfo {author} {\bibfnamefont {Y.}~\bibnamefont
  {Hasegawa}},\ }\bibfield  {title} {\bibinfo {title} {Geometrical {B}ounds of
  the {I}rreversibility in {M}arkovian {S}ystems},\ }\href
  {https://doi.org/10.1103/PhysRevLett.126.010601} {\bibfield  {journal}
  {\bibinfo  {journal} {Phys. Rev. Lett.}\ }\textbf {\bibinfo {volume} {126}},\
  \bibinfo {pages} {010601} (\bibinfo {year} {2021})}\BibitemShut {NoStop}%
\bibitem [{\citenamefont {Nakajima}\ and\ \citenamefont
  {Utsumi}(2022)}]{NewJPhys_24_095004_2022}%
  \BibitemOpen
  \bibfield  {author} {\bibinfo {author} {\bibfnamefont {S.}~\bibnamefont
  {Nakajima}}\ and\ \bibinfo {author} {\bibfnamefont {Y.}~\bibnamefont
  {Utsumi}},\ }\bibfield  {title} {\bibinfo {title} {Speed limits of the trace
  distance for open quantum system},\ }\href
  {https://doi.org/10.1088/1367-2630/ac8eca} {\bibfield  {journal} {\bibinfo
  {journal} {New J. Phys.}\ }\textbf {\bibinfo {volume} {24}},\ \bibinfo
  {pages} {095004} (\bibinfo {year} {2022})}\BibitemShut {NoStop}%
\bibitem [{\citenamefont {Wang}\ and\ \citenamefont
  {Qiu}(2024)}]{arXiv:2401.01746}%
  \BibitemOpen
  \bibfield  {author} {\bibinfo {author} {\bibfnamefont {H.}~\bibnamefont
  {Wang}}\ and\ \bibinfo {author} {\bibfnamefont {X.}~\bibnamefont {Qiu}},\
  }\bibfield  {title} {\bibinfo {title} {Generalized {C}oherent {Q}uantum
  {S}peed {L}imits},\ }\href {https://arxiv.org/abs/2401.01746} {\bibfield
  {journal} {\bibinfo  {journal} {arXiv:2401.01746}\ } (\bibinfo {year}
  {2024})}\BibitemShut {NoStop}%
\bibitem [{\citenamefont {Okuyama}\ and\ \citenamefont
  {Ohzeki}(2018{\natexlab{b}})}]{JPhysA_51_318001_2023}%
  \BibitemOpen
  \bibfield  {author} {\bibinfo {author} {\bibfnamefont {M.}~\bibnamefont
  {Okuyama}}\ and\ \bibinfo {author} {\bibfnamefont {M.}~\bibnamefont
  {Ohzeki}},\ }\bibfield  {title} {\bibinfo {title} {Comment on '{E}nergy-time
  uncertainty relation for driven quantum systems'},\ }\href
  {https://doi.org/10.1088/1751-8121/aacb90} {\bibfield  {journal} {\bibinfo
  {journal} {J. Phys. A: Math. Theor.}\ }\textbf {\bibinfo {volume} {51}},\
  \bibinfo {pages} {318001} (\bibinfo {year} {2018}{\natexlab{b}})}\BibitemShut
  {NoStop}%
\bibitem [{\citenamefont {H\"ornedal}\ and\ \citenamefont
  {S\"onnerborn}(2023{\natexlab{b}})}]{PhysRevA.108.052421}%
  \BibitemOpen
  \bibfield  {author} {\bibinfo {author} {\bibfnamefont {N.}~\bibnamefont
  {H\"ornedal}}\ and\ \bibinfo {author} {\bibfnamefont {O.}~\bibnamefont
  {S\"onnerborn}},\ }\bibfield  {title} {\bibinfo {title} {Closed systems
  refuting quantum-speed-limit hypotheses},\ }\href
  {https://doi.org/10.1103/PhysRevA.108.052421} {\bibfield  {journal} {\bibinfo
   {journal} {Phys. Rev. A}\ }\textbf {\bibinfo {volume} {108}},\ \bibinfo
  {pages} {052421} (\bibinfo {year} {2023}{\natexlab{b}})}\BibitemShut
  {NoStop}%
\bibitem [{\citenamefont {Andrecut}\ and\ \citenamefont
  {Ali}(2004)}]{JPhysA_37_L157_2004}%
  \BibitemOpen
  \bibfield  {author} {\bibinfo {author} {\bibfnamefont {M.}~\bibnamefont
  {Andrecut}}\ and\ \bibinfo {author} {\bibfnamefont {M.~K.}\ \bibnamefont
  {Ali}},\ }\bibfield  {title} {\bibinfo {title} {The adiabatic analogue of the
  {M}argolus-{L}evitin theorem},\ }\href
  {https://doi.org/10.1088/0305-4470/37/15/L01} {\bibfield  {journal} {\bibinfo
   {journal} {J. Phys. A: Math. Gen.}\ }\textbf {\bibinfo {volume} {37}},\
  \bibinfo {pages} {L157} (\bibinfo {year} {2004})}\BibitemShut {NoStop}%
\bibitem [{\citenamefont {Girolami}(2014)}]{PhysRevLett.113.170401}%
  \BibitemOpen
  \bibfield  {author} {\bibinfo {author} {\bibfnamefont {D.}~\bibnamefont
  {Girolami}},\ }\bibfield  {title} {\bibinfo {title} {Observable {M}easure of
  {Q}uantum {C}oherence in {F}inite {D}imensional {S}ystems},\ }\href
  {https://doi.org/10.1103/PhysRevLett.113.170401} {\bibfield  {journal}
  {\bibinfo  {journal} {Phys. Rev. Lett.}\ }\textbf {\bibinfo {volume} {113}},\
  \bibinfo {pages} {170401} (\bibinfo {year} {2014})}\BibitemShut {NoStop}%
\bibitem [{\citenamefont {Yanagi}(2010)}]{Yanagi_2010}%
  \BibitemOpen
  \bibfield  {author} {\bibinfo {author} {\bibfnamefont {K.}~\bibnamefont
  {Yanagi}},\ }\bibfield  {title} {\bibinfo {title} {Wigner-{Y}anase-{D}yson
  skew information and uncertainty relation},\ }\href
  {https://doi.org/10.1088/1742-6596/201/1/012015} {\bibfield  {journal}
  {\bibinfo  {journal} {J. Phys.: Conf. Ser.}\ }\textbf {\bibinfo {volume}
  {201}},\ \bibinfo {pages} {012015} (\bibinfo {year} {2010})}\BibitemShut
  {NoStop}%
\end{thebibliography}

%

\end{document}